\tikzstyle{vertex}=[circle, draw, inner sep=0pt, minimum size=4pt, fill = black]
\newcommand{\multiline}[1]{%
  \begin{tabularx}{\dimexpr\linewidth-\ALG@thistlm}[t]{@{}X@{}}
    #1
  \end{tabularx}
}
\def\BState{\State\hskip-\ALG@thistlm}
\titlespacing{\section}{0pt}{3ex}{2ex}
\titlespacing{\subsection}{0pt}{2ex}{1ex}
\titlespacing{\subsubsection}{0pt}{0.5ex}{0ex}
\newtheorem{theorem}{Theorem}[section]
\newtheorem{fact}[theorem]{Fact}
\newtheorem{proposition}[theorem]{Proposition}
\newtheorem{corollary}[theorem]{Corollary}
\newtheorem{definition}[theorem]{Definition}
\newtheorem{claim}[theorem]{Claim}
\newtheorem{conjecture}[theorem]{Conjecture}
\let\c@fconjecture\c@conjecture
\let\c@fconj\c@conj
\def \eps {\varepsilon}
\newcommand{\ignore}[1]{}
\def\tO{\tilde{O}}
\newcommand{\IGNORE}[1]{}
\title{Monochromatic Triangles, Triangle Listing and APSP}
\author{Virginia {Vassilevska Williams}\\MIT\\virgi@mit.edu \and Yinzhan Xu\\MIT\\xyzhan@mit.edu}
\date{}
\begin{document}

\maketitle

\begin{abstract}
All-Pairs Shortest Paths (APSP) is one of the most basic problems in graph algorithms. In one of the most general variants of the problem, one is given an $n$-node directed or undirected graph with integer weights in $\{-n^c,\dots,n^c\}$ and no negative cycles and one needs to compute the shortest paths distance between every pair of vertices. A central question in graph algorithms is how fast APSP can be solved. The fastest known algorithm runs in $n^3/2^{\Theta(\sqrt{\log n})}$ time [Williams'14], and no truly subcubic time algorithms are known.

One of the main hypotheses in fine-grained complexity is that this problem requires $n^{3-o(1)}$ time. Another famous hypothesis in fine-grained complexity is that the $3$SUM problem for $n$ integers (which can be solved in $O(n^2)$ time) requires $n^{2-o(1)}$ time. Although there are no direct reductions between $3$SUM and APSP, it is known that they are related: there is a problem, $(\min,+)$-convolution that reduces in a fine-grained way to both, and a problem Exact Triangle that both fine-grained reduce to.

In this paper we find more relationships between these two problems and other basic problems. P{\u{a}}tra{\c{s}}cu 
had shown that under the $3$SUM hypothesis the All-Edges Sparse Triangle problem in $m$-edge graphs requires $m^{4/3-o(1)}$ time. 
The latter problem asks to determine for every edge $e$, whether $e$ is in a triangle. It is equivalent to the problem of listing $m$ triangles in an $m$-edge graph where $m=\tO(n^{1.5})$, and can be solved in $O(m^{1.41})$ time [Alon et al.'97] with the current matrix multiplication bounds, and in $\tO(m^{4/3})$ time if $\omega=2$.

We show that one can reduce Exact Triangle to All-Edges Sparse Triangle, showing that All-Edges Sparse Triangle (and hence Triangle Listing) requires $m^{4/3-o(1)}$ time also assuming the APSP hypothesis. This allows us to provide APSP-hardness for many dynamic problems that were previously known to be hard under the $3$SUM hypothesis.

We also consider the previously studied All-Edges Monochromatic Triangle problem. Via work of [Lincoln et al.'20], our result on All-Edges Sparse Triangle implies that if the All-Edges Monochromatic Triangle problem has an $O(n^{2.5-\eps})$ time algorithm for $\eps>0$, then both the APSP and $3$SUM hypotheses are false. The fastest algorithm for All-Edges Monochromatic Triangle runs in $\tO(n^{(3+\omega)/2})$ time [Vassilevska et al.'06], and our new reduction shows that if $\omega=2$, this algorithm is best possible, unless $3$SUM or APSP can be solved faster.
Besides $3$SUM, previously, the only problems known to be fine-grained reducible to
All-Edges Monochromatic Triangle were the seemingly easier problems directed unweighted APSP and Min-Witness Product [Lincoln et al.'20]. Our new reduction shows that this problem is much harder. We also connect the problem to other ``intermediate'' problems, whose runtimes are between $O(n^\omega)$ and $O(n^3)$, such as the Max-Min product problem.
\end{abstract}
\newpage



\section{Introduction}

All-Pairs Shortest Paths (APSP) is one of the most fundamental problems in graph algorithms. In one of the most general variants of the problem, one is given an $n$-node directed or undirected graph with integer weights in $\{-n^c,\dots,n^c\}$ with no negative cycles and one needs to compute the shortest paths distance between every pair of vertices. A central question in graph algorithms is how fast APSP in $n$ node graphs can be solved. The fastest known algorithm runs in $n^3/2^{\Theta(\sqrt{\log n})}$ time~\cite{Williams14a,Williams18}, and no  $O(n^{3-\eps})$ time for $\eps>0$, so called {\em truly subcubic} algorithms are known.

One of the main hypotheses in fine-grained complexity, the APSP hypothesis, is that APSP requires $n^{3-o(1)}$ time\footnote{All of the hypotheses are for the Word-RAM model of computation with $O(\log n)$ bit words.}. Another famous hypothesis is the $3$SUM hypothesis that the $3$SUM problem for $n$ integers in  $\{-n^c,\ldots,n^c\}$ requires $n^{2-o(1)}$ time. $3$SUM asks if three of the integers sum to $0$ and can be solved in $O(n^2)$ time. Although there are no direct reductions between $3$SUM and APSP, it is known that they are related,  in a sense much more related than the third core problem of fine-grained complexity Orthogonal Vectors (OV). For instance, the $(\min,+)$-convolution problem is known to be fine-grained reducible to both $3$SUM and APSP, so that if APSP is in truly subcubic time, or $3$SUM is in truly subquadratic time, then $(\min,+)$-convolution is in truly subquadratic time \cite{CyganMWW19,patrascu2010towards,VWfindingcountingj,VassilevskaW09,focsy,focsyj}. Further, the Exact Triangle problem is known to be harder than both APSP and $3$SUM \cite{VWfindingcountingj,focsyj}, so that if it has a truly subcubic time algorithm, then both the APSP and the $3$SUM hypotheses would be false.
Meanwhile, it is not known how $3$SUM and APSP (or $(\min,+)$-convolution and Exact Triangle) are related to OV\footnote{The Min-Weight $k$-Clique problem can be reduced to both APSP and \textit{moderate-dimension} OV \cite{abboud2018more}, but it is not known whether it can be reduced to OV.}.

In this paper we provide more relationships to other problems that $3$SUM and APSP have in common. P{\u{a}}tra{\c{s}}cu \cite{patrascu2010towards} showed that $3$SUM can be reduced in truly subquadratic time to the All-Edges Sparse Triangle problem of determining for every edge $e$ in an $n$-node, $m=\tO(n^{1.5})$-edge\footnote{The $\tO$ notation in this paper hides poly-logarithmic factors.} graph, whether $e$ is in a triangle.
P{\u{a}}tra{\c{s}}cu's reduction implies that under the $3$SUM hypothesis, All-Edges Sparse Triangle requires $m^{4/3-o(1)}$ time.
All-Edges Sparse Triangle is known to be equivalent to the problem of listing $m$ triangles in an $m$-edge graph where $m=\tO(n^{1.5})$ \cite{duraj2020equivalences}, and can be solved in $O(m^{1.41})$ time \cite{AlonYZ97} with the current matrix multiplication bounds \cite{vstoc12,legallmult}, and in $\tO(m^{4/3})$ time if the exponent of square matrix multiplication $\omega$ is $2$.

Our main result is a reduction from Exact Triangle to All-Edges Sparse Triangle, thus showing that All-Edges Sparse Triangle (and hence Triangle Listing) requires $m^{4/3-o(1)}$ time also assuming the APSP hypothesis. This allows us to provide APSP-hardness for many dynamic problems.

We also consider the previously studied All-Edges Monochromatic Triangle problem (AE-Mono$\Delta$) \cite{VassilevskaWY10,VassilevskaWY06,lincoln2020monochromatic} in which one is given an $n$-node graph $G$ with colors on the edges, and one is asked to return for every edge $e$, whether it appears in a monochromatic triangle in $G$. The fastest algorithm for AE-Mono$\Delta$ runs in $\tO(n^{(3+\omega)/2})$ time \cite{VassilevskaWY10,VassilevskaWY06}.
Via work of \cite{lincoln2020monochromatic}, our reduction from Exact Triangle to All-Edges Sparse Triangle implies that if AE-Mono$\Delta$ has an $O(n^{2.5-\eps})$ time algorithm for $\eps>0$, then both the APSP and $3$SUM hypotheses are false. This shows that if $\omega=2$, the known algorithm for AE-Mono$\Delta$ is best possible, unless $3$SUM and APSP can both be solved faster. 

\cite{lincoln2020monochromatic} showed that $3$SUM is in fact fine-grained equivalent to the {\em Monochromatic Convolution} problem, which is the convolution version of AE-Mono$\Delta$. These two latter problems are very related (e.g. the known algorithms for them are analogous, the only difference being the use of FFT vs Fast Matrix Multiplication), to the extent that one might conjecture that they are fine-grained equivalent. If this bold conjecture were true, then $3$SUM would be equivalent to both problems, and thus APSP would reduce to $3$SUM. We leave determining the veracity of this conjecture to future work.

Besides $3$SUM, previously, the only problems known to be fine-grained reducible to
AE-Mono$\Delta$ were the seemingly easier problems directed unweighted APSP and Min-Witness Product \cite{lincoln2020monochromatic}. Our new reduction shows that this problem is much harder. 
 We also connect the problem to multiple so-called ``intermediate'' problems, whose runtimes are between $O(n^\omega)$ and $O(n^3)$, such as the Max-Min Product problem that has been studied in relation to All Pairs Bottleneck Paths \cite{VassilevskaWY07,duanpettiebott}.
 
\subsection{Our results}
Here we give more details about the results summarized above.


\paragraph{Reductions from Exact Triangle.}
The Exact Triangle problem asks, given an $n$-node graph with integer edge weights in $\{-n^c,\ldots,n^c\}$ for some constant $c$ and a target $T$, whether there are three vertices $p,q,r$ so that $w(p,q)+w(q,r)+w(r,p)=T$. Exact Triangle is equivalent to the version of Exact Triangle in which the target $T$ is $0$ \cite{VWfindingcountingj}. This is called the Zero Triangle problem
(also known as Zero-$3$-Clique or Love Triangle).

The brute-force algorithm for Exact Triangle runs in $O(n^3)$ time.
Meanwhile, as mentioned earlier, an $O(n^{3-\eps})$ time algorithm for $\eps>0$ for Exact Triangle would violate both the APSP hypothesis and the $3$SUM hypothesis. The {\em Exact Triangle hypothesis} states that Exact Triangle requires $n^{3-o(1)}$ time in the word-RAM model of computation with $O(\log n)$ bit words. This hypothesis is at least as believable as the hypothesis that at least one of the $3$SUM and APSP hypotheses is true.

Our main result is a reduction from Exact Triangle to certain unbalanced versions of Triangle Listing, and All-Edges Triangle Listing in sparse graphs which can then easily be reduced to other problems, including All-Edges Sparse Triangle, AE-Mono$\Delta$, SetDisjointness and SetIntersection.

In our unbalanced triangle problems there are five parameters, $\alpha,\beta,\gamma,\rho,t$. One is given a tripartite graph where the three parts have  $n^\alpha, n^\beta, n^\gamma$ vertices, respectively. The fourth parameter $\rho$ controls the edge density of the graph. Roughly speaking, each vertex in the graph has an $n^{-\rho}$ fraction of vertices as its neighbors. In the parameterized version of All-Edges Triangle Listing one is asked to list for every edge $e$, $t$ triangles containing $e$ (or all triangles containing $e$ if there are fewer than $t$ triangles). In our parameterized version of Triangle Listing, one is asked to list $t$ triangles in the graph (or all triangles if there are fewer than $t$ triangle). 

The statements of our reductions to the parameterized versions of Triangle Listing and All-Edges Triangle Listing are a bit technical (see Theorem~\ref{thm:exact-to-general-triangle-listing} in the Section~\ref{sec:exact_triangle}). We will list some consequences.



\begin{corollary}
\label{cor:intro:exact-to-sparse}
Any $n$-node instance of Exact Triangle can be reduced in $\tilde{O}(n^{2.5})$ time to $\tilde{O}(n)$ instances of All-Edges Sparse Triangle on $O(n^{1.5})$ edges each. Thus, assuming the Exact Triangle hypothesis, there is no $O(m^{4/3-\epsilon})$ time algorithm for All-Edges Sparse Triangle for any $\epsilon > 0$. 
\end{corollary}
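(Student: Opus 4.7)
The plan is to derive the corollary from the general parameterized reduction (Theorem~\ref{thm:exact-to-general-triangle-listing}) by specializing its parameters, and then converting the resulting triangle-listing instances to All-Edges Sparse Triangle via the equivalence of~\cite{duraj2020equivalences}. Concretely, I would instantiate the main theorem with $\alpha=\beta=\gamma=1$ and $\rho=1/2$: the three parts each have size $n$, and the sparsity parameter $n^{-\rho}=n^{-1/2}$ makes each bipartite piece of the tripartite graph carry $\tilde O(n^{2-\rho})=\tilde O(n^{1.5})$ edges, so every produced instance has $O(n^{1.5})$ edges total. The listing threshold $t$ should be set polylogarithmic---just large enough for a zero-sum triangle, if one exists, to be caught among the listed triangles in one of the instances with high probability.

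With these parameters, the theorem produces $\tilde O(n)$ instances, one for each compatible triple of hash buckets (fixing the bucket for two of the edge types determines the third up to constant low-order slack coming from the almost-linearity of the hash, so there are $\tilde O(n^{2\rho})=\tilde O(n)$ such triples), with total preprocessing time $\tilde O(n^{2.5})$. This is tight in expectation: the combined edge count across the $\tilde O(n)$ instances is $\tilde O(n)\cdot \tilde O(n^{1.5})=\tilde O(n^{2.5})$, and it is the work of bucketing the $\Theta(n^2)$ edges of the original Exact Triangle instance into these buckets that dominates.

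Each parameterized listing instance is then converted into $\tilde O(1)$ All-Edges Sparse Triangle calls using the equivalence of~\cite{duraj2020equivalences}: on an $m$-edge graph with $m=\tilde O(n^{1.5})$, listing triangles up to a polylogarithmic threshold (or the all-edges variant thereof) reduces to $\tilde O(1)$ calls to All-Edges Sparse Triangle plus near-linear overhead. This only multiplies the number of instances by a polylogarithmic factor while preserving the $O(n^{1.5})$ edge bound, establishing the first part of the corollary.

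For the conditional lower bound, I would chain the reduction with a hypothetical $O(m^{4/3-\epsilon})$ algorithm for All-Edges Sparse Triangle: each of the $\tilde O(n)$ instances then runs in $O(n^{(3/2)(4/3-\epsilon)})=O(n^{2-3\epsilon/2})$ time, so the total cost is $\tilde O(n)\cdot O(n^{2-3\epsilon/2})+\tilde O(n^{2.5})=\tilde O(n^{3-3\epsilon/2})$, which is truly subcubic and contradicts the Exact Triangle hypothesis. The main obstacle is verifying that the parameter choice $(\alpha,\beta,\gamma,\rho)=(1,1,1,1/2)$ lies in the range of validity of Theorem~\ref{thm:exact-to-general-triangle-listing} and that its instance-count and preprocessing-time bounds specialize cleanly to the claimed $\tilde O(n)$ and $\tilde O(n^{2.5})$; once that is confirmed, the rest of the argument is routine bookkeeping.
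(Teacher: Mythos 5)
Your instantiation of Theorem~\ref{thm:exact-to-general-triangle-listing} with $\alpha=\beta=\gamma=1$, $\rho=1/2$, and your final accounting (total time $\tilde O(n^{2.5}) + \tilde O(n)\cdot O((n^{1.5})^{4/3-\epsilon}) = \tilde O(n^{3-3\epsilon/2})$) match the paper exactly. The one substantive issue is the middle step, where you invoke ``the equivalence of~\cite{duraj2020equivalences}'' to convert each $(1,1,1,1/2,O(1))$-All-Edges Triangle Listing instance into $\tilde O(1)$ All-Edges Sparse Triangle calls. The paper explicitly points out that Theorem 15 of~\cite{duraj2020equivalences} concerns the \emph{global} Triangle Listing problem (list up to $T$ triangles in total), not the per-edge All-Edges Triangle Listing problem defined here, and moreover that the known reduction is not structure-preserving; for precisely this reason the paper proves a new structure-preserving reduction, Proposition~\ref{prop:triangle-listing-to-detection}, which reduces $(\alpha,\beta,\gamma,\rho,O(1))$-All-Edges Triangle Listing to $\tilde O(1)$ instances of $(\alpha,\beta,\gamma,\rho)$-All-Edges Sparse Triangle via a witness-extraction trick (subsampling $C$ plus bit-by-bit witness recovery). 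That proposition is the lemma you should cite, and until you supply it (or its argument) your proof has a genuine gap: you cannot simply appeal to~\cite{duraj2020equivalences} for the per-edge listing variant with a constant threshold.

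Two smaller imprecisions worth flagging. First, the instance count is $\tilde O(n^{3-\alpha-\beta-\gamma+2\rho})$ in general: there is an $n^{3-\alpha-\beta-\gamma}$ factor from splitting the original tripartite graph into blocks, which your ``$\tilde O(n^{2\rho})$ bucket triples'' reasoning omits; with $\alpha=\beta=\gamma=1$ that factor equals $1$, so your final count is correct, but the derivation as stated would not generalize. Second, the $\tilde O(n^{2.5})$ preprocessing time is not dominated by bucketing the $\Theta(n^2)$ original edges but rather by the total size of the $\tilde O(n)$ output instances (each edge of $G$ appears in several of them), which is $\tilde O(n)\cdot O(n^{1.5}) = \tilde O(n^{2.5})$; this is the bound the theorem actually charges as $\tilde O(n^{3-\min\{\alpha,\beta,\gamma\}+\rho})$.
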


Lincoln et al. \cite{lincoln2020monochromatic} showed that computing a certain number of  independent instances of All-Edges Sparse Triangle is equivalent to AE-Mono$\Delta$. 
\begin{theorem}\label{thm:sparse_to_mono}\cite{lincoln2020monochromatic}
Computing $m^{1/3}$ independent instances of All-Edges Sparse Triangle with $m$ edges each, where the number of vertices in each instance is $m^{2/3}$ is equivalent up to poly-logarithmic factors to computing an AE-Mono$\Delta$ instance where the number of vertices is $n=O(m^{2/3})$.
\end{theorem}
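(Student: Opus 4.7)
My plan is to prove both directions of the equivalence, each up to polylogarithmic factors.

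For the direction from $m^{1/3}$ sparse triangle instances to AE-Mono$\Delta$, I would use a direct color encoding. Take the $m^{1/3}$ sparse triangle instances $G_1,\ldots,G_{m^{1/3}}$, each on $m^{2/3}$ vertices with $m$ edges, and identify their vertex sets with a common $V$ of size $n := m^{2/3}$. Form a single edge-colored graph $H$ on $V$ by labelling, for every $i$, each edge of $G_i$ with color $i$. Then $H$ has $n$ vertices and $m^{1/3} \cdot m = m^{4/3} = O(n^2)$ colored edges, which is a valid AE-Mono$\Delta$ input of the stated size. Because distinct instances carry distinct colors, a color-$i$ monochromatic triangle of $H$ is exactly a triangle of $G_i$, so one AE-Mono$\Delta$ call on $H$ simultaneously recovers the All-Edges Sparse Triangle answers for all $m^{1/3}$ instances.

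For the reverse direction, the key observation is that AE-Mono$\Delta$ factors color-by-color: an edge of color $c$ lies in a monochromatic triangle iff it lies in a triangle of the color-class subgraph $G_c$. Thus the task is exactly All-Edges Sparse Triangle on every $G_c$ simultaneously, and the goal is to package these subproblems into at most $O(m^{1/3})$ sparse triangle instances of size $(m^{2/3}, m)$. I would bucket colors by edge count: heavy colors with $m_c \geq m$ number at most $n^2/m = m^{1/3}$, so each can claim a dedicated instance after breaking any dense $G_c$ into a polylog number of sparser tripartite pieces by random vertex partitioning; light colors with $m_c < m$ can be packed greedily into bins of total edge count at most $m$, with each color's edges within a bin placed on its own copy of the vertex set so that only same-color triples survive as triangles.

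The main obstacle is balancing the per-bin vertex budget of $m^{2/3}$ against the per-bin edge budget of $m$ in the light-color packing: using one disjoint vertex copy per color can push the vertex count in a bin beyond $m^{2/3}$, because $\sum_{c \in \text{bin}} n_c$ is only crudely bounded by $\sum_{c \in \text{bin}} 2m_c = O(m)$. Overcoming this, following Lincoln et al., requires a more refined packing that trims each $G_c$ to a triangle-equivalent subgraph with bounded vertex support and uses the polylogarithmic slack in the equivalence to absorb the low-order terms, yielding the claimed $O(m^{1/3})$ instances and thereby matching the forward direction.
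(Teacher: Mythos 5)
Your forward direction has a genuine gap: you identify all $m^{1/3}$ vertex sets with a common $V$ of size $m^{2/3}$ and assign each $G_i$'s edges color $i$, but a single pair $(u,v)$ can be an edge in several of the $G_i$, which would force $H$ to be a multigraph with one parallel edge (and color) per containing instance. This is not a degenerate corner case: the total edge budget $m^{1/3}\cdot m=m^{4/3}$ is $\Theta\bigl(\binom{m^{2/3}}{2}\bigr)$, so on average every vertex pair is hit $\Theta(1)$ times and adversarially a single pair can be hit $m^{1/3}$ times. AE-Mono$\Delta$ as defined in the paper is a simple-graph problem, so the construction as you wrote it is not a valid input. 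The paper's version of this reduction (spelled out inside the proof of Theorem~\ref{thm:monoeqT_to_monoT}, following~\cite{lincoln2020monochromatic}) avoids this by applying an independent uniformly random permutation to each instance's vertex labels before overlaying them; a Chernoff bound then guarantees that every pair's multiplicity is $O(\log n)$ with high probability. One then labels the parallel edges at each pair $1,\ldots,O(\log n)$ and, for each triple of labels $(i,j,k)$, builds a tripartite simple AE-Mono$\Delta$ instance using label-$i$ edges between parts $1$ and $2$, label-$j$ edges between parts $2$ and $3$, and label-$k$ edges between parts $3$ and $1$, yielding $O(\log^3 n)$ instances in total. Your direct encoding without the random relabeling and the per-label splitting is the missing ingredient, not a simplification.

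For the reverse direction you correctly decompose by color class and correctly observe that the per-bin vertex budget is the obstruction, but you leave it there, deferring to Lincoln et al.\ for "a more refined packing." That is not a proof; it is a statement that you could not close the gap. Since the paper itself only cites~\cite{lincoln2020monochromatic} for this direction and does not reproduce the argument, I can only note that your plan is incomplete rather than compare it to the paper's proof.
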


Combining Corollary~\ref{cor:intro:exact-to-sparse} and Theorem~\ref{thm:sparse_to_mono}, we get the following conditional lower bound for AE-Mono$\Delta$. 

\begin{corollary}
\label{cor:intro:mono-triangle-hard}
Assuming the Exact Triangle hypothesis, there is no $O(n^{2.5 -\epsilon})$ algorithm for AE-Mono$\Delta$ on $n$-node graphs for any $\epsilon>0$.
\end{corollary}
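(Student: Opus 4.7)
The plan is to chain Corollary~\ref{cor:intro:exact-to-sparse} with Theorem~\ref{thm:sparse_to_mono} via a careful parameter matching, so that a hypothetical truly sub-$n^{2.5}$ algorithm for AE-Mono$\Delta$ transfers all the way back to a truly sub-$n^3$ algorithm for Exact Triangle, contradicting the hypothesis.

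First, I would apply Corollary~\ref{cor:intro:exact-to-sparse} to an $n$-node Exact Triangle instance. This costs $\tilde{O}(n^{2.5})$ time and produces $\tilde{O}(n)$ independent All-Edges Sparse Triangle instances, each with $O(n^{1.5})$ edges (and, without loss of generality, at most $O(n)$ vertices, since sparse-triangle instances with more vertices than edges contribute only isolated vertices that can be dropped). The key observation is then the parameter choice $m = n^{1.5}$ in Theorem~\ref{thm:sparse_to_mono}: this yields $m^{1/3} = n^{1/2}$ and $m^{2/3} = n$, so a batch of exactly $n^{1/2}$ sparse-triangle instances with $n^{1.5}$ edges on $n$ vertices collapses to a single AE-Mono$\Delta$ instance on $O(n)$ vertices, up to poly-logarithmic overhead.

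Next, I would partition the $\tilde{O}(n)$ sparse-triangle instances into $\tilde{O}(n^{1/2})$ batches of $n^{1/2}$ each, padding the last batch with trivial instances if necessary, and solve each batch via one call to AE-Mono$\Delta$ on $O(n)$ vertices. Under the assumed $O(n^{2.5-\epsilon})$ bound, each such call costs $O(n^{2.5-\epsilon})$ time, so the total time to clear all batches is
\[
\tilde{O}(n^{1/2}) \cdot O(n^{2.5-\epsilon}) \;=\; \tilde{O}(n^{3-\epsilon}).
\]
Adding the $\tilde{O}(n^{2.5})$ cost of the initial reduction yields an overall running time of $\tilde{O}(n^{3-\epsilon})$ for Exact Triangle, which is truly subcubic and therefore violates the Exact Triangle hypothesis.

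The only step that requires genuine care is the parameter alignment: one must verify that the sparse-triangle instances produced by Corollary~\ref{cor:intro:exact-to-sparse} can be taken to have the specific shape ($n$ vertices, $\Theta(n^{1.5})$ edges) demanded by Theorem~\ref{thm:sparse_to_mono}, and that independent instances of this shape can be bundled in the required way. Because both statements are already stated in the compatible $m = n^{1.5}$ regime, and because padding with dummy vertices or edges never decreases the hardness, this is essentially bookkeeping rather than a new technical obstacle.
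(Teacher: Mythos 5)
Your proposal is correct and takes essentially the same route as the paper: both combine Corollary~\ref{cor:intro:exact-to-sparse} with Theorem~\ref{thm:sparse_to_mono} at the parameter setting $m = n^{1.5}$ to bundle the $\tilde{O}(n)$ sparse-triangle instances into $\tilde{O}(\sqrt{n})$ AE-Mono$\Delta$ instances on $O(n)$ vertices, giving total time $\tilde{O}(n^{2.5} + \sqrt{n}\cdot n^{2.5-\epsilon}) = \tilde{O}(n^{3-\epsilon})$ and contradicting the Exact Triangle hypothesis. The only difference is that you spell out the batching and parameter-alignment bookkeeping explicitly, which the paper leaves implicit in the phrase ``combining Corollary and Theorem.''
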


We also achieve the following conditional lower bound for even sparser triangle detection problems, as another corollary of our main reduction.

\begin{corollary}
Let $\mathbb{A}$ be an algorithm for All-Edges Sparse Triangle for  $n$-node graphs where every node has degree at most $d=n^\delta$ for some $0 < \delta \le 0.5$.
Assuming the Exact Triangle hypothesis, $\mathbb{A}$ cannot run in $O(n^{1-\epsilon}d^2)$ time for any $\epsilon > 0$. 
\end{corollary}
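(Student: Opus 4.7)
The plan is to invoke the main parameterized reduction, Theorem~\ref{thm:exact-to-general-triangle-listing}, with parameter choices tuned for the bounded-degree regime, generalizing the specific setting behind Corollary~\ref{cor:intro:exact-to-sparse}. Corollary~\ref{cor:intro:exact-to-sparse} handles the boundary case $\delta = 1/2$, in which the produced tripartite graph has $O(n^{1.5})$ edges on $n$ vertices (average degree $n^{1/2}$); for smaller $\delta$ I would decrease the density parameter $\rho$ so that each vertex in the produced tripartite graph has degree exactly $n^\delta$, matching the input restriction on $\mathbb{A}$.

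Starting from an $m$-node Exact Triangle instance, I would invoke Theorem~\ref{thm:exact-to-general-triangle-listing} with balanced parameters $\alpha = \beta = \gamma$ (so each part has $n = m^\alpha$ vertices), density parameter $\rho = \alpha(1-\delta)$ (so the per-vertex degree is $m^{\alpha-\rho} = m^{\alpha\delta} = n^\delta = d$), and listing parameter $t$ set so that the produced parameterized All-Edges Triangle Listing instances together capture every potential Exact Triangle witness. The free parameter $\alpha$ would be chosen so that the number of sub-instances $k$ satisfies $k \cdot n^{1 + 2\delta} = \tO(m^3)$; equivalently, the naive $n^{1+2\delta}$ per-instance brute force would exhaust exactly the cubic budget of Exact Triangle.

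If $\mathbb{A}$ ran in $O(n^{1-\epsilon} d^2) = O(n^{1 + 2\delta - \epsilon})$ time, then the total work across all $k$ sub-instances would be $\tO(k \cdot n^{1+2\delta-\epsilon}) = \tO(m^{3 - \alpha\epsilon})$, which is truly subcubic in $m$ and so (after absorbing the $\tO(m^{2.5})$-style reduction overhead analogous to Corollary~\ref{cor:intro:exact-to-sparse}) contradicts the Exact Triangle hypothesis. Since Theorem~\ref{thm:exact-to-general-triangle-listing} outputs parameterized All-Edges Triangle Listing instances rather than All-Edges Sparse Triangle instances directly, I would bridge the gap using the listing-versus-detection equivalence for sparse graphs from \cite{duraj2020equivalences}, at the cost of only polylogarithmic factors.

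The main obstacle I anticipate is verifying that the desired tuple $(\alpha, \alpha, \alpha, \alpha(1-\delta), t)$ lies within the regime that Theorem~\ref{thm:exact-to-general-triangle-listing} actually permits: the theorem presumably constrains the five parameters via inequalities that must remain satisfiable over the full range $\delta \in (0, 1/2]$, and one may need to split into cases according to where $\delta$ falls. A secondary concern is controlling the reduction's own running time when the per-instance graphs are very sparse: for small $\delta$ the number of sub-instances $k \approx m^{3 - \alpha(1+2\delta)}$ grows, and I would need to argue that producing each sparse sub-instance takes amortized time at most the corresponding $n^{1+2\delta}$ budget, so that the reduction overhead itself does not exceed the contradiction target.
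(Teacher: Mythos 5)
Your high-level plan is the one the paper actually uses: invoke Theorem~\ref{thm:exact-to-general-triangle-listing} with balanced parts and density tuned to the target degree, then bridge from listing to detection. The paper simply fixes $\alpha=\beta=\gamma=1$ and $\rho=1-\delta$, so each part has $n$ vertices and each vertex has degree $O(n^{1-\rho})=O(n^\delta)$, producing $\tilde{O}(n^{2-2\delta})$ listing instances in $\tilde{O}(n^{3-\delta})$ reduction time; since $\delta\le 0.5$, the listing parameter $900n^{\gamma-2\rho}+1=900n^{2\delta-1}+1$ is $O(1)$, so these are $(1,1,1,1-\delta,O(1))$-All-Edges Triangle Listing instances, and an $O(n^{1+2\delta-\epsilon})$-time solver would give $\tilde{O}(n^{3-\delta}+n^{2-2\delta}\cdot n^{1+2\delta-\epsilon})=\tilde{O}(n^{3-\min(\delta,\epsilon)})$ for Zero Triangle. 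Your extra free parameter $\alpha$ is harmless but adds nothing --- the condition you state to pin it down, $k\cdot n^{1+2\delta}=\tilde{O}(m^3)$, in fact holds identically for every $\alpha$, so it selects nothing --- and the paper's choice $\alpha=1$ is the clean one; it also makes your ``secondary concern'' about reduction overhead immediate, since the theorem already bounds that overhead by $\tilde{O}(n^{3-\delta})$.

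The genuine gap is the listing-to-detection step. You propose to use the listing/detection equivalence of \cite{duraj2020equivalences}. The paper explicitly notes (just before Proposition~\ref{prop:triangle-listing-to-detection}) that Theorem~15 of that paper changes the graph structure, so it need not turn an $(\alpha,\beta,\gamma,\rho,O(1))$-All-Edges Triangle Listing instance into $(\alpha,\beta,\gamma,\rho)$-All-Edges Sparse Triangle instances. For this corollary that matters: the detection instances handed to $\mathbb{A}$ must still be $\Theta(n)$-node graphs with max degree $O(n^\delta)$, otherwise the assumed $O(n^{1-\epsilon}d^2)$ bound does not apply. This is exactly why the paper proves a new \emph{structure-preserving} reduction, Proposition~\ref{prop:triangle-listing-to-detection}, which reduces one $(\alpha,\beta,\gamma,\rho,O(1))$-All-Edges Triangle Listing instance to $\tilde{O}(1)$ instances of $(\alpha,\beta,\gamma,\rho)$-All-Edges Sparse Triangle via a witness-isolation argument in the style of \cite{AlonGMN92} (random subsampling of $C$ to make triangles unique, then bit-by-bit recovery of the third vertex), a reduction that only deletes vertices and so never increases degrees or part sizes. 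Substitute that proposition for your \cite{duraj2020equivalences} citation and the rest of your argument goes through and matches the paper's proof.
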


\paragraph{Many $3$SUM-hard problems are now also APSP-hard.} 
Kopelowitz, Pettie and Porat \cite{kopelowitz2016higher} considered the SetDisjointness and SetIntersection problems. They showed $3$SUM hardness for both SetDisjointness and  SetIntersection, which were in turn used to show $3$SUM hardness for many dynamic graph problems. 

The SetDisjointness problem can be viewed as an All-Edges Sparse Triangle problem in constrained graphs, and the SetIntersection problem can be viewed as a triangle listing problem. Using our reductions from Exact Triangle to All-Edges Sparse Triangle and Triangle Listing, we show that the SetDisjointness and SetIntersection problems are hard under the Exact Triangle hypothesis. Thus, we immediately get APSP-hardness for a variety of problems that were previously known to be $3$SUM-hard,  including Dynamic Maximum Cardinality Matching, Incremental Maximum Cardinality Matching, $d$-Failure Connectivity and Triangle Enumeration in graphs with particular arboricity \cite{kopelowitz2016higher}. One example result is the following. 


\begin{corollary}
Assume the Exact Triangle hypothesis (or any one of the $3$SUM hypothesis or the APSP hypothesis). For any constants $y \in (0, 1/2)$, $x \in (0, 2y]$, there exists a constant $\epsilon > 0$ so that for graphs with $n$ vertices, $m$ edges, arboricity $\alpha = \Theta(n^x) = \Theta(m^y)$, and $t < m \alpha^{1-\epsilon}$ triangles, it requires $\Omega(m\alpha^{1-o(1)})$ time to list all triangles in the graph. 
\end{corollary}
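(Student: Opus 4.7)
The plan is to reuse the second half of Kopelowitz--Pettie--Porat's reduction \cite{kopelowitz2016higher}, which takes SetIntersection/SetDisjointness instances to triangle listing in graphs of prescribed arboricity, but to feed it from our new Exact Triangle hardness of SetIntersection rather than from its $3$SUM hardness. Their $\Omega(m\alpha^{1-o(1)})$ lower bound for triangle listing in $n$-vertex, $m$-edge graphs of arboricity $\alpha = \Theta(n^x) = \Theta(m^y)$ with at most $t < m\alpha^{1-\epsilon}$ triangles is obtained in two steps: first, $3$SUM is reduced to a parametrized SetIntersection instance, and second, that instance is transformed, via a purely combinatorial gadget, into a triangle-listing instance with the claimed arboricity and triangle-count parameters. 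Only the first step uses a hypothesis, so the job reduces to re-proving the first step under Exact Triangle.

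First, I would extract from \cite{kopelowitz2016higher} the exact SetIntersection regime consumed by their gadget for each $(x,y)$ in the prescribed range: the gadget requires a family of sets whose cardinalities, universe size and total intersection budget are specified functions of $x$ and $y$. Second, I would invoke our reduction from Exact Triangle to SetIntersection (which, as stated earlier in this subsection, follows from Theorem~\ref{thm:exact-to-general-triangle-listing} in the same way Corollary~\ref{cor:intro:exact-to-sparse} does for All-Edges Sparse Triangle) and instantiate it with exactly those parameters, obtaining, under the Exact Triangle hypothesis, the same $m\alpha^{1-o(1)}$ lower bound on the intermediate SetIntersection instance that \cite{kopelowitz2016higher} obtained under $3$SUM. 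Third, I would compose this with the unchanged KPP gadget to inherit the $\Omega(m\alpha^{1-o(1)})$ lower bound for arboricity-$\alpha$ triangle listing. The parenthetical in the corollary is then free, because the Exact Triangle hypothesis is implied by each of the $3$SUM and APSP hypotheses.

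The main obstacle will be verifying uniformly over $y \in (0, 1/2)$ and $x \in (0, 2y]$ that the five parameters $(\alpha,\beta,\gamma,\rho,t)$ of our unbalanced-triangle-listing reduction can be tuned, as explicit functions of $x$ and $y$, to produce precisely the SetIntersection regime KPP's gadget expects; in particular, the triangle-count threshold $t < m\alpha^{1-\epsilon}$ in the final instance must be traceable back to the listing budget in Theorem~\ref{thm:exact-to-general-triangle-listing} without losing more than polylogarithmic factors, and the two uses of the symbol $\alpha$ (part size in our theorem versus arboricity in the corollary) must be kept carefully distinct. A modest slack in the choice of $\epsilon$, which is allowed to depend on $x$ and $y$, should absorb any lower-order discrepancies introduced along the composition.
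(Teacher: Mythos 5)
Your proposal is correct and follows the same route the paper (implicitly) takes: the paper's own treatment of this corollary consists only of observing that the Exact Triangle hardness of SetIntersection established in Section~\ref{sec:exact_triangle} can be fed directly into Kopelowitz--Pettie--Porat's downstream reduction to triangle enumeration in bounded-arboricity graphs, just as you describe. You also correctly identify the only nontrivial bookkeeping obstacle — tuning $(\theta,\delta)$ in the paper's SetIntersection corollary (equivalently, $(\alpha,\beta,\gamma,\rho,T)$ in Theorem~\ref{thm:exact-to-general-triangle-listing}) to hit the SetIntersection regime KPP's gadget expects for each $(x,y)$, and keeping the overloaded symbol $\alpha$ straight — which the paper likewise does not spell out.
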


Notably, the SetIntersection problem of Kepolowitz et al. is a more generalized version of the All-Edges Sparse Triangle problem considered by P{\u{a}}tra{\c{s}}cu \cite{patrascu2010towards} and later by Abboud and Williams \cite{abboud2014popular}. Therefore, all conditional lower bound results in \cite{patrascu2010towards} and \cite{abboud2014popular} that use All-Edges Sparse Triangle as a source of hardness also hold under the Exact Triangle hypothesis. These results
include (but are not limited to) dynamic $st$ reachability, dynamic $st$ shortest paths, dynamic strong connectivity, subgraph connectivity, Langerman’s problem, Pagh’s problem and Erickson’s problem.
An example result from All-Edges Sparse Triangle is the following. 

\begin{corollary}
Assume the Exact Triangle hypothesis (or any one of the $3$SUM hypothesis or the APSP hypothesis). Then there is no fully dynamic algorithm for $s$-$t$ reachability that can have $O(m^{4/3-\epsilon})$ processing time, $O(m^{a-\epsilon})$ update time, and $O(m^{2/3-a-\epsilon})$ query time for some $\epsilon > 0$ and $1/6 \le a \le 1/3$. 
\end{corollary}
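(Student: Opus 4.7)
The plan is to compose two reductions. First, I would invoke Corollary~\ref{cor:intro:exact-to-sparse} to obtain that under the Exact Triangle hypothesis, All-Edges Sparse Triangle on $m$-edge graphs requires $m^{4/3-o(1)}$ time. Second, I would plug this hardness into the known black-box reduction from All-Edges Sparse Triangle to fully dynamic $s$-$t$ reachability of P{\u{a}}tra{\c{s}}cu~\cite{patrascu2010towards} and Abboud--Williams~\cite{abboud2014popular}, which yields precisely the claimed tradeoff across $a \in [1/6, 1/3]$; this reduction was originally used to transfer $3$SUM-hardness to dynamic $s$-$t$ reachability, but it consumes only the $m^{4/3-o(1)}$ hardness of All-Edges Sparse Triangle as its input.

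To describe the reduction I would invoke: encode the All-Edges Sparse Triangle instance $G$ as a layered DAG with source $s$, three copies $V_L, V_M, V_R$ of $V(G)$, and sink $t$. Preprocessing installs the static edges $u_L \to w_M$ and $w_M \to v_R$ corresponding to each $(u,w),(w,v) \in E(G)$, while updates toggle the dynamic edges $s \to V_L$ and $V_R \to t$. A single $s$-$t$ reachability query after activating $s \to u_L$ and $v_R \to t$ returns yes iff some $w$ satisfies $(u,w),(w,v) \in E(G)$, i.e., iff $(u,v)$ lies in a triangle. For the balanced endpoint $a = 1/3$, the naive per-edge usage (with $O(1)$ updates plus one query per edge of $G$) already gives a total cost of $O(m^{4/3-\epsilon'})$, contradicting the lower bound from Step 1. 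To obtain the sliding tradeoff across the full range $a \in [1/6, 1/3]$, I would follow~\cite{patrascu2010towards,abboud2014popular} in block-partitioning $V(G)$ into blocks whose size is tuned to the ratio $q/u$, amortizing the super-source/super-sink activations over many per-edge queries within each block-pair so that the combined update-plus-query cost stays below $m^{4/3-\epsilon'}$ throughout the interval.

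The main obstacle is essentially bookkeeping: verifying that the reduction of~\cite{patrascu2010towards,abboud2014popular} really treats All-Edges Sparse Triangle in a purely black-box manner and never appeals to any structural feature peculiar to the $3$SUM-based instance that produced it. This should be immediate from its construction, since the dynamic graph is built solely from $E(G)$ and the only queries are per-edge triangle-membership probes. Consequently, the conditional lower bound for dynamic $s$-$t$ reachability transfers automatically from the $3$SUM hypothesis (the original setting of~\cite{patrascu2010towards,abboud2014popular}) to the Exact Triangle hypothesis via Corollary~\ref{cor:intro:exact-to-sparse}, completing the proof.
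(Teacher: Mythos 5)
Your approach is the same as the paper's: apply Corollary~\ref{cor:intro:exact-to-sparse} to obtain $m^{4/3-o(1)}$ hardness for All-Edges Sparse Triangle under the Exact Triangle hypothesis, and then observe that the known reductions of P{\u{a}}tra{\c{s}}cu and Abboud--Williams from All-Edges Sparse Triangle to dynamic $s$-$t$ reachability are black-box in the source of the $m^{4/3-o(1)}$ lower bound, so the hardness transfers. The paper itself gives exactly this one-line argument and does not reprove the dynamic-graphs reduction; your proposal is in essence identical.

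One cautionary note on the reduction sketch you supply as ``color'': the layered DAG construction is fine for the balanced point $a=1/3$, but your description of the block amortization for $a<1/3$ (activating $s\to u_L$ for a whole block $B_1$ and $v_R\to t$ for a whole block $B_2$, then issuing per-edge queries) does not give per-edge answers --- a single $s$-$t$ query in that configuration only reveals whether \emph{some} pair in $B_1\times B_2$ admits a length-two path, and All-Edges Sparse Triangle genuinely needs $m$ bits of output. The prior works obtain the full range $a\in[1/6,1/3]$ with a more careful scheme (routed through the SetDisjointness/Multiphase framework rather than raw per-edge probing). Since you explicitly defer to~\cite{patrascu2010towards,abboud2014popular} for these details and the black-box observation is what carries the proof, this does not affect the correctness of your argument, but filling in that sketch as stated would not work.
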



Readers interested in reductions from triangle problems to dynamic graphs problems can check \cite{kopelowitz2016higher, patrascu2010towards, abboud2014popular} for more details. 



\paragraph{Reductions for intermediate problems.}
The notion of  ``intermediate'' problems were first devised by Lincoln et al. \cite{lincoln2020monochromatic} who studied problems whose current best running time is $\tilde{O}(n^{2.5})$ if $\omega = 2$. With the current bounds on rectangular matrix multiplication \cite{legallurr}, the running time of these problems vary. The easiest ``intermediate'' problems seem to be the unweighted directed APSP \cite{zwickbridge} problem and the Min-Witness Product problem \cite{CzumajKL07}, whose best algorithms run in $\tilde{O}(n^{2.5286})$ time using the best bounds on rectangular matrix multiplication \cite{legallurr}. Following them in complexity, there are the Equality Product problem \cite{vnotes,labib2019hamming} and the Dominance Product problem \cite{MatIPL,YusterDom}, which are known to have $\tilde{O}(n^{2.6598})$ time algorithms. 

The Equality Product of two $n\times n$ integer matrices $A$ and $B$ is the $n\times n$ matrix $C$ with $C[i,j]=|\{k~|~A[i,k]=B[k,j]\}|$. The Boolean version of the problem, $\exists$Equality Product asks to determine whether $|\{k~|~A[i,k]=B[k,j]\}|>0$, for each $i,j$. The Dominance Product of integer matrices $A$ and $B$ is the matrix $C$ with $C[i,j]=|\{k~|~A[i,k]\leq B[k,j]\}|$. The Boolean version of the problem, $\exists$Dominance Product asks whether $|\{k~|~A[i,k]\leq B[k,j]\}|>0$, for each $i,j$. While the regular versions of Equality Product and Dominance Product are known to be equivalent \cite{vnotes,labib2019hamming}, their Boolean versions are not known to be, although $\exists$Dominance Product can be reduced to $O(\log n)$ instances of $\exists$Equality Product \cite{vnotes,labib2019hamming}.

There are also several intermediate problems that do not have any improvement brought by fast rectangular matrix multiplication, and whose running times are all $\tilde{O}(n^{(3+\omega)/2}) = \tilde{O}(n^{2.6865})$. There problems include for instance,  the aforementioned AE-Mono$\Delta$ problem, the $(\min, \le)$-product problem, the Max-Min Product problem studied in relation to the All-Pairs Bottleneck Paths and All-Pairs Nondecreasing Paths \cite{duanpettiebott,VassilevskaWY07,DuanJW19}, and the related $(\min, =)$-product which we introduce in this paper.


Lincoln et al. \cite{lincoln2020monochromatic} gave reductions from unweighted directed APSP to Max-Min Product and AE-Mono$\Delta$, showing that if there exists a $T(n)$ time algorithm for Max-Min Product or AE-Mono$\Delta$, then one can also obtain an $\tilde{O}(T(n))$ time algorithm for unweighted directed APSP. They also give reductions from Min-Witness Product to Max-Min Product and AE-Mono$\Delta$. 

These reductions are not tight when $\omega > 2$, since they are reductions from a seemingly easier problem (for which improvements via rectangular matrix multiplication are known) to a harder problem (for which no improvements via rectangular matrix multiplication are known). 
For instance, in order to use the reduction to AE-Mono$\Delta$ to get a better algorithm for unweighted directed APSP, one would
need to obtain a better than $\tilde{O}(n^{2.5286})$ time algorithm for AE-Mono$\Delta$. This doesn't seem doable with current techniques.
Hence a natural question is whether one can obtain tight reductions between some of the intermediate problems.


As a first result in this direction, we show that AE-Mono$\Delta$ is the hardest ``intermediate'' problem among the ones mentioned above, as all of the problems can be reduced to it.
A key step in the reduction is to study a new problem called \textit{All-Edges Monochromatic Equality Triangle} (AE-MonoEq$\Delta$), which can be viewed as a combination of AE-Mono$\Delta$ and $\exists$Equality Product. We delay its formal definition to Section~\ref{sec:prelim}. As the first step in the reductions, we reduce AE-Mono$\Delta$ to AE-MonoEq$\Delta$. 

\begin{theorem}
\label{thm:intro:mono->monoeq}
If AE-Mono$\Delta$ has an $\tilde{O}(n^{(3+\omega)/2-\epsilon})$ time algorithm for $0 \le \epsilon \le \frac{\omega}{2} - 1$, then AE-MonoEq$\Delta$ has an $\tilde{O}(n^{(3+\omega)/2-\delta})$ time algorithm for $\delta \ge 0$. Moreover, if $\epsilon > 0$ then $\delta > 0$. 
\end{theorem}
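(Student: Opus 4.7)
The plan is to reduce an AE-MonoEq$\Delta$ instance on an $n$-vertex graph to a batch of AE-Mono$\Delta$ subinstances by splitting on the value labels, plus a lighter combinatorial step for the residual case, and to balance a threshold $\tau$ so that the hypothesized speedup for AE-Mono$\Delta$ transfers (possibly with some loss) to AE-MonoEq$\Delta$. Recall that in AE-MonoEq$\Delta$ every edge carries both a color (for the monochromatic constraint on all three sides of the triangle) and an extra value (for the equality constraint between the two edges meeting at the central vertex $b \in B$). The obstruction to a one-line reduction is that the $AC$ edges carry no value, so simply pairing color and value into a compound color does not by itself produce a clean AE-Mono$\Delta$ instance.

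Call a value \emph{heavy} if at least $\tau$ edges carry it; there are at most $O(n^2/\tau)$ heavy values. For each heavy value $v$, build the subinstance that keeps the $AB$ and $BC$ edges with value $v$ (with their original colors) together with all $AC$ edges, and invoke the hypothesized AE-Mono$\Delta$ algorithm on this $n$-vertex subinstance. A monochromatic triangle returned by such a call is exactly an AE-MonoEq$\Delta$ witness whose common value is $v$, so the union over heavy values catches every witness with a heavy common value. The total cost of this step is $\tilde O((n^2/\tau)\cdot n^{(3+\omega)/2-\epsilon})$.

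For the light values, where each carries fewer than $\tau$ edges, I would enumerate witnesses directly: for each $b \in B$ and each light value $v$, scan the $a_b(v)\cdot c_b(v)$ candidate pairs of incident $AB$/$BC$ edges sharing value $v$ and test whether the corresponding $AC$ edge exists with the right color. A crude accounting using $a(v), c(v)\le \tau$ for light $v$ and $\sum_v a(v)\le n^2$ yields a bound of $\tilde O(n^2\tau)$; a refined batching of many light values into a sparse matrix-multiplication subroutine (exploiting the fact that each light value induces very few edges) pushes this down to $\tilde O(n^{(3+\omega)/2})$ whenever $\tau \le n^{(\omega-1)/2}$, which is exactly the regime in which the balance between heavy and light costs will lie.

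Balancing the two contributions produces a total running time of $\tilde O(n^{(3+\omega)/2-\delta})$ for a suitable $\delta \ge 0$, with $\delta>0$ whenever $\epsilon>0$; the hypothesis $\epsilon\le \omega/2 - 1$ is exactly what is needed to keep the balanced $\tau$ inside the admissible range so that both the fast AE-Mono$\Delta$ algorithm is used non-trivially and the light-value contribution does not dominate. I expect the main obstacle to be the light-value step: the naive $\tilde O(n^2\tau)$ enumeration is too coarse to meet the target, and extracting the stated bound requires the refined sparse-matrix-multiplication analysis of the light-value subgraph, which is precisely what yields any positive $\delta$ beyond the unconditional $\tilde O(n^{(3+\omega)/2})$ baseline.
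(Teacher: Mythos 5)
Your heavy/light split by value is a reasonable first instinct, but the balance you claim does not exist, and the gap is large. Your heavy-value step makes $O(n^2/\tau)$ independent calls to AE-Mono$\Delta$, each on an $n$-vertex instance that still carries \emph{all} $O(n^2)$ $AC$-edges; the calls share these edges, so nothing shrinks. For the heavy cost $\tilde O\bigl((n^2/\tau)\cdot n^{(3+\omega)/2-\epsilon}\bigr)$ to fall below $n^{(3+\omega)/2}$ you would need $\tau \gtrsim n^{2-\epsilon}$, which for $\epsilon\le\omega/2-1$ forces $\tau\gtrsim n^{3-\omega/2}\ge n^{1.8}$. Meanwhile you claim the light step is controllable only when $\tau\le n^{(\omega-1)/2}\le n^{0.69}$. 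These two regimes are disjoint, so there is no valid choice of $\tau$. Concretely, at your proposed balance point $\tau = n^{(\omega-1)/2}$ the heavy step already costs $n^{4-\epsilon}\ge n^{3.8}$, which is far above even cubic. Your remark that ``$\epsilon\le\omega/2-1$ is exactly what is needed to keep the balanced $\tau$ inside the admissible range'' is therefore not correct; the constraint on $\epsilon$ in the theorem plays a different role and cannot rescue a balance that never closes.

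The paper circumvents this with three ideas you would need. First, instead of partitioning by value, it \emph{encodes} the value into the vertex identity: each $k\in K$ is replaced by copies $(k,v)$, one per value $v$, and all $IK$/$JK$ edges of value $v$ are rerouted to $(k,v)$. The equality constraint becomes an ordinary triangle constraint in the new unbalanced graph $G'$, at the cost of blowing up only the $K$ side (to at most $O(n^2)$ non-isolated vertices, but still only $O(n^2)$ edges). Second, it then partitions by \emph{color}, which, unlike value, partitions the edge set, so the color-subgraphs $G'_c$ are edge-disjoint and their sizes sum to $O(n^2)$. After a brute-force pass that removes $K'$-vertices of degree $\le D$ (cost $O(n^2D)$), each $G'_c$ has small $|K'_c|$ and $\sum_c |K'_c|\le n^2/D$. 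Third---and this is the step your approach is missing---it does not invoke AE-Mono$\Delta$ once per subgraph. It instead (a) uses \emph{rectangular} matrix multiplication when $|K'_c|\ge T$, exploiting $\omega(1,q,1)<1+q$ for large $q$ (this is where $\lambda=\omega-2>0$ enters and is the entire source of the gain $\delta>0$), and (b) packs \emph{all} colors with $|K'_c|<T$ into a \emph{single} AE-Mono$\Delta$ instance of size $\tilde O(T)$ via the Lincoln et al.\ random-permutation reduction, costing one call $\tilde O(T^{(3+\omega)/2-\epsilon})$ rather than $\Omega(n^2/\tau)$ calls. Balancing $n^2D$, the rectangular-MM term, and $T^{(3+\omega)/2-\epsilon}$ with $T=n^{1+\Theta(\epsilon)}$ is what yields the stated $\delta>0$. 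In short: (i) encode values as vertices rather than splitting by them, (ii) split by color to get disjoint subproblems, (iii) pack the small subproblems into one oracle call, and (iv) use rectangular MM on the large ones; without at least (iii) and (iv) the multiplicative blow-up in your heavy step is fatal.
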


If we use $\epsilon = 0$ in 
Theorem~\ref{thm:intro:mono->monoeq}, 
we immediately get an $\tilde{O}(n^{(3+\omega)/2})$ time algorithm for AE-MonoEq$\Delta$, showing that AE-MonoEq$\Delta$ is indeed another ``intermediate'' problem. If $\omega > 2$, we can use $\epsilon > 0$ in Theorem~\ref{thm:intro:mono->monoeq} to get that any slight improvements over the current best algorithm for AE-Mono$\Delta$ implies an improved algorithm for  AE-MonoEq$\Delta$. 

As suggested by the names of the problems, AE-Mono$\Delta$ is a special version of AE-MonoEq$\Delta$ (See Section~\ref{sec:intermediate} for the formal definition), so that any runtime improvements for AE-MonoEq$\Delta$ imply runtime improvements of AE-Mono$\Delta$. Thus, AE-MonoEq$\Delta$ and AE-Mono$\Delta$ are sub-$n^{(3+\omega)/2}$-fine-grained equivalent when $\omega > 2$. 

Next, we show that one can reduce all the ``intermediate'' problems mentioned above
to AE-MonoEq$\Delta$. 

\begin{theorem}
\label{thm:intro:monoeq->everything}
Suppose there exists a $T(n)$ time algorithm for AE-MonoEq$\Delta$, then there exist $\tilde{O}(T(n))$ time algorithms for all of the following:
\begin{itemize}\setlength\itemsep{0em}
    \item unweighted directed APSP,
    \item Min-Witness Product,
    \item $\exists$Equality Product,
    \item $\exists$Dominance Product,
    \item $(\min, =)$-product,
    \item Max-Min Product,
    \item $(\min, \le)$-product.
\end{itemize}
\end{theorem}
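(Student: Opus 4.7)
The plan is to reduce each of the seven listed problems to AE-MonoEq$\Delta$, chaining existing reductions whenever possible. Two of the seven come essentially for free: Lincoln et al.~\cite{lincoln2020monochromatic} already reduce both unweighted directed APSP and Min-Witness Product to AE-Mono$\Delta$ with only $\tilde{O}(\cdot)$ overhead, and AE-Mono$\Delta$ is itself a special case of AE-MonoEq$\Delta$ (as noted just before Theorem~\ref{thm:intro:mono->monoeq}), so composition handles both immediately. A third reduction is almost free as well: the excerpt cites that $\exists$Dominance Product reduces to $O(\log n)$ instances of $\exists$Equality Product, so it suffices to dispatch the equality case.

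The central step is to reduce $\exists$Equality Product to AE-MonoEq$\Delta$. Given $n \times n$ matrices $A, B$, I build a tripartite graph on three copies of $[n]$, say $I, K, J$, with each $I$-$K$ edge $(i,k)$ carrying equality-label $A[i,k]$, each $K$-$J$ edge $(k,j)$ carrying equality-label $B[k,j]$, and every $I$-$J$ pair connected by a wildcard-labeled edge; all edges receive a single common ``Mono'' color so the monochromatic requirement is automatic. A MonoEq triangle on an $I$-$J$ edge $(i,j)$ then corresponds exactly to some $k$ with $A[i,k] = B[k,j]$. This single AE-MonoEq$\Delta$ call yields $\exists$Equality Product, and by the chain above also $\exists$Dominance Product.

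The remaining four problems report numerical min/max values rather than mere existence. I handle them by logarithmic bucketing on the ranks of the entries of $A$. For $(\min, =)$-product, for each $\ell = 0, \dots, O(\log n)$ I restrict $A$ to its bottom-$2^\ell$ entries by rank and invoke the $\exists$Equality Product reduction above; a bucketed binary search then recovers the true minimum at every output cell in $O(\log^2 n)$ AE-MonoEq$\Delta$ calls, each of size $\tilde{O}(n)$. The $(\min, \le)$-product is handled identically but through $\exists$Dominance Product in place of $\exists$Equality Product. Max-Min Product reduces by observing that $\min(A[i,k], B[k,j])$ equals $A[i,k]$ when $A[i,k] \le B[k,j]$ and $B[k,j]$ otherwise, so that $C[i,j] = \max_k \min(A[i,k], B[k,j])$ decomposes into two $(\max, \le)$-style products symmetric to $(\min, \le)$-product and hence subsumed.

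The main obstacle is carrying numerical witness information through an oracle that natively returns yes/no only: we must recover true minima and maxima without inflating the vertex count of any single AE-MonoEq$\Delta$ instance beyond $\tilde{O}(n)$. Standard dyadic-rank bucketing delivers the required $\tilde{O}(T(n))$ overall runtime, but the bookkeeping for Max-Min Product and $(\min, \le)$-product is the most delicate part, because within each bucket we must simultaneously enforce the threshold constraint (via the ``Eq'' coordinate) and globally report the extremal value that actually attains it.
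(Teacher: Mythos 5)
Your reductions for the first three problems are sound: chaining Lincoln et al.'s reductions of unweighted directed APSP and Min-Witness Product through AE-Mono$\Delta$ (a special case of AE-MonoEq$\Delta$) works, and routing $\exists$Dominance through $\exists$Equality Product and then building a single-color tripartite instance with $v(i,k)=A[i,k]$, $v(k,j)=B[k,j]$ is exactly the MonoEq special case the paper identifies. The paper instead routes APSP and Min-Witness through Max-Min Product and $(\min,\le)$-product, but your alternate chain is equally valid for those two.

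The gap is in the step that is actually the heart of the theorem, the reduction of $(\min,=)$-product (and $(\min,\le)$-product, and hence Max-Min Product) to AE-MonoEq$\Delta$. You propose to ``restrict $A$ to its bottom-$2^\ell$ entries by rank'' and run the $\exists$Equality reduction, then perform ``a bucketed binary search.'' This cannot give an exact minimum with $\tilde{O}(1)$ oracle calls, because the oracle returns only one bit per output cell $(i,j)$, and different cells need \emph{different} thresholds at each stage of a binary search. Any global restriction on $A$ (by rank or by value) applies the same threshold to every cell, so with $q$ non-adaptive threshold-style queries you localize each transition point only to precision $\text{range}/q$; an exact answer would require polynomially many calls, not $O(\log^2 n)$. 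Even restricting $A$ row by row does not help, since cells $(i,j)$ and $(i,j')$ share a row but may want different intervals. The ingredient you are missing, and the crux of the paper's Theorem~\ref{thm:min-equal-to-mono-equal-triangle}, is to encode the per-cell search state in the \emph{colors}: set $c(i,k)=\lfloor A_{i,k}/2^\ell\rfloor$, $c(k,j)=\lfloor B_{k,j}/2^\ell\rfloor$, and crucially $c(i,j)=\lfloor \tilde C^{\ell+1}_{i,j}/2^\ell\rfloor$, where $\tilde C^{\ell+1}_{i,j}$ is the current prefix for that particular cell. The monochromatic constraint then enforces a different threshold per cell within a single oracle call, which is what makes a genuine parallel binary search in $O(\log n)$ calls possible. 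Your proposal explicitly declares ``all edges receive a single common `Mono' color,'' so it forgoes precisely this mechanism.

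Your treatment of $(\min,\le)$-product also glosses over a real complication: unlike $(\min,=)$-product, the target value $B_{k,j}$ need not equal $A_{i,k}$, so one cannot directly compare against the search prefix. The paper handles this by first enumerating the highest differing bit between $A_{i,k}$ and $B_{k,j}$, reducing each slice to a $(\min,=)$-product to find the common high-order prefix, and then running a second parallel binary search (again via colors) on the low-order bits of $B_{k,j}$. Saying $(\min,\le)$ is ``handled identically but through $\exists$Dominance Product'' skips this two-phase structure entirely. You do correctly identify that this part is ``the most delicate,'' but the proposal as written does not supply the mechanism that resolves it.
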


Some reductions in the above theorem were already known. First of all, both Max-Min Product and $\exists$Dominance Product can be reduced to
$(\min, \le)$-product \cite{VassilevskaWY07}. Next, both Min Witness Product and unweighted directed APSP can be reduced to Max-Min Product \cite{lincoln2020monochromatic}. As mentioned earlier, $\exists$Dominance Product reduces to $\exists$Equality Product, and the latter easily reduces to $(\min, =)$-product. Thus, to
prove Theorem~\ref{thm:intro:mono->monoeq}, it suffices to give reductions from $(\min, =)$-product and $(\min, \le)$-product to AE-MonoEq$\Delta$. 

Since AE-MonoEq$\Delta$ has an $\tilde{O}(n^{(3+\omega) / 2})$ time algorithm, Theorem~\ref{thm:intro:mono->monoeq} provides alternative algorithms for $(\min, =)$-product and Max-Min Product. These new algorithms are also potentially simpler as they do not involve dealing with sparse matrix products, which were the main source of difficulty in the previous $\tilde{O}(n^{(3+\omega) / 2})$ time algorithms for the problems

Combining Theorem~\ref{thm:intro:mono->monoeq} and Theorem~\ref{thm:intro:monoeq->everything}, we obtain that AE-Mono$\Delta$ is the hardest ``intermediate'' problem among all ``intermediate'' problems considered in \cite{lincoln2020monochromatic}, in the sense that if there is any improvement of AE-Mono$\Delta$ over the $\tilde{O}(n^{(3+\omega)/2})$ running time when $\omega > 2$, there will also be improvements for  $(\min, =)$-product, Max-Min Product and $(\min, \le)$-product.

AE-MonoEq$\Delta$ can be viewed as many independent instances of a problem called AE-Eq$\Delta$, in which we are given a graph $G$ with edge values, and we are asked to decide for each edge $e$ in the graph, whether it is in a triangle such that at least two of its three edges share the same edge value. 
Via techniques used in the proof of Theorem~$4$ in \cite{lincoln2020monochromatic}, we can show that computing a single instance of AE-MonoEq$\Delta$ of size $n$ is equivalent to, up to poly-logarithmic factors,  computing a certain number of instances of AE-Eq$\Delta$ on graphs with $n$ vertices where the total number of edges across all instances is $\Theta(n^2)$. 


Motivated by the simple nature of AE-Eq$\Delta$ and its relationship to our AE-MonoEq$\Delta$ problem, we consider the monochromatic versions of other intermediate problems.
The most interesting of these are arguably the monochromatic versions of $\exists$Equality Product and $(\min, =)$-product which we call Monochromatic Equality Product and Monochromatic $(\min, =)$-product. 

In the Monochromatic Equality Product problem (MonoEq), we are given a tripartite graph $G$ on vertex parts $I \cup J \cup K$. Each edge $e$ in $G$ has a color $c(e)$. All edges $e$ in $I \times K$ and $J \times K$ has a value $v(e) $. For every $(i, j)$, we need to decide if there exists $k$ such that $v[i, k] = v[j, k]$ and $c[i, k] = c[j, k] = c[i, j]$. MonoEq can be regarded as a combination of many sparse $\exists$Equality Product instances, where we are given sparse matrices $A$ and $B$, and we need to compute their $\exists$Equality Product result on a small number of entries. MonoEq is a special case of AE-MonoEq$\Delta$, so there exists an $\tilde{O}(n^{(3+\omega)/2})$ time algorithm for it.

The input to the Monochromatic $(\min, =)$-product (MonoMinEq) problem is the same as the input to MonoEq. For the output, instead of only determining for each $(i,j)$ the existence of $k$ such that $v[i, k] = v[j, k]$ and $c[i, k] = c[j, k] = c[i, j]$, we also have to output the minimum value of such a $v[i, k]$. MonoMinEq can be viewed as combination of many sparse $(\min, =)$-product instances. 

The best known algorithms for $\exists$Equality Product and $(\min, =)$-product have different running times, and it is unclear whether they are equivalent; clearly $\exists$Equality reduces to $(\min, =)$-product, but a reduction in the other direction would imply an improvement over the known algorithms for $(\min, =)$-product.

Surprisingly, we are able to show that the monochromatic versions,  MonoEq and MonoMinEq, are equivalent up to poly-logarithmic factors. 

\begin{theorem}
\label{thm:intro:monoeqproduct}
If there exists a $T(n)$ time algorithm for Monochromatic Equality Product, then there exists an $\tilde{O}(T(n))$ time algorithm for Monochromatic $(\min, =)$-product, and vice versa.
\end{theorem}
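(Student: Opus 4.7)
The direction MonoMinEq $\Rightarrow$ MonoEq is immediate, since the MonoEq answer for each pair $(i,j)$ is ``yes'' iff the MonoMinEq output for $(i,j)$ is finite. The plan is therefore to reduce MonoMinEq to $O(\log^2 n)$ MonoEq calls on instances of the same asymptotic size, giving the desired $\tilde{O}(T(n))$ bound.

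The key primitive is a \emph{subset query}: given any subset $S$ of the input's value universe, decide for every pair $(i,j)$ whether there exists $k$ with $v[i,k] = v[j,k] \in S$ and $c[i,k] = c[j,k] = c[i,j]$. This reduces to a single MonoEq call on a modified instance, in which every edge $(x,k) \in (I \cup J) \times K$ with $v[x,k] \notin S$ is recolored with a fresh, pairwise distinct color (drawn from a color space disjoint from the original), so that it cannot lie on any monochromatic triangle. MonoEq on the modified instance then returns exactly the subset-restricted decisions in $O(T(n))$ time.

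With subset queries in hand, I run a two-phase binary search over the sorted distinct input values $v_1 < v_2 < \cdots < v_N$, where $N = O(n^2)$. In Phase~1, for $r = 0, 1, \ldots, \lceil \log N \rceil$, run a subset query with $S_r = \{v_1, \ldots, v_{2^r}\}$; for each pair $(i,j)$ the monotone answer sequence pinpoints the dyadic \emph{level} $\ell_{i,j}$ containing its minimum, i.e., $\min \in (v_{2^{\ell-1}}, v_{2^\ell}]$. Phase~1 uses $O(\log n)$ queries. In Phase~2, for each level $\ell$ separately, run $O(\ell)$ further subset queries implementing a standard binary search on the $2^{\ell-1}$ values of that level; because a single subset query simultaneously returns answers for all pairs, these queries can be shared across every pair whose minimum lies at level $\ell$. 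Summing over levels gives $\sum_{\ell} O(\ell) = O(\log^2 n)$ queries in Phase~2, so in total $O(\log^2 n)$ MonoEq calls suffice and the overall runtime is $\tilde{O}(T(n))$.

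The main technical point is the sharing argument in Phase~2: the single subset query on a sub-interval of level $\ell$'s values must yield, in one shot, the appropriate binary-search bit for every pair whose minimum lies at level $\ell$, while pairs at other levels simply discard these answers (their own level already supplied the information they need). This is automatic because each subset query returns a yes/no bit for every pair $(i,j)$ by design, so every pair can extract exactly the bits relevant to its own level, and the recoloring step does not change the asymptotic instance size. The remaining work is routine bookkeeping.
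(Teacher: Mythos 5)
Your trivial direction and your ``subset query'' primitive are both fine, and Phase~1 is correct: prefix-set queries $S_r = \{v_1,\ldots,v_{2^r}\}$ are downward-closed, so the answer to ``$V_{i,j}\cap S_r \neq\emptyset$'' is exactly ``$\min V_{i,j}\le v_{2^r}$'', and $O(\log n)$ such queries localize each pair's minimum to a dyadic level. The gap is in Phase~2, and it is not just a detail: the sharing argument does not hold. Binary search over a level's $2^{\ell-1}$ values is \emph{adaptive}: after the first midpoint query, a pair that answered ``yes'' wants the next midpoint on the left half, while a pair that answered ``no'' wants it on the right half. A subset query is a \emph{global} object -- one set $S$ for all pairs -- so at round two you would already need two distinct queries, at round three four, and in general $\Theta(2^{\ell-1})$ queries to cover the whole binary-search tree at level $\ell$. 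Your claim that ``these queries can be shared'' conflates ``every pair receives an answer to the same query'' with ``every pair receives the answer it needs next.'' There is also an information-theoretic obstruction to patching Phase~2 with a non-adaptive family of prefix queries: the answer vector to any $q$ prefix (threshold) queries is monotone in the threshold index, so it can take at most $q+1$ distinct values, and distinguishing $\Theta(n^2)$ possible minima would require $q = \Omega(n^2)$ queries. Non-prefix subsets $S$ do not help either, since ``$V_{i,j}\cap S\neq\emptyset$'' no longer isolates $\min V_{i,j}$ when $|V_{i,j}|>1$.

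The missing ingredient is exactly the trick the paper uses: encode each pair's \emph{current binary-search state} into the color of its own $(i,j)$ edge. Concretely, at step $\ell$ the paper maintains a per-pair estimate $\tilde C^{\ell+1}_{i,j}$ and recolors edge $(i,j)$ with $(c(i,j), \lfloor \tilde C^{\ell+1}_{i,j}/2^\ell\rfloor)$ while recoloring the $I\times K$ and $J\times K$ edges with $(c,\lfloor A/2^\ell\rfloor)$ and $(c,\lfloor B/2^\ell\rfloor)$. A single MonoEq call then simultaneously answers, for every pair, ``does $\min V_{i,j}$ extend my current prefix?'' -- a genuinely per-pair query -- and the adaptive binary search collapses to $O(\log n)$ calls. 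Your subset-query reduction never touches the $I\times J$ colors, so it cannot implement per-pair thresholds. To repair your proof you should replace Phase~2 (and, as a bonus, Phase~1 as well) with this color-augmented parallel binary search.
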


Theorem~\ref{thm:intro:monoeqproduct} also implies an $\tilde{O}(n^{(3+\omega)/2})$ time algorithm for MonoMinEq. 

\section{Preliminaries}
\label{sec:prelim}
In this section, we recall formal definitions of problems considered in this paper and define notations used in the proofs. 

\subsection{Hardness Sources}
\begin{definition}[$3$SUM]
Given $n$ integers in  $\{-n^c,\ldots,n^c\}$ for constant $c$, determine  if three of the integers sum to $0$.
\end{definition}
\begin{conjecture}[$3$SUM hypothesis]
In the word-RAM model with $O(\log n)$ bit words, there is no $O(n^{2-\epsilon})$ for $\epsilon > 0$ time algorithm for $3$SUM.  
\end{conjecture}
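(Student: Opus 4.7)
The final statement is a \emph{conjecture}, not a theorem: no proof of the $3$SUM hypothesis is known, and obtaining one would be a major breakthrough in unconditional complexity lower bounds. The paper itself does not attempt such a proof; it assumes the hypothesis as a hardness source. What follows is therefore a meta-plan describing what an attempted proof would have to accomplish, and where the obstacles lie.

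The plan is to lower-bound every algorithm in the word-RAM model with $O(\log n)$-bit words that decides, given $n$ integers in $\{-n^c,\dots,n^c\}$, whether three of them sum to zero. A natural first step is to prove the bound in a restricted model. In the $k$-linear decision-tree model, the best known lower bound is only $\Omega(n\log n)$, and Gronlund and Pettie even gave an $O(n^{3/2}\sqrt{\log n})$ linear decision tree for $3$SUM, so a decision-tree lower bound of $n^{2-o(1)}$ is itself open and appears to require a new communication-complexity-style argument against an arbitrary $k$-linear adversary protocol. I would start by trying to force such a protocol to essentially enumerate pairs, perhaps by encoding a known $\Omega(n^2)$-hard set-disjointness or convolution problem into the additive structure of a worst-case $3$SUM instance.

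A second step would be to lift such a decision-tree bound to the word-RAM model via a cell-probe reduction. The central obstacle here is that the word-RAM allows hashing, indirect addressing and Fredman-type word-level tricks that are not captured by linear queries; no general lifting theorem from linear decision trees to word-RAM is known, and in particular the subquadratic linear decision tree of Gronlund--Pettie has not been converted to a subquadratic word-RAM algorithm. Any proof must therefore exploit the additive structure of $3$SUM in a way that simultaneously defeats hashing and word-level parallelism, likely requiring a cell-probe adversary that is robust to arbitrary $O(\log n)$-bit encodings of the input.

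The main obstacle, and the reason this is universally treated as a hypothesis rather than a theorem, is that no current technique separates $n^{2-o(1)}$ from $n^{2-\epsilon}$ in the word-RAM model for any problem in $\mathsf{NP}$: such a separation would be the first super-linear word-RAM time lower bound of its kind for an explicit problem, and would likely need to navigate known natural-proofs and algebrization-style barriers that block unconditional lower bounds. My plan therefore collapses at the lifting step, which is precisely why the $3$SUM hypothesis appears in the excerpt as a conjecture and is used throughout the paper as an assumption rather than derived.
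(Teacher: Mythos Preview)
Your assessment is correct: the statement is a \emph{conjecture} (the $3$SUM hypothesis), and the paper offers no proof of it whatsoever---it is simply stated as a hardness assumption and used as such throughout. Your meta-discussion of what a proof would require and where it would fail is accurate and well-informed, but strictly speaking the paper's ``proof'' is empty, so there is nothing to compare against beyond confirming that both you and the paper treat this as an unproven hypothesis.
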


\begin{definition}[APSP]
Given an $n$-node directed graph with integer weights in $\{-n^c,\dots,n^c\}$ and no negative cycles, compute the shortest paths distance between every pair of vertices.
\end{definition}
\begin{conjecture}[APSP hypothesis]
In the word-RAM model with $O(\log n)$ bit words, there is no $O(n^{3-\epsilon})$ for $\epsilon > 0$ time algorithm for APSP.  
\end{conjecture}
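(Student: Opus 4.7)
The final statement is the APSP hypothesis, which is an \emph{unproved conjecture} rather than a theorem, and is assumed throughout the paper as the starting point for the conditional lower bounds that follow. There is therefore no ``proof'' in the paper, or anywhere else, that I am about to reconstruct; what I can offer is a plan for how one might conceivably establish the statement, together with a frank assessment of why such a proof is believed to be well beyond current techniques.

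The natural plan would proceed in three stages. First, I would replace APSP by a more algebraic proxy: the $(\min,+)$-product of two $n \times n$ matrices, which is subcubic-equivalent to APSP \cite{focsyj}, or equivalently Negative Triangle detection. Both are combinatorially cleaner targets for a lower bound than APSP itself. Second, I would try to lift an existing $\Omega(n^3)$ \emph{unconditional} bound from some restricted model of computation---for instance, Kerr's classical $\Omega(n^3)$ bound for $(\min,+)$-product in the arithmetic decision tree model, or the monotone circuit bounds of Jerrum and Snir---up to the word-RAM model with $O(\log n)$-bit words. Third, I would argue that any hypothetical $O(n^{3-\epsilon})$ word-RAM algorithm for APSP can be simulated in the restricted model with only $n^{o(1)}$ overhead, yielding the desired contradiction.

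The main obstacle is precisely this third step, and it is what makes the hypothesis widely regarded as out of reach. Williams's $n^3/2^{\Theta(\sqrt{\log n})}$ algorithm crucially exploits fast matrix multiplication and the polynomial method, and these are genuinely non-comparison-based techniques that evade both arithmetic decision tree and monotone circuit lower bounds. Any proof of the APSP hypothesis must therefore be compatible with such algebraic machinery, and the standard barriers to circuit lower bounds (natural proofs, algebrization, and their fine-grained analogues) strongly suggest that a wholly new lower-bound paradigm would be required. One would effectively be proving a polynomial-factor lower bound in a highly non-uniform cell-probe-style model, a goal at least as ambitious as longstanding open problems in complexity theory. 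For this reason, the paper---in common with essentially all work in fine-grained complexity---takes the APSP hypothesis as an axiom and devotes its effort to the conditional reductions that form the rest of the paper, rather than attempting a direct attack.
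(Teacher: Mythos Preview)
Your assessment is correct: the APSP hypothesis is stated in the paper as a conjecture and is never proved there---it serves as a standing assumption from which conditional lower bounds are derived. The paper offers no proof or even a proof sketch, so your recognition that there is nothing to reconstruct is exactly right; the additional discussion of why a proof is out of reach is reasonable context but goes beyond anything the paper itself addresses.
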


\begin{definition}[Exact Triangle, Exact$\Delta$]
Given an $n$-node graph with integer edge weights in $\{-n^c,\ldots,n^c\}$ for some constant $c$ and a target $T$, decide whether there are three vertices $p,q,r$ so that $w(p,q)+w(q,r)+w(r,p)=T$.
\end{definition}

\begin{conjecture}[Exact Triangle hypothesis]
In the word-RAM model with $O(\log n)$ bit words, there is no $O(n^{3-\epsilon})$ for $\epsilon > 0$ time algorithm for Exact Triangle.  
\end{conjecture}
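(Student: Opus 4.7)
The final statement is a hardness conjecture, so an unconditional proof is not possible; the justification the paper alludes to (and which I would provide in its place) is that the Exact Triangle hypothesis is implied by the disjunction of the 3SUM and APSP hypotheses. The plan is therefore to sketch the two fine-grained reductions $3$SUM $\le$ Exact$\Delta$ and APSP $\le$ Exact$\Delta$ (as in \cite{VWfindingcountingj,focsyj}); taken contrapositively they show that any $O(n^{3-\epsilon})$ algorithm for Exact Triangle would refute both the 3SUM and APSP hypotheses simultaneously, so the Exact Triangle hypothesis follows from ``at least one of 3SUM or APSP holds''.

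For the reduction from 3SUM, I would first apply P{\u{a}}tra{\c{s}}cu-style linear hashing modulo a random prime to reduce $n$-integer 3SUM to a structured ``bucketed'' instance with $\tilde{\Theta}(n^{2/3})$ buckets of roughly $n^{1/3}$ elements each, then encode the bucketed problem as one (or polylogarithmically many) Exact Triangle instances on $N = \tilde{\Theta}(n^{2/3})$ vertices. The three vertex parts correspond to the three summands, and edge weights are chosen so that the weight of a triangle realizes the sum of the three input integers represented. The key property of the hashing is that any genuine 3SUM triple lies in a triple of buckets whose hash labels already sum to $0$, and this constraint is baked into the edge weights so that a zero-weight triangle exactly witnesses a 3SUM solution. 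An $O(N^{3-\epsilon})$ Exact Triangle algorithm then solves 3SUM in $\tilde{O}(n^{(2/3)(3-\epsilon)}) = \tilde{O}(n^{2-2\epsilon/3})$ time, truly subquadratic.

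For the reduction from APSP, I would invoke the standard chain APSP $\equiv$ Min-Plus Product $\equiv$ Negative Triangle and then reduce Negative Triangle to Exact Triangle. The oracle for Exact Triangle commits to a single target $T$, so naively enumerating the $n^{O(1)}$ possible triangle weights is too expensive; the plan is to recover the minimum triangle sum bit by bit. Specifically, by scaling and additively perturbing edge weights, one can design $\tilde{O}(\log n)$ Exact Triangle queries on graphs of the same size as the original that together pin down the sign (or each bit) of the minimum triangle weight, yielding an $\tilde{O}(n^{3-\epsilon})$ algorithm for Negative Triangle and hence APSP. The main obstacle in both directions is bridging the rigid ``equality-to-a-fixed-target'' nature of Exact Triangle with the existential/optimization character of 3SUM and Negative Triangle; I expect the hashing and bit-peeling techniques in \cite{VWfindingcountingj,focsyj} to do this, but they are the non-routine part of the plan.
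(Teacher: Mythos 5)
You correctly identify that this statement is a hypothesis rather than a theorem, that the paper offers no proof for it, and that the only justification on offer (and the right one) is the remark that the Exact Triangle hypothesis follows from the disjunction of the $3$SUM and APSP hypotheses, via the reductions in \cite{VWfindingcountingj,focsyj}. Your high-level plan therefore matches the paper's stance exactly, and the ``proof'' you give is precisely the kind the paper alludes to.

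Two caveats about the sketches themselves. On the $3$SUM side, the claimed parameterization ($\tilde{O}(1)$ Exact Triangle instances on $\tilde{\Theta}(n^{2/3})$ vertices, obtained from $\tilde{\Theta}(n^{2/3})$ buckets of $n^{1/3}$ elements) should be rechecked against \cite{VWfindingcountingj}; encoding $n^{1/3}$ numbers per bucket into a single vertex is not free, and the reduction there is structured somewhat differently (roughly, many smaller Zero Triangle instances, balanced so the total time is $n^{2-\Omega(\epsilon)}$). More substantively, on the APSP side, ``pin down the sign or each bit of the minimum triangle weight'' by Exact Triangle queries alone does not work as stated: an Exact Triangle oracle answers equality queries and hence provides no monotone information to binary-search on. The actual reduction from Minimum-Weight/Negative Triangle to Exact Triangle in the cited work first partitions vertices of each part into weight classes so that within a fixed triple of classes the residual triangle weight lies in a narrow band, and only then are Exact Triangle (or brute-force) queries used, with recursion on the band width. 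Your phrase ``scaling and additively perturbing edge weights'' gestures at this but omits the grouping step that makes the queries informative. Since all of this is content of the cited papers rather than of this one, the gap is in the sketch, not in your reading of what the paper itself establishes.
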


It is known that either the $3$SUM hypothesis or the APSP hypothesis implies the Exact Triangle hypothesis \cite{VWfindingcountingj,focsyj}.

\begin{definition}[Zero Triangle, Zero$\Delta$]
Given an $n$-node graph with integer edge weights in $\{-n^c,\ldots,n^c\}$ for some constant $c$, decide whether there are three vertices $p,q,r$ so that $w(p,q)+w(q,r)+w(r,p)=0$
\end{definition}

It is known that Exact Triangle is sub-cubic fine-grained equivalent to Zero Triangle \cite{VWfindingcountingj}.

\subsection{Graph Problems}

\begin{definition}[All-Edges Sparse Triangle, AE-Sparse$\Delta$]

Given an $n$-node $m$-edge graph $G=(V,E)$, 
determining for every edge $e \in E$   whether $e$ is in a triangle. 
\end{definition}

\begin{definition}[All-Pairs Shortest Paths in directed unweighted graphs, UnweightedAPSP]
Given an $n$-node directed unweighted graph $G=(V,E)$, 
compute the shortest paths distance between every pair of vertices.
\end{definition}

\subsection{Set Problems}

we define two problems investigated in \cite{kopelowitz2016higher}: SetDisjointness and SetIntersection. 

\begin{definition}[SetDisjointness]
Given a universe $U$, a family $\mathcal{F} \subseteq 2^U$ of subsets of $U$, and $q$ pairs of queries $(S, S') \in \mathcal{F} \times \mathcal{F}$, determine for each query $(S, S')$ whether $S \cap S'$ is empty.
\end{definition}

\begin{definition}[SetIntersection]
Given a universe $U$, a family $\mathcal{F} \subseteq 2^U$ of subsets of $U$, $q$ pairs of queries $(S, S') \in \mathcal{F} \times \mathcal{F}$ and a number $T$, output elements of $S \cap S'$ for each query $(S, S')$. It is allowed to terminate the algorithm once it outputs $T$ elements in total.  
\end{definition}

\subsection{Matrix Product Problems}

\begin{definition}[Minimum Witness matrix product, MinWitness]
Given two $n \times n$ Boolean matrices $A, B$, compute an $n \times n$ matrix $C$ such that 
$$C_{i, j} = \min( \{ k: A_{i, k} = B_{k, j} = 1\} \cup \{\infty\}).$$
\end{definition}
\begin{definition}[Max-Min Product, $(\max, \min)$-product]
Given two $n \times n$ integer matrices $A, B$, compute an $n \times n$ matrix $C$ such that 
$$C_{i, j} = \max_k \min \left\{ A_{i, k}, B_{k, j}\right\} .$$
\end{definition}

We define a generic $(\oplus, \otimes)$-product, where $\oplus$ maps a set of integers to an integer, and $\otimes$ maps two integers to a Boolean value. 

\begin{definition}[$(\oplus, \otimes)$-product]
Given two $n \times n$ integer matrices $A, B$, compute an $n \times n$ matrix $C$ such that 
$$C_{i, j} = \oplus( \{ B_{k, j} : A_{i, k} \otimes B_{k, j}\}_k ).$$
\end{definition}


We can define operation \textbf{NonEmpty} that returns $1$ if a set is nonempty, and $0$ otherwise. We can therefore define $\exists$Equality Product as $(\textbf{NonEmpty}, =)$-product, and define $\exists$Dominance Product as $(\textbf{NonEmpty}, \le)$-product.

If we define $\min(\emptyset) = \infty$ and $\max(\emptyset) = -\infty$, then 
$(\min, =)$-product, $(\min, \le)$-product  and $(\max, \le)$-product all fall into this generic definition without ambiguity. 

\subsection{Problems with Colors}

\begin{definition}[All-Edges Monochromatic  Triangle, AE-Mono$\Delta$]
Given an $n$-node  graph $G = (V, E)$, where each edge $e \in E$ has a color $c(e)$. Determine  for every edge $e$, whether it appears in a monochromatic triangle in $G$.
\end{definition}

\begin{definition}[All-Edges Monochromatic Equality Triangle, AE-MonoEq$\Delta$]

Given an $n$-node graph $G = (V, E)$, where each edge $e \in E$ has a color $c(e)$ and a value $v(e)$. Determine  for every edge $e$, whether it appears in a monochromatic triangle in $G$ that at least two of its edges share the same value.  
\end{definition}

\begin{definition}[Monochromatic Equality Product, MonoEq]
Given a graph $G = (I \cup J \cup K, E)$, where $|I|=|J| = |K| = n$. Each edge $e$ in the graph has a color $c(e)$. Each edge $e$ in $I \times K$ and $J \times K$ has a value $v(e) $. For every $(i, j)$, decide if there exists $k$ such that $v(i, j) = v(j, k)$ and $c(i, k) = c(j, k) = c(i, j)$.
\end{definition}

\begin{definition}[Monochromatic $(\min, =)$-product, MonoMinEq]
Given a graph $G = (I \cup J \cup K, E)$, where $|I|=|J| = |K| = n$. Each edge $e$ in the graph has a color $c(e)$. Each edge $e$ in $I \times K$ and $J \times K$ has a value $v(e) $. For every $(i, j)$, compute 
$$\min \left( \{v(i, k) : v(i, k) = v(j, k) \wedge c(i, k) = c(j, k) = c(i, j)\}_k \cup \{\infty\} \right).$$
\end{definition}

\begin{definition}[Monochromatic $(\min, \le)$-product]
Given a graph $G = (I \cup J \cup K, E)$, where $|I|=|J| = |K| = n$. Each edge $e$ in the graph has a color $c(e)$. Each edge $e$ in $I \times K$ and $J \times K$ has a value $v(e) $. For every $(i, j)$, compute 
$$\min  \left( \{v(j, k) : v(i, k) \le v(j, k) \wedge c(i, k) = c(j, k) = c(i, j)\}_k \cup \{\infty\} \right).$$
\end{definition}

\subsection{Notations}

For a graph $G=(V,E)$, a node $v\in V$ and $U\subseteq V$, we use $\deg(v, U)$ to denote $|\{u \in U : (v, u) \in E\}|$. 

We use $\omega(a, b, c)$ to denote the rectangular matrix multiplication exponent, i.e. the smallest real number $z$ such that the time to multiply an $n^a \times n^b$ matrix by an $n^b \times n^c$ matrix is $O(n^{z+\eps})$ for all $\eps>0$.
In particular, let $\omega=\omega(1,1,1)$ be the exponent for square matrix multiplication. It is known that $\omega\in [2,2.373)$ \cite{vstoc12,legallmult}. The best known bounds for $\omega(a,b,c)$ are in \cite{legallurr}.

\section{Hardness of Sparse Triangle Listing}
\label{sec:exact_triangle}
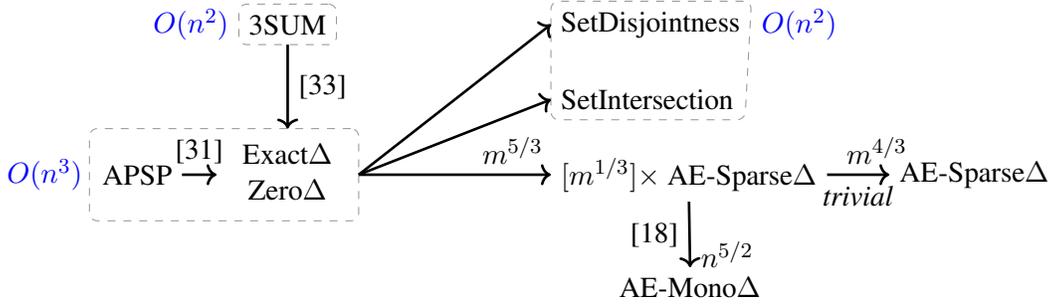
\begin{figure}[ht]
    \centering
    \begin{tikzpicture}
    
        \pgfmathsetmacro{\dx}{-7.35}
        \pgfmathsetmacro{\dy}{1.5}
        
        \node at(2 + \dx, 2 + \dy)  [anchor=center] (3sum){3SUM};
        \node at(3sum.west)  [anchor=east] (){\textcolor{blue}{$O(n^2)$}};
        
        \node at(0 + \dx, 0 + \dy)  [anchor=center] (apsp){APSP};
         \node at(apsp.west)  [anchor=east] (){\textcolor{blue}{$O(n^3)$}};
         
         \node at(0 + \dx, 0 + \dy)  [anchor=center] (apspholder){\begin{tabular}{c}  \\ \ \ \ \ \ \ \ \end{tabular}};
         
        \node at(2 + \dx, 0 + \dy)  [anchor=center] (exact){\begin{tabular}{c} Exact$\Delta$ \\ Zero$\Delta$ \end{tabular}};
        
        \node at(5.5 + \dx, 0 + \dy)  [anchor=west] (sparses){$[m^{1/3}] \times $ AE-Sparse$\Delta$};
        
        \node at(10 + \dx, 0 + \dy)  [anchor=west] (sparse){AE-Sparse$\Delta$};
        
        \node at(7.35 + \dx, -1.5 + \dy)  [] (monoT){AE-Mono$\Delta$};

        \node at(5.5 + \dx, 1 + \dy)  [anchor=west] (inter){SetIntersection};
        
        \node at(5.5 + \dx, 2 + \dy)  [anchor=west] (disj){SetDisjointness};
        \node at(disj.east)  [anchor=west] (){\textcolor{blue}{$O(n^2)$}};

         
         
         

        \draw[->,line width=1pt] (3sum.south) to[]  node[right] {\cite{VWfindingcountingj}} (exact.north);
        \draw[->,line width=1pt] (apsp.east) to[]  node[above] {\cite{focsyj}} (exact.west);
        \draw[->,line width=1pt] (exact.east) to[]  node[] {} (disj.west);
        \draw[->,line width=1pt] (exact.east) to[]  node[] {} (inter.west);
        \draw[->,line width=1pt] (exact.east) to[]  node[] {} node[pos=0.8,above]{$m^{5/3}$}(sparses.west);
        
        \draw[->,line width=1pt] (sparses.east) to[]  node[below] {\textit{trivial}} node[pos=0.8,above]{$m^{4/3}$}(sparse.west);
        
        \draw[->,line width=1pt] (sparses.south) to[]  node[left] {\cite{lincoln2020monochromatic}} node[pos=0.8,right]{$n^{5/2}$}(monoT.north);
        
        \draw[opacity=0.4, dashed, rounded corners=3] (3sum.north east) -- (3sum.north west) -- (3sum.south west) -- (3sum.south east) -- cycle;

        \draw[opacity=0.4, dashed, rounded corners=3] (exact.north east) -- (apspholder.north west) -- (apspholder.south west) -- (exact.south east) -- cycle;
        
         \draw[opacity=0.4, dashed, rounded corners=3] (disj.north east) -- (disj.north west) -- (inter.south west) -- (inter.south east) -- cycle;

    \end{tikzpicture}
    \caption{The main reduction in Section~\ref{sec:exact_triangle} is from Exact$\Delta$/Zero$\Delta$ to parameterized versions of triangle detection and triangle listing. This reduction implies hardness for SetDisjointness, SetIntersection, AE-Sparse$\Delta$ and AE-Mono$\Delta$.}
    \label{fig:exact_triangle_reductions}
\end{figure}

We define a parameterized version of Triangle Listing. In this version, the graph has three parts of vertices. Each of the three parts can have different sizes, but edge densities between any pair of parts are the same. 

\begin{definition}[$(\alpha, \beta, \gamma, \rho, t)$-All-Edges Triangle Listing]
\label{def:AE-triangle-listing}
Given a tripartite graph $G$ whose vertex set is $A \cup B \cup C$, such that $|A| = n^\alpha, |B| = n^\beta, |C| = n^\gamma$. Let $X, Y \in \{A, B, C\}$ be two different parts of the graph. For any $v \in X$, $\deg(v, Y) \le O(n^{-\rho} |Y|)$. The $(\alpha, \beta, \gamma, \rho, t)$-All-Edges Triangle Listing problem asks to list, for each $e \in E \cap (A \times B)$, all triangles containing $e$ if there are fewer than $t$ such triangles or $t$ distinct triangles containing $e$ is there are at least $t$ such triangles.
\end{definition}

Definition~\ref{def:AE-triangle-listing} defines the triangle listing problem slightly differently from the usual definition. In previous works (e.g. \cite{bjorklund2014listing, duraj2020equivalences}), the algorithm for triangle listing is required list up to $T$ triangles in the the whole graph, while $(\alpha, \beta, \gamma, \rho, t)$-All-Edges Triangle Listing asks to list up to $t$ triangles for each edge. 

We also define an unbalanced triangle listing problem when we have to list up to $T$ triangles globally. 

\begin{definition}[$(\alpha, \beta, \gamma, \rho, T)$ Triangle Listing]
\label{def:triangle-listing}

Given a tripartite graph $G$ whose vertex set is $A \cup B \cup C$, such that $|A| = n^\alpha, |B| = n^\beta, |C| = n^\gamma$. Let $X, Y \in \{A, B, C\}$ be two different parts of the graph. For any $v \in X$, $\deg(v, Y) \le O(n^{-\rho} |Y|)$. The $(\alpha, \beta, \gamma, \rho, T)$ Triangle Listing problem asks to list all triangles in $G$ if there are fewer than $T$ triangles or list $T$ distinct triangles in $G$ if there are at least $T$ triangles. 
\end{definition}

Triangle Listing and All-Edges Triangle Listing are strongly related problems. 
For instance, it can be shown that $(\alpha, \beta, \gamma, \rho, t)$-All-Edges Triangle Listing can be reduced to, up to polylogarithmic factors, $(\alpha, \beta, \gamma, \rho, n^{\alpha+\beta - \rho} t)$ Triangle Listing, by a reduction similar to the reduction for Theorem 15 in \cite{duraj2020equivalences}. This reduction means that if we have hardness for $(\alpha, \beta, \gamma, \rho, t)$-All-Edges Triangle Listing, then we also have hardness for $(\alpha, \beta, \gamma, \rho, T)$ Triangle Listing when $T = n^{\alpha+\beta - \rho} t$. However, this argument won't work when $T < n^{\alpha+\beta - \rho}$, since it would require us to set $t < 1$, which doesn't make sense. Therefore, to circumvent this difficulty, we will directly reduce Exact Triangle to both Triangle Listing and All-Edges Triangle Listing.

The triangle listing problems require us to actually list triangles for some edge $e$. However, many problems we consider, including All-Edges Sparse Triangle and AE-Mono$\Delta$, only require the algorithms to output whether some triangle exists containing edge $e$.  Thus, in order to reduce to these problems, we define the following version of triangle detection.  

\begin{definition}[$(\alpha, \beta, \gamma, \rho)$-All-Edges Sparse Triangle]

Given a tripartite graph $G$ whose vertex set is $A \cup B \cup C$, such that $|A| = n^\alpha, |B| = n^\beta, |C| = n^\gamma$. Let $X, Y \in \{A, B, C\}$ be two different parts of the graph. For any $v \in X$, $\deg(v, Y) \le O(n^{-\rho} |Y|)$. The $(\alpha, \beta, \gamma, \rho)$-All-Edges Sparse Triangle problem asks to determine, for each $e \in E \cap (A \times B)$, whether $e$ is in a triangle or not. 
\end{definition}

Now we are ready to present the reduction from Exact Triangle to triangle listing problems. 


\begin{theorem}
\label{thm:exact-to-general-triangle-listing}
Fix constants $0 \le \alpha, \beta, \gamma \le 1, \rho < \min\{\alpha, \beta, \gamma\}$.
There exists an $\tilde{O}(n^{3-\min\{\alpha, \beta, \gamma\} + \rho})$ time randomized reduction from a Zero Triangle instance with $n$ vertices to  $\tilde{O}(n^{3-\alpha-\beta-\gamma + 2\rho})$ instances of  $(\alpha, \beta, \gamma, \rho, 900 n^{\gamma - 2\rho} + 1)$-All-Edges Triangle Listing. Similarly, there is also an $\tilde{O}(n^{3-\min\{\alpha, \beta, \gamma\} + \rho})$ time randomized reduction from a Zero Triangle instance with $n$ vertices to  $\tilde{O}(n^{3-\alpha-\beta-\gamma+2\rho})$ instances of  $(\alpha, \beta, \gamma, \rho, 8100 n^{\alpha+\beta+\gamma - 3\rho} + 1)$ Triangle Listing.
\end{theorem}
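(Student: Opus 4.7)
The plan is to adapt the Pătrașcu-style hashing approach used for reducing $3$SUM to All-Edges Sparse Triangle, generalizing it to the unbalanced tripartite setting. First I would reduce the given Zero$\Delta$ instance on $n$ vertices to an equivalent tripartite Zero$\Delta$ instance with parts $V_1, V_2, V_3$ of size $n$ and arbitrary integer weights in $\{-n^c, \dots, n^c\}$ (standard). I would then pick a random hash function $h : \{-n^c, \dots, n^c\} \to [R]$ with $R = \Theta(n^{\rho})$ that is \emph{almost linear} in the sense that whenever $x+y+z = 0$, the sum $h(x)+h(y)+h(z) \pmod{R}$ lies in a fixed $O(1)$-size set $E$ (e.g., Pătrașcu's hash $h(x)=\lfloor (ax \bmod p)/q\rfloor$), and at the same time distributes random inputs roughly uniformly among the $R$ buckets, with enough independence (via $\Theta(\log n)$-wise independence if needed) to support Chernoff-style tail bounds.

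Next I would partition $V_i$ into $n^{1-\alpha}, n^{1-\beta}, n^{1-\gamma}$ subparts of sizes $n^\alpha, n^\beta, n^\gamma$ respectively. For every choice of subparts $A \subseteq V_1, B \subseteq V_2, C \subseteq V_3$ together with a \emph{valid} triple of buckets $(b_{AB}, b_{BC}, b_{AC})$ satisfying $b_{AB}+b_{BC}+b_{AC} \in E \pmod R$, I would build one instance $G_{A,B,C,b_{AB},b_{BC},b_{AC}}$ whose edges are $\{(u,v)\in A\times B : h(w(u,v))=b_{AB}\}$ together with the analogous edge sets across $B\times C$ and $A\times C$. The number of instances is $n^{3-\alpha-\beta-\gamma}\cdot |E|\cdot R^2 = \tilde{O}(n^{3-\alpha-\beta-\gamma+2\rho})$, and the total work to construct them, counting each edge by the number of instances it lands in, telescopes to $\tilde{O}(n^{3-\min\{\alpha,\beta,\gamma\}+\rho})$.

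The correctness half is immediate: any zero triangle $(p,q,r)$ of the original graph sits inside the unique $(A,B,C)$ containing it, and its weight buckets are automatically valid, so the triangle appears in that instance. To finish, I would invoke the parameterized oracle on each instance and for every triangle it returns, test in $O(1)$ time whether the three weights sum to zero. For this to find the zero triangle, the oracle must list that particular triangle, which is where the bounds $t = 900 n^{\gamma - 2\rho}+1$ and $T = 8100 n^{\alpha+\beta+\gamma-3\rho}+1$ enter. Since each edge is kept with probability $1/R = n^{-\rho}$ in its relevant bucket, the expected degree of a vertex in $A$ into $B$ is $n^{\beta-\rho}$; the expected number of triangles on an $A$-$B$ edge is $n^{\gamma-2\rho}$; and the expected total number of triangles per instance is $n^{\alpha+\beta+\gamma-3\rho}$. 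A Chernoff bound plus a union bound over all $O(n^2)$ edges and all $\tilde{O}(n^{3-\alpha-\beta-\gamma+2\rho})$ instances gives the stated constants (900 and 8100 absorbing the constant from the Chernoff tail), and hence the density and cap bounds required in Definitions~\ref{def:AE-triangle-listing} and~\ref{def:triangle-listing} hold with high probability.

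The main obstacle is the concentration step: Pătrașcu's hash delivers almost-linearity but the joint distribution of hashes across many edges is only weakly independent, so one cannot quote vanilla Chernoff directly. I would resolve this either by running a secondary, fully (or $k$-wise) independent hash on top that does not interact with the zero-triangle structure, or by appealing to tail bounds for $\Theta(\log n)$-wise independent random variables to obtain the $n^{-\Omega(1)}$ failure probability per edge and per instance that is needed for the union bound. A secondary subtlety is handling the regime where $n^{\gamma-2\rho}$ or $n^{\alpha+\beta+\gamma-3\rho}$ is below a constant, which is exactly what the ``$+1$'' in the parameter $t$ (and $T$) is there to absorb so that the Chernoff bound is meaningful uniformly in the parameters.
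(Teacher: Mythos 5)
Your high-level plan (split into unbalanced blocks, hash weights into $n^\rho$ buckets, enumerate the $O(n^{2\rho})$ valid bucket triples, cap the per-edge and global false positives) follows the paper's contour, but there is a critical gap in the hashing step that the paper specifically engineers around. You propose applying a single almost-linear hash $h$ directly to each edge weight $w(u,v)$, so that two edges carrying the same weight always land in the same bucket. This breaks the very bounds you need: an adversarial Zero$\Delta$ instance can place the same value on $w(a,c')$ for all $c'\in C$, making all those edges hash identically; then, conditioned on any pair $(a,b)$ surviving into a bucket triple, the expected number of spurious $c'$ in that instance is $\Theta(n^{\gamma-\rho})$ rather than $n^{\gamma-2\rho}$, and the degree of $a$ into $C$ in the sub-instance can be $\Theta(n^\gamma)$ rather than $O(n^{\gamma-\rho})$. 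The paper avoids this by replacing the edge-weight hash with the per-vertex shifted linear map $w'(u,v) = x\cdot w(u,v) \pm y_u \mp y_v$ over $\mathbb{F}_p$: the $y$'s telescope to zero around every triangle (so zero triangles are preserved exactly, not merely almost-linearly), yet for any two distinct edges incident to a fresh vertex $c'$ the modified weights involve the independent variable $y_{c'}$ with opposite signs, and the analysis (Claims~\ref{cl:few-false-positive} and \ref{cl:global-number-of-triangle}) extracts genuine pairwise/triple-wise independence from this. Your ``secondary hash on top'' suggestion does not repair the correlation unless it is itself keyed by the vertices, at which point you have rediscovered the $y_v$ trick.

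A second, independent problem is your reliance on Chernoff plus a union bound over all edges and all instances to justify the constants $900$ and $8100$ and the degree bound. The theorem explicitly allows the regime $2\rho > \gamma$ (used, e.g., in Corollary~3.8), where $n^{\gamma-2\rho}$ tends to $0$ and the cap $t = 900n^{\gamma-2\rho}+1$ is $O(1)$; a Chernoff tail of strength $n^{-\Omega(1)}$ on a variable with mean $O(1)$ requires the cap to be $\Omega(\log n)$, so the union-bound route cannot give the stated $O(1)$ parameter. The paper instead settles for constant success probability per iteration, using Markov's inequality to bound the number of false positives and the degree of the three vertices of one fixed zero triangle, \emph{deterministically deleting} the (few) vertices whose degree blew up, and then boosts the failure probability down with $O(\log n)$ independent repetitions of the whole construction. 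You would need to add this repetition mechanism (which your write-up omits) and replace the uniform concentration argument with the single-triangle Markov argument to match the stated bounds.
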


\begin{proof}

We will first provide the reduction to All-Edges Triangle Listing. The reduction to Triangle Listing can be obtained via slight modifications. 

\textbf{Step 1: }

Fix a Zero Triangle instance $G$. We can randomly assign its vertices to one of three colors, and only keep edges whose two endpoints have different colors. 
 If we repeat $\Theta(\log n)$ times, a zero triangle in $G$ will remain at least one of the graphs. Thus it suffices to solve Zero Triangle on a tripartite graph.

Suppose $G$ is a tripartite graph with vertex parts $A, B, C$. First, we split vertex parts $A, B, C$ to parts of size $n^\alpha, n^\beta, n^\gamma$ respectively. We could enumerate all $n^{3-\alpha-\beta-\gamma}$ triples of parts, and detect zero triangle within each triple of parts. In the remainder of the reduction, it suffices to reduce each individual unbalanced zero triangle instance of vertex set sizes $|A|=n^\alpha, |B|=n^\beta, |C|=n^\gamma$ to
$(\alpha, \beta, \gamma, \rho, t)$-All-Edges Triangle Listing instances. 

\textbf{Step 2: }

We assume the edge weights $w(\cdot, \cdot)$ in the Zero Triangle instance are integers whose absolute values are bounded by $n^k$ for some constant $k \ge 1$. We pick an arbitrary prime $p$ that is between $100n^k$ and $Dn^k \log n$ for large enough constant $D$. By the Prime Number Theorem, a random integer in this range is a prime with probability $\Omega(1/\log n)$, so it takes $\tilde{O}(1)$ time to find such a prime by randomly picking integers in this range and test its primality. After we determine $p$, we can regard all the weights of the graph as elements in $\mathbb{F}_p$ by taking the weight of every edge modulo $p$. Since the weight of each triangle is in $[-3n^k, 3n^k]$ while $p \ge 100n^k$, the set of zero triangles with respect to the new weights stays the same. 


\textbf{Step 3: }

Let $x \in \mathbb{F}_p$ be a random element from $\mathbb{F}_p$, and let $y_v \in \mathbb{F}_p$ be random elements from $\mathbb{F}_p$ for every $v \in A \cup B \cup C$. 
As illustrated in Figure~\ref{fig:random_weight},
for every edge $e = (a, b) \in E \cap (A \times B)$, we set its new weight to $w'(a,b) = x w(e) - y_b + y_a$; for every edge $e = (a, c) \in E \cap (A \times C)$, we set its new weight to $w'(a, c) = x w(e) - y_a + y_c$; for every edge $e = (b, c) \in E \cap (B \times C)$, we set its new weight to $w'(b, c) = x w(e) - y_c + y_b$. Let the graph with the new weights be $G'$.

\begin{figure}[ht]
\label{fig:randomizing-weight}
\centering
    \begin{tikzpicture}
    [v/.style={circle,draw,inner sep=0pt,minimum size=10pt}]
    \node[v] (1){$c$};
    \node[v] (2) [below left = 3cm and 1.73cm of 1]{$a$};
     \node[v] (3) [below right = 3cm and 1.73cm of 1]{$b$};
    
   \path (1) edge [left] node {$x \cdot w(a, c) - y_a + y_c$} (2);
  \path (1) edge [right] node {$x \cdot w(b, c) - y_c + y_b$} (3);
  \path (2) edge [below] node {$x \cdot w(a, b) - y_b + y_a$} (3);
    \end{tikzpicture}
    \caption{``randomizing" the weights. }
    \label{fig:random_weight}
\end{figure}
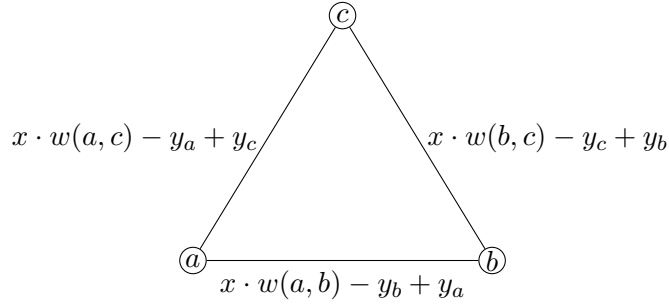
Clearly, as long as $x \ne 0$, the set of zero triangles with weights $w'(\cdot, \cdot)$ is exactly the same as the set of zero triangles with weights $w(\cdot, \cdot)$. Thus, false positives occur with probability $\frac{1}{p} \le \frac{1}{100n^k} \le 0.01$.

\textbf{Step 4: }

We split $\mathbb{F}_p$ into $n^\rho$ contiguous ranges $L_1, \ldots, L_{n^\rho}$, so that each range has size between $\lfloor p/n^\rho \rfloor$ and $\lceil p/n^\rho \rceil$.
We enumerate every triple of $i, j, k$ such that $0 \in L_i + L_j + L_k$. For every pair of $i, j$, the size of $L_i + L_j$ is $O(p/n^\rho)$. In order for $0 \in L_i + L_j + L_k$, we need $L_k \cap -(L_i + L_j) \ne \emptyset$. Since each $L_k$ has size $\Theta(p/n^\rho)$, there can be at most $O(1)$ ranges $L_k$ that intersect with $-(L_i + L_j)$. 
Thus, the total number of such triples is $O(n^{2\rho})$.

For each triple $(i, j, k)$, we consider a subset of edges $E_{i,j,k}$ defined as 
$$\{e \in E \cap (A \times C): w'(e) \in L_i\} \cup \{e \in E \cap (B \times C): w'(e) \in L_j\} \cup \{e \in E \cap (A \times B): w'(e) \in L_k\}.$$
Let $G_{i,j,k} = (A \cup B \cup C, E_{i, j, k})$ be the subgraph of $G'$ with edge set $E_{i, j, k}$. Clearly, if graph $G'$ has a zero triangle, one of $G_{i,j,k}$ will have a zero triangle, and vice versa. 

Now we change $G_{i,j,k}$ so that the degree of every vertex is bounded. For each $v \in A$, if $\deg(v, |B|) > 100 |B| n^{-\rho} + 200$ or $\deg(v, |C|) > 100 |C| n^{-\rho}+ 200$, we remove the vertex $v$ from graph $G_{i, j, k}$. We similarly handle vertices in parts $B$ and $C$ that have large degrees.

Finally, we use an algorithm for $(\alpha, \beta, \gamma, \rho, t)$-All-Edges Triangle Listing for $t = 900 n^{c-2\rho} + 1$ on graph $G_{i, j, k}$ to list up to $900 n^{c-2\rho} + 1$ triangle for each edge $e \in E_{i, j, k} \cap (A \times B)$. If any of these listed triangles is a zero triangle in the original graph $G$, we return YES to the Zero Triangle instance.

\textbf{Step 5:}

We repeat the previous steps $100 \log n$ times. If no zero triangle is found in any of these $100 \log n$ tries, we return NO to the Zero Triangle instance. 

\textbf{Analysis}

It should be clear why Step $1$ through Step $3$ works via the in-text explanations. Now we prove why Step $4$ and Step $5$ work.

\begin{claim}
\label{cl:dont-remove}
Fix any zero triangle $(a, b, c)$ in $G_{i, j, k}$ where $a \in A, b \in B, c \in C$. With probability at least $0.94$, none of $a, b, c$ will be removed in Step $4$ due to having a large degree. 
\end{claim}
\begin{proof}
First consider $\deg(a, B)$. For any $b' \in B \setminus \{b\}$, $w'(a, b') = x \cdot w(a, b') - y_{b'} + y_a$. Conditioned on the fact that $(a, b), (b, c), (c, a) \in E_{i, j, k}$, $y_{b'}$ is a completely new random variable. Therefore, $\Pr[(x, b') \in E_{i, j, k}] = \Pr[w(a, b') - y_{b'} + y_a \in L_k] = \frac{|L_k|}{p} \le \frac{\lceil pn^{-\rho}\rceil}{p} \le n^{-\rho}+\frac{1}{p}.$ Therefore, the expected value of $\deg(a, B)$ can be written as 
$$\mathbb{E}\left[\deg(a, B)\right] = 1 + \mathbb{E}\left[ \sum_{b' \ne b} [(a, b') \in E_{i, j, k}]\right] \le 1+ |B|n^{-\rho} + \frac{|B|}{p} \le 2 + |B|n^{-\rho}.$$
Thus, by Markov's inequality, $\Pr\left[\deg(a, B) > 200 + 100 |B|n^{-\rho}\right] \le 0.01$. We can apply the same argument to $\deg(a, C), \deg(b, A), \deg(b, C), \deg(c, A)$ and $\deg(c, B)$ and take a union bound. Thus, with probability at least $0.94$, all of these degrees will be small enough so that none of $a, b, c$ are removed. 

\end{proof}

We also need to show that listing $900n^{\gamma-2\rho}  + 1$ triangles for each edge will be enough, i.e. there are not too many false positives for each edge $e \in E_{i, j, k} \cap (A \times B)$.

\begin{claim}
\label{cl:few-false-positive}
Fix any zero triangle $(a, b, c) $ in $G_{i, j, k}$ where $a \in A, b \in B$, $c \in C$. The number of vertices $c'$ such that 
\begin{enumerate}[label=\arabic*)]
\item $(a, c'), (b, c') \in E_{i,j,k}$, and
\item $w(a, b) + w(b, c') + w(c', a) \ne 0$ (Recall that $w(\cdot, \cdot)$ is viewed as elements in $\mathbb{F}_p$, so all operations are modulo $p$).
\end{enumerate}
is at most $900 n^{\gamma-2\rho}$ with probability at least $0.99$.
\end{claim}
\begin{proof}
Let $c'$ be any $c' \in C$ such that $w(a, b) + w(b, c') + w(c', a) \ne 0$.
If $(a, c'), (b, c') \in E_{i,j,k}$, then $w'(a, c'), w'(a, c) \in L_i$ and $w'(b, c'), w'(b, c) \in L_j$. Since $L_i$ and $L_j$ are both ranges of size at most $\lceil p n^{-\rho} \rceil$, we must have 
\[
  \begin{cases} 
   w'(a, c) - w'(a, c') \in \left[- p n^{-\rho} ,   pn^{-\rho}  \right] \\
   w'(b, c) - w'(b, c') \in \left[-  pn^{-\rho} ,  pn^{-\rho}  \right].
  \end{cases}
\]
We can expand out the definition of $w'$  to get 
\begin{equation}
    \label{eq:eq1}
  \begin{cases} 
  x \cdot (w(a, c) - w(a, c')) + (y_c - y_{c'})\in [- p n^{-\rho}, pn^{-\rho}]  \\
  x \cdot (w(b, c) - w(b, c')) + (y_{c'} - y_{c}) \in [- p n^{-\rho}, pn^{-\rho}].
  \end{cases}
\end{equation}
Each of the two values in Equation~(\ref{eq:eq1}) is clearly uniformly at random. To show these two values are independent, we consider the sum of these two values, which is $x \cdot (w(a, c) + w(b, c) - w(a, c') - w(b, c'))$. 
Since $w(a, b) + w(b, c') + w(c', a) \ne 0$, while $w(a, b) + w(b, c) + w(c, a) = 0$, we have 
 $w(a, c) + w(b, c) \ne w(a, c') + w(b, c')$. Thus, the sum of the two values in Equation~(\ref{eq:eq1}) is the product of $x$ with a nonzero value. Thus, the sum of these two values is also a uniformly random variable. Conditioned on the sum, $x \cdot (w(a, c) - w(a, c')) + (y_c - y_{c'})$ is also uniformly at random, since the sum does not contain the $y_{c'}$ term. Thus the two values in Equation~(\ref{eq:eq1}) are independent. Therefore, the probability that Equation~(\ref{eq:eq1}) happens is at most $(2n^{-\rho}+\frac{1}{p})^{2} \le 9n^{-2\rho}$. 

Summing over all $c' \in C$, the expected number of $c'$ satisfying conditions 1) and 2) is at most $9n^{\gamma-2\rho} $. Thus, the probability that the number of such $c'$ exceeds $900n^{\gamma-2\rho} $ is at most $0.01$. 

\end{proof}

If an edge $(a, b)$ is in a zero triangle $(a, b, c) $ in the original graph $G$, then this zero triangle is preserved in one instance $G_{i, j, k}$ before removing any vertices. Then we 
will report a triangle containing edge $(a, b)$ as long as 
\begin{enumerate}[label=\arabic*)]
\item we don't remove any of $a, b, c$ in the vertex removal process in Step $4$ (which happens with probability at least $0.94$ by Claim~\ref{cl:dont-remove});
\item the number of nonzero triangles containing $(a, b)$ in $G_{i, j, k}$ is at most $900n^{\gamma-2\rho}$(which happens with probability at least $0.99$ by Claim~\ref{cl:few-false-positive}).
\end{enumerate}
Therefore, by union bound, each iteration reports at least one zero triangle with constant probability if the original graph has a zero triangle. Thus, repeating the iterations for $O(\log n)$ time suffices. 

\textbf{Running Time:}

We split the $n$-node graph  to $n^{3-\alpha - \beta - \gamma}$ unbalanced graphs in Step $1$. For each unbalanced graph, we reduce it to $n^{2\rho}$ instances of $(\alpha, \beta, \gamma, \rho, 900 n^{\gamma - 2\rho} + 1)$-All-Edges Triangle Listing. 
The running time of the reduction is linear (up to poly-logarithmic factors) to the total input size of all the triangle listing instances. Thus, the running time is 
$$\tilde{O}(n^{3-\alpha-\beta-\gamma + 2\rho} \cdot (n^{\alpha+\beta-\rho} + n^{\beta+\gamma-\rho} + n^{\alpha+\gamma-\rho})) = \tilde{O}(n^{3-\min\{\alpha,\beta, \gamma\} + \rho}).$$

\textbf{Proof of the reduction from Zero Triangle to Triangle Listing}

The reduction is largely the same as the previous reduction. The only difference in the reduction is that now in Step 4, we use the $(\alpha, \beta, \gamma, \rho, T)$ Triangle Listing algorithm on graph $G_{i, j, k}$ to list up to $T=8100n^{\alpha+\beta+\gamma - 3\rho} + 1$ triangles, and test whether any of these triangles is a zero triangle. The correctness analysis of this reduction requires a more careful analysis of the expected number of triangles in $G_{i, j, k}$.

\begin{claim}
\label{cl:global-number-of-triangle}
Fix any triple  $(i, j, k)$ so that $G_{i, j, k}$ contains at least one zero triangle with respect to weight $w(\cdot, \cdot)$. With probability at least $0.99$, the number of triangles in $G_{i, j,k}$ that are not zero triangles in the original graph $G$ is at most $8100n^{\alpha+\beta+\gamma -3\rho}$. 
\end{claim}
\begin{proof}

Let $(a, b, c) \in G_{i, j, k}$ be one arbitrary zero triangle in $G_{i, j, k}$.
We analyze the probability that each nonzero triangle $(a', b', c') \in G$ belongs to $G_{i, j, k}$.
In Claim~\ref{cl:few-false-positive}, we already analyzed the case for triangles with $a=a'$ and $b=b'$, in which case the expected number of $(a', b', c')$ inside $G_{i, j, k}$ is at most $9n^{\gamma-2\rho}$. We can similarly bound the expected number of triangles $(a', b', c')$ in $G_{i, j, k}$ that share two vertices with triangle $(a, b, c)$. 
The expected number of such triangles can be shown to be at most $9(n^{\alpha} + n^{\beta} + n^{\gamma})n^{-2\rho}$.

Thus, it suffices to analyze the remaining cases when $(a, b, c)$ and $(a', b', c')$ share one or zero common vertices. 
First, consider triangles $(a', b', c')$ that share one vertex with $(a, b, c)$. Without loss of generality, assume such triangles have the form $(a, b', c')$ for some $b \ne b', c \ne c'$. In order for $(a, b, c)$ and $(a, b', c')$ happen to be in the same edge set $E_{i, j,k}$, we must have 
\[
  \begin{cases} 
  x \cdot (w(a, b) - w(a, b'))-y_b+y_{b'}\in [-pn^{-\rho}, pn^{-\rho}]\\
x \cdot (w(a, c) - w(a, c'))+y_c-y_{c'}\in [-pn^{-\rho}, pn^{-\rho}]\\
x\cdot(w(b,c)-w(b',c')) -y_c+y_b + y_{c'}-y_{b'} \in [-pn^{-\rho}, pn^{-\rho}].\\
  \end{cases}
\]
Let $X_{a, b}, X_{a, c}, X_{b, c}$ be the random variables denoting the three expressions in the above condition respectively. 
We will show that $X_{a, b}, X_{a, c}, X_{b, c}$ are independent. First, we analyze the sum of these three random variables. Consider $X_{a, b} + X_{a, c} + X_{b, c}$, which equals $x \cdot (w(a, b) + w(a, c) + w(b, c) - w(a,b') - w(a, c') - w(b',c'))$. 
Note that $(a, b', c')$ is not a zero triangle, while $(a, b, c)$ is. Thus, the sum is the product of $x$ and a nonzero value, so the result is a uniformly random value. $X_{a, b}$ has an additive term $y_{b'}$ that is independent of $X_{a, b} + X_{a, c} + X_{b, c}$, so $X_{a, b}$ is independent of $X_{a, b} + X_{a, c} + X_{b, c}$. Similarly, $X_{a, c}$ has an additive term $y_{c'}$, which is independent of $(X_{a, b} + X_{a, c} + X_{b, c}, X_{a, b})$. Thus, $X_{a, b} + X_{a, c} + X_{b, c}$, $X_{a, b}$ and $X_{a, c}$ are independent, which implies $X_{a, b}, X_{a, c}$ and $X_{b, c}$ are independent. 

The probability that $X_{a, b} \in [-p n^{-\rho}, p n^{-\rho}]$ is at most $\frac{2p n^{-\rho} + 1}{p} \le 3n^{-\rho}$. Similarly, the probability that $X_{a, c}, X_{b, c} \in [-p n^{-\rho}, p n^{-\rho}]$ are both at most $2n^{-\rho} + 1/p \le 3n^{-\rho}$. Since these three random variables are independent, the probability that all of the three are in $[-p n^{-\rho}, p n^{-\rho}]$ is at most $27n^{-3\rho}$. This means that $(a, b, c)$ and $(a, b', c')$ are in the same edge set $E_{i,j, k}$ with probability at most $27n^{-3\rho}$. More generally, if $(a', b', c')$ share exactly one common vertex with $(a, b, c)$, it will be in $E_{i, j, k}$ with probability at most $27n^{-3\rho}$. 

Triangles that share zero vertices with $(a, b, c)$ can be analyzed similarly, and each of them is in $E_{i, j, k}$ with probability at most $27 n^{-3\rho}$ as well. 

Thus, the expected number of nonzero triangles in $G_{i,j,k}$ is at most $2 \cdot 27n^{\alpha+\beta+\gamma-3\rho} + 9 (n^{\alpha} + n^{\beta} + n^{\gamma})n^{-2\rho}$. By Markov's inequality, with probability at least $0.99$, the number of nonzero triangles in $G_{i, j, k}$ is at most $5400n^{\alpha+\beta+\gamma-3\rho} + 900  (n^{\alpha} + n^{\beta} + n^{\gamma})n^{-2\rho}$. Since $\rho < \min\{\alpha, \beta, \gamma\} $, we have $n^{\alpha-2\rho} , n^{\beta-2\rho} , n^{\gamma-2\rho} \le n^{\alpha+\beta+\gamma-3\rho}$. Therefore, $5400n^{\alpha+\beta+\gamma-3\rho} + 900  (n^{\alpha} + n^{\beta} + n^{\gamma})n^{-2\rho} \le 8100n^{\alpha+\beta+\gamma-3\rho} $. 
\end{proof}

By Claim~\ref{cl:global-number-of-triangle}, we know that is suffices to list $8100n^{\alpha+\beta+\gamma-3\rho} + 1$ triangles in each graph $G_{i, j, k}$.

\end{proof}

Theorem~\ref{thm:exact-to-general-triangle-listing} shows hardness for listing triangles in some special graphs. To show the hardness for detecting triangles, we still need a reduction from triangle listing to triangle detection. Theorem $15$ in \cite{duraj2020equivalences} is such a reduction that reduces listing $O(1)$ triangles for each edge to detecting whether each edge is in a triangle; however, that reduction changes the structure of the graph. Specifically, it does not necessarily reduce an $(\alpha,\beta, \gamma, \rho, O(1))$-All-Edges Triangle Listing instance to $(\alpha,\beta, \gamma, \rho)$-All-Edges Sparse Triangle instances. Thus, we give a new structure-preserving reduction from triangle listing to all edge triangle detection. The reduction adapts the techniques for finding the witnesses of Boolean matrix multiplication~\cite{AlonGMN92, vnotes2}.

\begin{proposition}
\label{prop:triangle-listing-to-detection}
Let $\alpha, \beta, \gamma$ be any positive constants and let $\rho \le \min\{\alpha, \beta, \gamma\}$. 
There exists an $\tilde{O}((n^{\alpha+\beta}+n^{\beta+\gamma}+n^{\gamma+\alpha})n^{-\rho})$ time randomized reduction from an $(\alpha, \beta, \gamma, \rho, O(1))$-All-Edges Triangle Listing instance to $\tilde{O}(1)$ instances of $(\alpha, \beta, \gamma, \rho)$-All-Edges Sparse Triangle. 
\end{proposition}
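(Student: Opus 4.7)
The plan is to adapt the classical randomized witness-finding technique of Alon--Galil--Margalit--Naor (and Seidel) for Boolean matrix multiplication to the sparse tripartite triangle setting. The main obstacle is that every sub-instance produced by the reduction must have the \emph{same} parameters $(\alpha, \beta, \gamma, \rho)$ as the source instance, so I cannot simply reduce to a smaller $|C|$ by sampling vertices. The key trick to circumvent this is a ``zero-out'' construction: given any subset $S \subseteq C$, I build the sub-instance with full vertex set $A \cup B \cup C$ but with every edge incident to $C \setminus S$ deleted. Then $|C|$ remains $n^\gamma$, and since edge deletions can only decrease degrees, every bound $\deg(v,Y) \le O(n^{-\rho}|Y|)$ from the original instance still holds, so each sub-instance is a valid $(\alpha,\beta,\gamma,\rho)$-All-Edges Sparse Triangle instance.

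With this trick, I would fix an integer identifier $\mathrm{id}(c) \in \{0, 1, \ldots, n^\gamma - 1\}$ for each $c \in C$ and, for every scale $k \in \{0, 1, \ldots, \lceil \gamma \log n \rceil\}$, draw $\Theta(\log n)$ independent random samples $S \subseteq C$, each obtained by keeping every $c \in C$ independently with probability $2^{-k}$. For each sampled $S$ I would build the zero-out sub-instance and call the oracle once, and for each bit position $b' \in \{0, \ldots, \lceil \gamma \log n \rceil\}$ I would also build the sub-instance restricted to $S^{(b')} := \{c \in S : \text{bit } b' \text{ of } \mathrm{id}(c) = 1\}$ and call the oracle again. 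This totals $O(\log^3 n) = \tilde{O}(1)$ oracle calls. To list witnesses for an edge $(a,b) \in E \cap (A \times B)$, for each sample $S$ whose oracle call reports that $(a,b)$ is in a triangle, I would read off the bits of a candidate witness $c$ from the bit-oracle answers on the $S^{(b')}$, and then verify in $O(1)$ time that $(a,c),(b,c) \in E$; candidates failing verification are discarded, and I collect up to $t$ distinct successful ones per edge.

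Correctness rests on a standard sampling argument: if $(a,b)$ has $w \le t = O(1)$ witnesses in $C$, then at the scale $k$ with $2^{-k} \approx 1/w$ any specific witness $c^\star$ is the unique witness of $(a,b)$ in a single sample with probability $\Omega(1/w) = \Omega(1)$; when this happens, the bit-decoded candidate is exactly $c^\star$, which then passes verification. A union bound over the $\Theta(\log n)$ samples per scale, the $O(1)$ witnesses per edge, and the at-most $n^{\alpha+\beta}$ edges shows that every witness of every edge is isolated in some sample (and hence recovered) with high probability. The running time is dominated by constructing the $\tilde{O}(1)$ sub-instances, each of size $O(|E|) = O((n^{\alpha+\beta}+n^{\beta+\gamma}+n^{\gamma+\alpha}) n^{-\rho})$, matching the claimed bound. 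The conceptually hardest step was the parameter-preservation constraint; once the zero-out construction is in place, the rest is a direct translation of the BMM witness-finding machinery.
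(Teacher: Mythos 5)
Your proposal is correct and follows essentially the same two ideas as the paper's proof: random subsampling of the $C$ part at $O(\log n)$ scales with $\Theta(\log n)$ repetitions each to isolate individual witnesses, followed by bit-by-bit recovery of the isolated witness via additional Sparse Triangle oracle calls on bit-restricted subsets, with a final $O(1)$-time verification step. The paper packages the second idea as a reduction to an intermediate ``All-Edges Unique Triangle Listing'' problem, and it drops vertices rather than deleting their incident edges (your ``zero-out'' trick makes the parameter-preservation explicit, which the paper glosses over), but these are presentational differences and the underlying argument is the same.
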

\begin{proof}
The reduction proceeds in two parts. We define an intermediate problem called $(\alpha, \beta, \gamma, \rho)$-All-Edges Unique Triangle Listing, where we are given a graph that shares the same structure as $(\alpha, \beta, \gamma, \rho)$-All-Edges Sparse Triangle instances, and
we seek an algorithm that outputs a triangle for every edge $e \in E \cap (A \times B)$ only if there is a unique triangle containing edge $e$; otherwise, the algorithm can output $0$ for edge $e$. 

In the first part, we show a reduction from $(\alpha, \beta, \gamma, \rho, O(1))$-All-Edges Triangle Listing to $(\alpha, \beta, \gamma, \rho)$-All-Edges Unique Triangle Listing. In the second part, we show a reduction from $(\alpha, \beta, \gamma, \rho)$-All-Edges Unique Triangle Listing to $(\alpha, \beta, \gamma, \rho)$-All-Edges Sparse Triangle. 

Fix an $(\alpha, \beta, \gamma, \rho, k)$-All-Edges Triangle Listing instance $G$ for some constant $k$. For each integer $\ell$ from $1$ to $c \log n$, we perform a stage. In each stage $\ell$, we repeat the following iterations for $\Theta(k^2 \log n)$ times. 
In each iteration, we create a new graph $G'$ that contains parts $A$ and $B$ and a subset $C'$ of part $C$. We obtain $C'$ by independently keeping every vertex $c \in C$ with probability $\frac{1}{2^\ell}$. Then $G'$ is the induced subgraph of $G$ with vertex set $A\cup B \cup C'$. 
We run an algorithm for $(\alpha, \beta, \gamma, \rho)$-All-Edges Unique Triangle Listing on $G'$ to list at most one triangle for each edge. If the algorithm lists a triangle for edge $(a, b)$, we add this triangle to a set $S_{(a, b)}$ that contains a list of found triangles containing edge $(a, b)$. After all the rounds, we output up to $k$ distinct triangles from each $S_{(a, b)}$. 

To show the correctness for this algorithm, we show that if the actual number of triangles containing edge $(a, b)$ is $\Delta$ for some  $\Delta \in [2^{\ell - 1},  2^{\ell}]$, then in stage $\ell$ we will list all or up to $k$ triangles containing $(a, b)$ with high probability. In every iteration, we pick a random subset $C' \subseteq C$. For every $(a, b)$, suppose $|S_{(a, b)}| < k$ and there are more triangles containing $(a, b)$ that have not been found. 
The probability that we keep a unique triangle not in $S_{(a, b)}$ for edge $(a, b)$ is at least $(\Delta-|S_{(a, b)}|) \cdot \frac{1}{2^{\ell}} (1-\frac{1}{2^{\ell}})^{\Delta - 1} = \Omega(1/k)$. Thus, after every $\Theta(k \log n)$ iterations, we will find a new triangle for edge $(a, b)$ with high probability if $|S_{(a, b)}| < \min\{k, \Delta\}$. Therefore, we need $\Theta(k^2 \log n)$ rounds in total.

Now we show the second part of the reduction, which reduces from $(\alpha, \beta, \gamma, \rho)$-All-Edges Unique Triangle Listing to $(\alpha, \beta, \gamma, \rho)$-All-Edges Sparse Triangle. Let $G$ be a graph on which we want to solve $(\alpha, \beta, \gamma, \rho)$-All-Edges Unique Triangle Listing.
For every $1 \le i \le \gamma \log n$, we create a graph $G^i$ that contains all vertices of $A$ and $B$, but only those vertices from $C$ whose $i$-th bit in its binary representation is $1$. Then we run an algorithm for $(\alpha, \beta, \gamma, \rho)$-All-Edges Sparse Triangle on graph $G^i$.
Suppose $(a, b)$ is in a unique triangle $(a, b, c)$. Then if $(a, b)$ is in a triangle in $G^i$, the $i$-th bit of $c$ must be $1$; otherwise, the $i$-th bit of $c$ must be $0$. Therefore, we will be able to determine $c$ after all the iterations. If $(a, b)$ is not in a unique triangle, then the value $c$ we determine might not form a triangle. In this case, we can determine that $(a, b, c)$ does not form a triangle and output $0$ for edge $(a, b)$.

\end{proof}

\begin{corollary}
\label{cor:exact-to-sparse}
There exists a reduction from Exact Triangle to $\tilde{O}(n)$ instances of All-Edges Sparse Triangle of $O(n^{1.5})$ edges. Thus, assuming the Exact Triangle hypothesis, there is no $O(m^{4/3-\epsilon})$ time algorithm for All-Edges Sparse Triangle for $\epsilon > 0$. 
\end{corollary}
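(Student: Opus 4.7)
The plan is to instantiate Theorem~\ref{thm:exact-to-general-triangle-listing} at the balanced point $\alpha=\beta=\gamma=1$ with density parameter $\rho=1/2$, and then compose it with Proposition~\ref{prop:triangle-listing-to-detection} to turn the resulting triangle listing instances into triangle detection instances on the same sparse graphs. Since Exact Triangle is sub-cubic equivalent to Zero Triangle, it suffices to reduce Zero Triangle to AE-Sparse$\Delta$ on $n^{1.5}$-edge graphs.

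Plugging $\alpha=\beta=\gamma=1$, $\rho=1/2$ into Theorem~\ref{thm:exact-to-general-triangle-listing}, the listing cap becomes $t = 900\,n^{\gamma-2\rho}+1 = O(1)$, the number of produced instances is $\tilde O(n^{3-\alpha-\beta-\gamma+2\rho}) = \tilde O(n)$, and the reduction's own running time is $\tilde O(n^{3-\min\{\alpha,\beta,\gamma\}+\rho}) = \tilde O(n^{2.5})$. Each produced instance is an $(1,1,1,1/2,O(1))$-All-Edges Triangle Listing instance on a tripartite graph whose parts have size $n$ and whose vertices have degree $O(n^{1/2})$ into every other part, so it has $O(n^{1.5})$ edges.

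Next, I would apply Proposition~\ref{prop:triangle-listing-to-detection} to each of these listing instances with the same parameters $(\alpha,\beta,\gamma,\rho)=(1,1,1,1/2)$, which, since $t=O(1)$, reduces each to $\tilde O(1)$ instances of $(1,1,1,1/2)$-All-Edges Sparse Triangle in $\tilde O(n^{1.5})$ time per instance (well below the $\tilde O(n^{2.5})$ budget already spent). Composing both stages gives $\tilde O(n)$ total instances of AE-Sparse$\Delta$, each on a graph with $O(n^{1.5})$ edges. The total reduction time is $\tilde O(n^{2.5})$, so the first sentence of the corollary is established.

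For the lower bound, I would simply propagate the hypothetical improvement: if some algorithm solved AE-Sparse$\Delta$ on $m$-edge graphs in $O(m^{4/3-\epsilon})$ time, then the $\tilde O(n)$ instances of size $m=\Theta(n^{1.5})$ could be solved in total time $\tilde O\bigl(n\cdot n^{1.5(4/3-\epsilon)}\bigr) = \tilde O(n^{3-1.5\epsilon})$, and together with the $\tilde O(n^{2.5})$ reduction overhead this yields a truly sub-cubic algorithm for Exact Triangle, contradicting the Exact Triangle hypothesis. The only step that required any real work was Theorem~\ref{thm:exact-to-general-triangle-listing} and Proposition~\ref{prop:triangle-listing-to-detection}; here the main (minor) care is to verify that the degree-bound structure imposed by the $(\alpha,\beta,\gamma,\rho)$-parameterization is preserved by the listing-to-detection reduction, which is exactly what Proposition~\ref{prop:triangle-listing-to-detection} was designed to guarantee.
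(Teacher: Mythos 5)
Your proposal is correct and follows exactly the paper's argument: instantiate Theorem~\ref{thm:exact-to-general-triangle-listing} with $\alpha=\beta=\gamma=1$ and $\rho=1/2$ to get $\tilde O(n)$ instances of $(1,1,1,1/2,O(1))$-All-Edges Triangle Listing in $\tilde O(n^{2.5})$ time, convert each to AE-Sparse$\Delta$ via Proposition~\ref{prop:triangle-listing-to-detection}, and propagate a hypothetical $O(m^{4/3-\epsilon})$-time algorithm to get an $\tilde O(n^{3-1.5\epsilon})$-time Exact Triangle algorithm. The parameter choices, composition of lemmas, and final calculation all match the paper's proof.
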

\begin{proof}
If we set $\alpha = \beta = \gamma = 1$ and $\rho=0.5$ in Theorem~\ref{thm:exact-to-general-triangle-listing}, we get an $\tilde{O}(n^{2.5})$ time reduction from Exact Triangle to $\tilde{O}(n)$ instances of $(1, 1, 1, 0.5, O(1))$-All-Edges Triangle Listing. By Proposition~\ref{prop:triangle-listing-to-detection}, these instances further reduce to $\tilde{O}(n)$ instances of $(1, 1, 1, 0.5)$-All-Edges Sparse Triangle. These instances can be solved by an algorithm for All-Edges  Sparse Triangle with $O(n^{1.5})$ edges. 

Thus, if there is an $O(m^{4/3-\epsilon})$ time algorithm for All-Edges Sparse Triangle, we can use it to solve Exact Triangle in $\tilde{O}(n^{2.5} + n \cdot (n^{1.5})^{4/3 - \epsilon}) = \tilde{O}(n^{2.5}+n^{3 - 1.5\epsilon})$ time, breaking the Exact Triangle hypothesis. 
\end{proof}

\begin{corollary}
\label{cor:mono-triangle-hard}
Assuming the Exact Triangle hypothesis, there is no $O(n^{2.5 -\epsilon})$ algorithm for AE-Mono$\Delta$  on $n$-node graphs for $\epsilon > 0$.
\end{corollary}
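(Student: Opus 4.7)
The plan is to chain Corollary~\ref{cor:exact-to-sparse} with Theorem~\ref{thm:sparse_to_mono}. Corollary~\ref{cor:exact-to-sparse} reduces an $n$-vertex Exact Triangle instance to $\tilde{O}(n)$ instances of All-Edges Sparse Triangle, each on a graph with $n$ vertices and $m = O(n^{1.5})$ edges, and this reduction runs in $\tilde{O}(n^{2.5})$ time. The crucial observation is that with $m = n^{1.5}$, we have $m^{2/3} = n$, so the vertex count of each AE-Sparse$\Delta$ instance is exactly the quantity $m^{2/3}$ appearing in the hypothesis of Theorem~\ref{thm:sparse_to_mono}. Thus the AE-Sparse$\Delta$ instances produced by Corollary~\ref{cor:exact-to-sparse} are precisely of the right shape to be bundled by Theorem~\ref{thm:sparse_to_mono}.

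First, I would apply Corollary~\ref{cor:exact-to-sparse} to obtain the $\tilde{O}(n)$ AE-Sparse$\Delta$ sub-instances. Next, since Theorem~\ref{thm:sparse_to_mono} packages $m^{1/3} = n^{0.5}$ AE-Sparse$\Delta$ instances into a single AE-Mono$\Delta$ instance on $O(m^{2/3}) = O(n)$ vertices (up to poly-logarithmic overhead), I would partition the $\tilde{O}(n)$ AE-Sparse$\Delta$ instances into $\tilde{O}(n / n^{0.5}) = \tilde{O}(n^{0.5})$ groups, each of which becomes one AE-Mono$\Delta$ instance on $\tilde{O}(n)$ vertices. Suppose towards contradiction there is an $O(N^{2.5-\epsilon})$ time algorithm for AE-Mono$\Delta$ on $N$-vertex graphs. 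Then the total time to solve all the AE-Mono$\Delta$ instances is $\tilde{O}(n^{0.5}) \cdot O(n^{2.5-\epsilon}) = \tilde{O}(n^{3-\epsilon})$, and adding the $\tilde{O}(n^{2.5})$ reduction overhead still gives a total of $\tilde{O}(n^{3-\epsilon'})$ for some $\epsilon' > 0$ (taking $\epsilon' = \min\{\epsilon, 1/2\}/2$, say). This contradicts the Exact Triangle hypothesis.

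There is no real obstacle here: the arithmetic of parameters ($m = n^{1.5}$, $m^{1/3} = n^{1/2}$, $m^{2/3} = n$) lines up exactly so that Theorem~\ref{thm:sparse_to_mono} can be applied without any loss, and the count $\tilde{O}(n)$ of AE-Sparse$\Delta$ instances is large enough to feed $\tilde{O}(n^{1/2})$ separate AE-Mono$\Delta$ problems. The mildest care needed is bookkeeping of poly-logarithmic factors (both from Corollary~\ref{cor:exact-to-sparse} and from the equivalence of Theorem~\ref{thm:sparse_to_mono}), and verifying that when $\epsilon \ge 1/2$ the $\tilde{O}(n^{2.5})$ reduction time still yields a truly subcubic total runtime; both are routine.
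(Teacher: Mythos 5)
Your proposal is correct and follows exactly the paper's argument: apply Corollary~\ref{cor:exact-to-sparse} and then bundle the resulting AE-Sparse$\Delta$ instances via Theorem~\ref{thm:sparse_to_mono} into $\tilde{O}(\sqrt{n})$ AE-Mono$\Delta$ instances on $O(n)$ vertices each, yielding a total time $\tilde{O}(n^{2.5} + \sqrt{n}\cdot n^{2.5-\epsilon})$ that would violate the Exact Triangle hypothesis. The explicit bookkeeping you do ($m = n^{1.5}$, $m^{1/3} = n^{1/2}$, $m^{2/3} = n$, and the handling of $\epsilon \ge 1/2$) is sound and just makes visible what the paper leaves implicit.
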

\begin{proof}
Combining Corollary~\ref{cor:exact-to-sparse} and Theorem~\ref{thm:sparse_to_mono}, we get an $\tilde{O}(n^{2.5})$ time reduction from Exact Triangle of size $n$ to $\tilde{O}(\sqrt{n})$ instances of AE-Mono$\Delta$ of size $O(n)$. Thus, if there is an $O(n^{2.5-\epsilon})$ time algorithm for AE-Mono$\Delta$, we can use it to solve Exact Triangle in $\tilde{O}(n^{2.5}+\sqrt{n} \cdot n^{2.5-\epsilon}) = \tilde{O}(n^{2.5}+n^{3-\epsilon})$ time, breaking the Exact Triangle hypothesis.  
\end{proof}

\begin{corollary}
Let $\mathbb{A}$ be an algorithm for All-Edges Sparse Triangle for  $n$-node graphs where every node has degree at most $d=n^\delta$ for some $0 < \delta \le 0.5$.
Assuming the Exact Triangle hypothesis, $\mathbb{A}$ cannot run in $O(n^{1-\epsilon}d^2)$ for $\epsilon > 0$. 
\end{corollary}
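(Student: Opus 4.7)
The plan is to instantiate Theorem~\ref{thm:exact-to-general-triangle-listing} with $\alpha=\beta=\gamma=1$ and $\rho=1-\delta$, chain it through Proposition~\ref{prop:triangle-listing-to-detection}, and finally call $\mathbb{A}$ on the resulting sparse instances. Since $0 < \delta \le 0.5$ we have $\rho \in [0.5, 1)$, so the requirement $\rho < \min\{\alpha,\beta,\gamma\} = 1$ is satisfied. The theorem then yields an $\tilde{O}(n^{3-\delta})$-time randomized reduction from Exact Triangle on $n$ vertices to $\tilde{O}(n^{2-2\delta})$ instances of $(1,1,1,1-\delta,t)$-All-Edges Triangle Listing, where the per-edge listing threshold is $t = 900\,n^{\gamma-2\rho}+1 = 900\,n^{2\delta-1}+1 = O(1)$, which is constant precisely because $\delta \le 0.5$.

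This $O(1)$ threshold is exactly what is needed to feed into Proposition~\ref{prop:triangle-listing-to-detection}, which further reduces each listing instance, in $\tilde{O}((n^{\alpha+\beta}+n^{\beta+\gamma}+n^{\gamma+\alpha})n^{-\rho}) = \tilde{O}(n^{1+\delta})$ time, to $\tilde{O}(1)$ instances of $(1,1,1,1-\delta)$-All-Edges Sparse Triangle. Each such sparse instance has $N = \Theta(n)$ vertices, and every vertex has degree at most $O(n^{1-\rho}) = O(n^\delta) = O(N^\delta)$, so algorithm $\mathbb{A}$ applies and runs in $O(N^{1-\epsilon}(N^\delta)^2) = O(n^{1+2\delta-\epsilon})$ per instance.

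Adding up the reduction overhead of Theorem~\ref{thm:exact-to-general-triangle-listing}, the per-listing-instance cost of Proposition~\ref{prop:triangle-listing-to-detection}, and the calls to $\mathbb{A}$, the total running time of the resulting Exact Triangle algorithm is
\[
\tilde{O}(n^{3-\delta}) \;+\; \tilde{O}(n^{2-2\delta})\cdot\tilde{O}(n^{1+\delta}) \;+\; \tilde{O}(n^{2-2\delta})\cdot O(n^{1+2\delta-\epsilon}) \;=\; \tilde{O}(n^{3-\delta}) + \tilde{O}(n^{3-\epsilon}),
\]
which is truly subcubic since both $\delta > 0$ and $\epsilon > 0$, contradicting the Exact Triangle hypothesis. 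I do not expect any real obstacle here: the parameter choice $\rho = 1-\delta$ is forced by wanting the target degree bound to be $n^\delta$, and the constraint $\delta \le 0.5$ guarantees that the per-edge listing threshold degenerates to a constant so that Proposition~\ref{prop:triangle-listing-to-detection} applies without blow-up. The only bookkeeping item is verifying that the $\tilde{O}(n^{3-\delta})$ overhead of the initial reduction is itself already truly subcubic, which it is whenever $\delta > 0$.
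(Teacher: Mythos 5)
Your proposal is correct and takes essentially the same approach as the paper: setting $\alpha=\beta=\gamma=1$, $\rho=1-\delta$ in Theorem~\ref{thm:exact-to-general-triangle-listing}, chaining through Proposition~\ref{prop:triangle-listing-to-detection}, and observing that the resulting sparse instances have degree $O(n^\delta)$. You fill in slightly more bookkeeping than the paper (explicitly verifying $t=O(1)$ from $\delta\le 0.5$ and tallying the reduction overheads), but the argument is the same.
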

\begin{proof}
We set $\alpha=\beta=\gamma=1$ and $\rho =1-\delta$ in Theorem~\ref{thm:exact-to-general-triangle-listing}. This gives an $\tilde{O}(n^{3-\delta})$ time reduction from Exact Triangle to $\tilde{O}(n^{2-2\delta})$ instances of $(1, 1, 1, \rho, O(1))$-All-Edges Triangle Listing. Thus, each of these instances requires $n^{1+2\delta - o(1)}$ time. By Proposition~\ref{prop:triangle-listing-to-detection}, there is a reduction from $(1, 1, 1, \rho, O(1))$-All-Edges Triangle Listing to $\tilde{O}(1)$ instances of $(1, 1, 1, \rho)$-All-Edges Sparse Triangle. Therefore, $(1, 1, 1, \rho)$-All-Edges Sparse Triangle -- All-Edges Sparse Triangle in graphs with maximum degree $n^\delta$ -- also requires $n^{1+2\delta - o(1)}$ time.
\end{proof}


\begin{corollary}
\label{cor:setdisjointness}
For any constant $0 < \theta < 1$, let $\mathbb{A}$ be an algorithm for offline SetDisjointness where $|U| = \Theta(N^{2-2\theta})$, $|\mathcal{F}| = \Theta(N)$, each set in $\mathcal{F}$ has at most $O(N^{1-\theta})$ elements from $U$, and $q=\Theta(N^{1+\theta})$. Assuming the Exact Triangle hypothesis, $\mathbb{A}$ cannot run in $O(N^{2-\epsilon})$ for $\epsilon > 0$. 
\end{corollary}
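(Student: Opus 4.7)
The plan is to view an offline SetDisjointness instance with the given parameters as an instance of $(\alpha,\beta,\gamma,\rho)$-All-Edges Sparse Triangle and then chain Theorem~\ref{thm:exact-to-general-triangle-listing} with Proposition~\ref{prop:triangle-listing-to-detection}. The formalism match is direct: form a tripartite graph with $A$ and $B$ both copies of $\mathcal{F}$ and $C=U$, put an $A$-$C$ (resp.\ $B$-$C$) edge between $s$ and $u$ iff $u\in s$, and put an $A$-$B$ edge between $s,s'$ iff $(s,s')$ is a query. Then $S\cap S'\neq\emptyset$ iff the edge $(S,S')$ lies on a triangle, and an algorithm for the resulting All-Edges Sparse Triangle instance solves the SetDisjointness queries.

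Next I would match parameters. I want $n^{\alpha}=n^{\beta}=N$, $n^{\gamma}=N^{2-2\theta}$, and $n^{\rho}=N^{1-\theta}$, so that the bound $\deg(v,Y)\le O(n^{-\rho}|Y|)$ between $\mathcal{F}$ and $U$ yields exactly the $O(N^{1-\theta})$ set-size bound, and $O(n^{\alpha+\beta-\rho})=O(N^{1+\theta})$ matches $q=\Theta(N^{1+\theta})$. Since Theorem~\ref{thm:exact-to-general-triangle-listing} requires $\alpha,\beta,\gamma\le 1$ and $\rho<\min\{\alpha,\beta,\gamma\}$, I would set $n=\max\{N,N^{2-2\theta}\}$ and split: if $\theta\ge 1/2$ then $n=N$ and $(\alpha,\beta,\gamma,\rho)=(1,1,2-2\theta,1-\theta)$; if $\theta<1/2$ then $n=N^{2-2\theta}$ and $(\alpha,\beta,\gamma,\rho)=(1/(2-2\theta),1/(2-2\theta),1,1/2)$. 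A quick check shows that in both regimes $\rho<\min\{\alpha,\beta,\gamma\}$ whenever $\theta\in(0,1)$, and crucially $\gamma-2\rho=0$, so $t=900n^{\gamma-2\rho}+1=O(1)$. Thus Proposition~\ref{prop:triangle-listing-to-detection} applies and yields $\tilde O(1)$ $(\alpha,\beta,\gamma,\rho)$-All-Edges Sparse Triangle instances per listing instance.

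Chaining the two reductions, Exact Triangle on $n$ vertices reduces in time $\tilde O(n^{3-\min\{\alpha,\beta,\gamma\}+\rho})$ to $\tilde O(n^{3-\alpha-\beta-\gamma+2\rho})$ instances of the parameterized sparse-triangle problem. The remaining step is purely arithmetic: assume $\mathbb{A}$ runs in $O(N^{2-\epsilon})$. In Case A, $3-\alpha-\beta-\gamma+2\rho=1$ and $3-\min\{\alpha,\beta,\gamma\}+\rho=2+\theta$, so the total time for Exact Triangle becomes $\tilde O(N^{2+\theta}+N\cdot N^{2-\epsilon})=\tilde O(n^{3-\min\{\epsilon,1-\theta\}})$. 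In Case B, $3-\alpha-\beta-\gamma+2\rho=3-2/(2-2\theta)$, and translating $N=n^{1/(2-2\theta)}$ the total cost works out to $\tilde O(n^{3-\epsilon/(2-2\theta)}+n^{3.5-1/(2-2\theta)})$, which is truly sub-cubic in $n$ for every $\theta\in(0,1)$ and $\epsilon>0$. Either way, the Exact Triangle hypothesis is violated.

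The only real work is bookkeeping: simultaneously matching the four SetDisjointness parameters $(|U|,|\mathcal{F}|,\text{set size},q)$ to the three graph sizes and the single sparsity $\rho$, and then verifying in both regimes of $\theta$ that $\rho<\min\{\alpha,\beta,\gamma\}$ and $\gamma=2\rho$ (so the output of Theorem~\ref{thm:exact-to-general-triangle-listing} is a listing problem with constant $t$ that we can feed into Proposition~\ref{prop:triangle-listing-to-detection}). No new probabilistic or algorithmic idea beyond the already-proved theorem and proposition is needed.
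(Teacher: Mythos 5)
Your proof is correct and follows essentially the same route as the paper: encode the offline SetDisjointness instance as a tripartite all-edges sparse-triangle instance, match the four parameters $(|U|,|\mathcal{F}|,\text{set size},q)$ to $(\alpha,\beta,\gamma,\rho)$ with $\gamma-2\rho=0$ so that $t=O(1)$, then chain Theorem~\ref{thm:exact-to-general-triangle-listing} with Proposition~\ref{prop:triangle-listing-to-detection} and translate the per-instance lower bound to $N^{2-o(1)}$. The only difference is bookkeeping: the paper fixes $n=N^{2}$, i.e.\ $\alpha=\beta=1/2$, $\gamma=1-\theta$, $\rho=1/2-\theta/2$, which satisfies $0\le\gamma\le1$ and $\rho<\min\{\alpha,\beta,\gamma\}$ for every $\theta\in(0,1)$ in one shot, whereas your insistence on $\max\{\alpha,\beta,\gamma\}=1$ forces the harmless split into $\theta\ge1/2$ and $\theta<1/2$.
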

\begin{proof}
Set $\alpha = \beta=0.5$, $\gamma=1-\theta$ and $\rho =0.5 - \theta / 2$ in Theorem~\ref{thm:exact-to-general-triangle-listing}. Thus there is a subcubic time reduction from Exact Triangle to $\tilde{O}(n^{2})$ instances of $(0.5, 0.5, 1 - \theta, 0.5 - \theta / 2, O(1))$-All-Edges Triangle Listing. Thus, assuming the Exact Triangle hypothesis, $(0.5, 0.5, 1 - \theta, 0.5 - \theta / 2, O(1))$-All-Edges Triangle Listing requires $n^{1-o(1)}$ time. By Proposition~\ref{prop:triangle-listing-to-detection}, $(0.5, 0.5, 1 - \theta, 0.5 - \theta / 2)$-All-Edges Sparse Triangle also requires $n^{1-o(1)}$ time. As realized in previous works (e.g. \cite{kopelowitz2016higher}), All-Edges Sparse Triangle can be solved using SetDisjointness. Using the language of the $(0.5, 0.5, 1 - \theta, 0.5 - \theta / 2)$-All-Edges Sparse Triangle problem, we can set the universe $U$ to the vertex set $C$, and set the family $\mathcal{F}$ to $A \sqcup B$. We add an element $u \in U$ to $S \in \mathcal{F}$ if the corresponding vertex of $S$ has an edge with the corresponding vertex of $u$. Finally, for any edge $(a, b) \in A\times B$, we add a query for the two sets corresponding to vertices $a$ and $b$. Then clearly, $(a, b)$ is in a triangle if and only if their corresponding sets intersect. 
Setting $N = \sqrt{n}$ finishes the proof.  
\end{proof}


\begin{corollary}
For any constants $0 \le \theta < 1$ and $0 < \delta$, let $\mathbb{A}$ be an algorithm for offline SetIntersection where $|U| = \Theta(N^{1+\delta - \theta})$, $|\mathcal{F}| = \Theta(\sqrt{N^{1+\delta + \theta}})$, each set in $\mathcal{F}$ has at most $O(N^{1-\theta})$ elements from $U$, $q=\Theta(N^{1+\theta})$, and $T=O(N^{2-\delta})$. Assuming the Exact Triangle hypothesis, $\mathbb{A}$ cannot run in $O(N^{2-\epsilon})$ for $\epsilon > 0$. 
\end{corollary}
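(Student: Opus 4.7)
The plan is to mirror the proof of the preceding SetDisjointness corollary, except that I invoke the Triangle Listing half of Theorem~\ref{thm:exact-to-general-triangle-listing} rather than the All-Edges Sparse Triangle half, because SetIntersection is naturally output-sensitive in the same way that $(\alpha,\beta,\gamma,\rho,T)$ Triangle Listing is. The first step is parameter selection: I identify the third part $C$ with the universe $U$ and the parts $A,B$ (each of size $n^\alpha=n^\beta$) with the family $\mathcal{F}$. Matching the stated profile forces $n^\gamma = N^{1+\delta-\theta}$, $n^\alpha = \sqrt{N^{1+\delta+\theta}}$, $n^{\gamma-\rho} = N^{1-\theta}$, and $n^{\alpha+\beta-\rho} = N^{1+\theta}$; these are consistent and jointly imply $n^\rho = N^\delta$. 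I then take $n = N^s$ with $s = \max\{(1+\delta+\theta)/2,\; 1+\delta-\theta\}$ so that the larger of $n^\gamma$ and $n^\alpha$ is normalized to $n^1$. A direct check using $0\le \theta<1$, $0<\delta$ (and the mild nondegeneracy $\delta<1+\theta$, beyond which the corollary becomes vacuous because the incidence budget forces the sets to be essentially empty) yields $\alpha,\beta,\gamma\in[0,1]$ and $\rho < \min\{\alpha,\beta,\gamma\}$.

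Next, I invoke Theorem~\ref{thm:exact-to-general-triangle-listing} to reduce $n$-vertex Exact Triangle to $\tilde{O}(n^{3-\alpha-\beta-\gamma+2\rho})$ instances of $(\alpha,\beta,\gamma,\rho,T')$ Triangle Listing with threshold $T' = 8100\, n^{\alpha+\beta+\gamma-3\rho}+1 = O(N^{2-\delta})$, which matches the SetIntersection output budget $T$. Each such instance converts to a single SetIntersection query batch by taking the set associated to $a\in A$ (respectively $b\in B$) to be its $C$-neighborhood and issuing one query per edge $(a,b)\in E\cap(A\times B)$; a triangle with apex $c$ is precisely an element of the two corresponding sets' intersection, so any SetIntersection algorithm meeting the stipulated parameters recovers the required triangles.

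The arithmetic punchline, which holds in both regimes for $s$, is the identity $s(\alpha+\beta+\gamma-2\rho) = 2$. Assuming an $O(N^{2-\epsilon})$ algorithm for SetIntersection, the total subproblem time becomes $\tilde{O}(n^{3-\alpha-\beta-\gamma+2\rho}\cdot N^{2-\epsilon}) = \tilde{O}(N^{3s-\epsilon}) = \tilde{O}(n^{3-\epsilon/s})$, while the reduction itself costs $\tilde{O}(n^{3-(\min\{\alpha,\beta,\gamma\}-\rho)})$, which is also truly subcubic since $\rho<\min\{\alpha,\beta,\gamma\}$. Combining the two contradicts the Exact Triangle hypothesis. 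The main obstacle is the bookkeeping in Step~1: keeping $\alpha,\beta,\gamma\in[0,1]$ and $\rho<\min\{\alpha,\beta,\gamma\}$ across the full parameter range forces the two-case definition of $s$, and the closing inequalities (notably the identity above and $\min\{\alpha,\beta,\gamma\}>\rho$) must be verified in both cases; beyond that, the argument is an essentially mechanical specialization of the master reduction.
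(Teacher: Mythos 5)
Your proof is correct in its core and takes the same route as the paper: reduce Exact Triangle via the Triangle Listing half of Theorem~\ref{thm:exact-to-general-triangle-listing}, match the five parameters to the prescribed SetIntersection profile, and observe that the per-instance time $N^{2-\epsilon}$ times the $n^{3-\alpha-\beta-\gamma+2\rho}$ instance count is truly subcubic in $n$. Your identity $s(\alpha+\beta+\gamma-2\rho)=2$ (which indeed is forced by the matching constraints, independent of $s$) is exactly what the paper is implicitly using.

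Two smaller points are worth fixing. First, your claim that the two-case definition of $s$ is ``forced'' is not quite right. You chose the smallest $s$ for which $\max\{\alpha,\gamma\}=1$, which is one legal normalization, but the theorem only requires $\alpha,\beta,\gamma\le 1$, and the paper's choice $s=1+\delta$ (equivalently $N=n^{1/(1+\delta)}$, giving $\alpha=\beta=\tfrac{1}{2}+\tfrac{\theta}{2+2\delta}$, $\gamma=1-\tfrac{\theta}{1+\delta}$, $\rho=\tfrac{\delta}{1+\delta}$) satisfies all constraints for the whole admissible range with no cases at all, since $1+\delta\ge\max\{(1+\delta+\theta)/2,\,1+\delta-\theta\}$ always. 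So the case split is a consequence of your normalization choice, not of the problem. Second, your explanation of why the restriction $\delta<1+\theta$ is harmless is incorrect: when $\delta-\theta\ge 1$ the constraints do not ``force the sets to be essentially empty'' --- to the contrary, they permit instances of total incidence size $|\mathcal{F}|\cdot O(N^{1-\theta}) = O(N^{(3+\delta-\theta)/2}) = \Omega(N^2)$, so the conclusion is trivially true because any algorithm must read an input that can already have size $\ge N^2$. The paper handles that boundary case with exactly this input-size argument, and you should replace your ``vacuous/empty-sets'' remark with it.
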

\begin{proof}
If $\delta - \theta \ge 1$, the lower bound is trivially true because the input size is $|\mathcal{F}| \cdot N^{1-\theta} = N^{1.5 + 0.5 \delta - 0.5 \theta} \ge N^2$. Thus, we assume $\delta - \theta < 1$. 

Set $\alpha=\beta=\frac{1}{2}+\frac{\theta}{2+2\delta}, \gamma=1-\frac{\theta}{1+\delta}$, and $\rho =\frac{\delta}{1+\delta}$ in Theorem~\ref{thm:exact-to-general-triangle-listing}. This yields a reduction from Exact Triangle to $\tilde{O}(n^{1+\frac{2\delta}{1+\delta}})$ instances of $(\frac{1}{2}+\frac{\theta}{2+2\delta},\frac{1}{2}+\frac{\theta}{2+2\delta},1-\frac{\theta}{1+\delta}, \frac{\delta}{1+\delta}, O(n^{2-\frac{3\delta}{1+\delta}}) )$ Triangle Listing. Assuming the Exact Triangle hypothesis, each triangle listing instance requires $n^{\frac{2}{1+\delta}-o(1)}$ time to compute. Similar to the fact that SetDisjointness can be used to solve All-Edges Sparse Triangle, SetIntersection can be used to solve Triangle Listing. Thus, 
setting $N=n^{\frac{1}{1+\delta}}$ finishes the proof. 
\end{proof}

\section{Reductions Between Intermediate Problems}
\label{sec:intermediate}

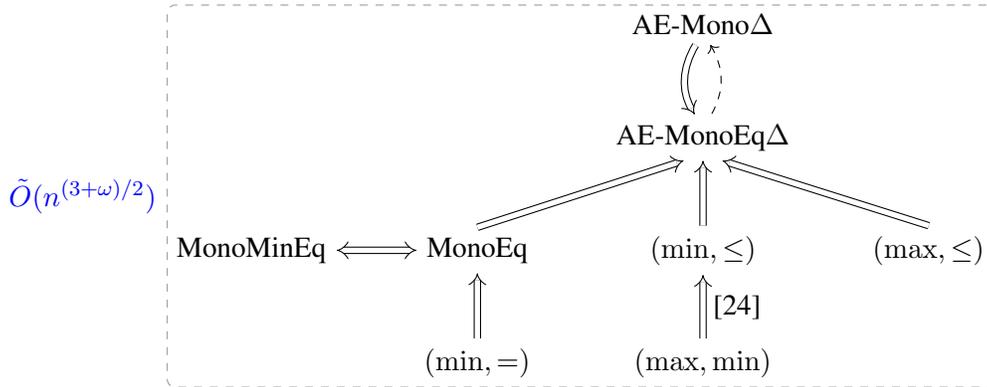
\begin{figure}[ht]
    \centering
    \begin{tikzpicture}
    
        \node at(0, 0)  [] (monoT){AE-Mono$\Delta$};
        \node at(0, -1.5)  [] (monoeqT){AE-MonoEq$\Delta$};
        \node at(-3, -3)  [] (monoeqP){MonoEq};
        \node at(-6, -3)  [] (monomineqP){MonoMinEq};
        \node at(0, -3)  [] (minle){$(\min, \le)$};
        \node at(3, -3)  [] (maxle){$(\max, \le)$};
        
        \node at(-3, -4.5)  [] (mineq){$(\min, =)$};
        \node at(0, -4.5)  [] (maxmin){$(\max, \min)$};

         \draw[<-,dashed, bend left] (monoT.290) to[]  node[] {} (monoeqT.70);
        \draw[-Implies, double distance=2pt, bend right] (monoT.250) to[]  node[] {} (monoeqT.110);
        
        \draw[-Implies,double distance=2pt]  (monoeqP.north) to[]  node[] {} (monoeqT.230);
        \draw[Implies-Implies,double distance=2pt]  (monoeqP.west) to[]  node[] {} (monomineqP.east);
        \draw[-Implies,double distance=2pt]  (minle.north) to[]  node[] {} (monoeqT.south);
        \draw[-Implies,double distance=2pt] (maxle.north) to[]  node[] {} (monoeqT.310);
        
        \draw[-Implies,double distance=2pt] (mineq.north) to[]  node[] {} (monoeqP.south);
        \draw[-Implies,double distance=2pt] (maxmin.north) to[]  node[right] {\cite{VassilevskaWY07}} (minle.south);
        
        \draw[opacity=0.4, dashed, rounded corners=3] (current bounding box.north east) -- (current bounding box.north west) -- (current bounding box.south west) -- (current bounding box.south east) -- cycle;
        
        \node at(current bounding box.west)  [anchor=east] (){\textcolor{blue}{$\tilde{O}(n^{(3+\omega)/2})$}};
        
    \end{tikzpicture}
    \caption{Main reductions in Section~\ref{sec:intermediate}. Double arrows represent that the running times before and after the reduction are the same up to poly-logarithmic factors. Dashed arrows represent reductions that hold only when $\omega > 2$. }
    \label{fig:AE_monoEq_triangle_reductions}
\end{figure}

We will use the following known facts about the exponent of rectangular matrix multiplication in this section. 
\begin{theorem}\cite{huang1998fast}
\label{thm:huang98}
For any $k > 1$ and for any integer $q \ge 3$, $$\omega(1, k, 1) \le \frac{\log\left((1+k)^{(1+k)} \left(\frac{2+q}{2+k}\right)^{2+k}\right)}{\log q}.$$
\end{theorem}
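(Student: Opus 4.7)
The plan is to derive the bound via Schönhage's $\tau$-theorem (the asymptotic sum inequality) applied to a carefully chosen disjoint sum of small rectangular matrix multiplications parameterized by $q$. The shape of the bound -- $q$ appearing as a logarithm base, with the numerator involving the factors $(1+k)^{1+k}$ and $(2+q)^{2+k}$ -- strongly suggests an aggregating identity whose rank is roughly $(2+q)^{2+k}$ and which realizes a disjoint sum of about $(1+k)^{1+k}$ pieces, each shaped like a small rectangular matrix multiplication of aspect ratio $1:k:1$.

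First I would construct such an aggregating identity. A natural starting point is a partial matrix multiplication of the Pan / Bini--Capanni--Lotti--Romani flavor: take a base bilinear identity of rank at most $2+q$ that computes, up to degeneration, inner products of length proportional to $q$, and then telescope or tensor this base identity $(2+k)$ times. The $(1+k)^{1+k}$ factor would come from a symmetric aggregation across $(1+k)$ coordinate ``slots'' in each copy; collecting the resulting tensor components after cancellation should expose a disjoint sum of rectangular matrix products of the correct shape with total rank at most $(2+q)^{2+k}$.

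Next I would apply Schönhage's $\tau$-theorem in its form tailored to rectangular matrix multiplication: writing the components of the aggregation as $\bigoplus_i \langle a_i, b_i, c_i \rangle$ with rank $R \le (2+q)^{2+k}$, the weighted asymptotic sum inequality yields a relation between $R$, the multiset of component shapes, and $\omega(1,k,1)$. Substituting the symmetric aggregation data rearranges to
\[
q^{\,\omega(1,k,1)} \;\le\; (1+k)^{1+k} \left(\frac{2+q}{2+k}\right)^{2+k},
\]
which is the claimed bound after taking logarithms with base $q$. The freedom to choose any integer $q \ge 3$ reflects the freedom in picking the base identity size.

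The main obstacle is constructing the aggregating identity and verifying the exact rank count. Aggregation constructions in the Pan tradition are notoriously delicate: the underlying polynomial identity must cancel cross terms to leave precisely the advertised disjoint sum of honest bilinear multiplications, and any mismatch between the $(1+k)$-fold symmetric aggregation and the $(2+k)$-fold telescoping shifts the resulting exponent visibly. Verifying that each tensor factor contributes exactly one factor of $(2+q)/(2+k)$ to the effective ``volume'' per multiplication, and that the outer $(1+k)^{1+k}$ factor arises from the asymptotic sum inequality rather than from additional bilinear products, is the delicate combinatorial bookkeeping at the heart of the argument and the step where a routine calculation could most easily go wrong.
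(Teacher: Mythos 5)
The paper does not prove this statement; Theorem~\ref{thm:huang98} is simply cited from Huang and Pan~\cite{huang1998fast} and used as a black box in Corollary~\ref{cor:rect-approach-inf} and Claim~\ref{cl:rect-MM-time}. So there is no in-paper argument to compare against, and the only question is whether your sketch would establish the bound if carried out.

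Your outline is plausible as a reading of the bound's shape but it leaves the entire substance of the argument unfilled, and you say so yourself: you defer the base construction and rank count as ``the delicate combinatorial bookkeeping at the heart of the argument.'' That bookkeeping \emph{is} the theorem. You correctly interpret $(2+q)^{2+k}$ as the rank of a $(2+k)$-fold tensor power of a rank-$(2+q)$ base identity, $(1+k)^{1+k}$ as a combinatorial factor, and $q$ in the denominator as coming from Sch\"onhage's asymptotic sum inequality. But no base tensor is constructed, no degeneration or zeroing-out scheme is specified, and no count is carried out; asserting that the $\tau$-theorem ``rearranges to'' the claimed inequality is precisely the step a proof must justify, and nothing in your sketch forces it.

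There is also a technique mismatch worth flagging. The Huang--Pan bound is in the Coppersmith--Winograd / Strassen laser-method tradition, not in the Pan trilinear-aggregation or Bini--Capovani--Lotti--Romani partial-matrix-multiplication tradition you invoke: the base object is a border-rank-$(q+2)$ tensor (the CW ``easy'' starting tensor), and the disjoint sum of rectangular products is isolated from its high tensor powers via a Salem--Spencer zero-out, after which Sch\"onhage's $\tau$-theorem is applied. The exponents $1+k$ and $2+k$ arise from an entropy / multinomial count of how the three variable blocks of the base tensor are distributed across the tensor factors, not from ``a symmetric aggregation across $(1+k)$ coordinate slots.'' Without the Salem--Spencer-controlled zeroing, the cross-term cancellation your aggregation plan relies on has no mechanism to occur, and the $(1+k)^{1+k}/(2+k)^{2+k}$ structure would not emerge. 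So while the tool list (tensor powering, a rank-$(q+2)$ seed, Sch\"onhage's $\tau$-theorem) overlaps with the real proof, the construction you gesture at is not the one that produces this bound, and in its current form the proposal is an outline rather than a proof.
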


\begin{corollary}
\label{cor:rect-approach-inf}
For any $\delta > 0$, there exists a number $k \ge 3$ such that $\omega(1, k, 1) \le 1 + k + \delta$. 
\end{corollary}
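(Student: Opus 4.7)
The plan is to apply Theorem~\ref{thm:huang98} with the choice $q = k$, assuming $k$ is a sufficiently large integer (so in particular $q \ge 3$ and $q$ is a valid integer). With this choice, the factor $\left(\frac{2+q}{2+k}\right)^{2+k}$ becomes $1$, so the logarithm on the right-hand side collapses and the theorem yields
\[
\omega(1, k, 1) \;\le\; \frac{(1+k)\log(1+k)}{\log k}.
\]
This is the reason for picking $q=k$: any other choice leaves a residual $(2+k)\log((2+q)/(2+k))$ term in the numerator, which either scales linearly in $\log q$ (pushing the bound toward $2+k$ when $q\gg k$) or is negative but of lower order. Matching $q$ to $k$ is what cancels these off.

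Next I would extract the leading term $(1+k)$ from the bound by writing $\log(1+k) = \log k + \log(1+1/k)$, which gives
\[
\omega(1, k, 1) \;\le\; (1+k) + \frac{(1+k)\log(1+1/k)}{\log k}.
\]
Using the elementary inequality $\log(1+1/k) \le 1/k$, the error term is bounded by $(1+k)/(k\log k) \le 2/\log k$ for $k \ge 2$. So
\[
\omega(1, k, 1) \;\le\; 1 + k + \frac{2}{\log k}.
\]
For any fixed $\delta > 0$, choose $k$ to be any integer with $k \ge 3$ and $k \ge 2^{2/\delta}$; then $2/\log k \le \delta$, completing the argument.

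There is no real obstacle here; the only small subtlety is checking that $q=k$ is an admissible choice in Theorem~\ref{thm:huang98}, which just requires $k$ to be an integer at least $3$. Since we are free to choose $k$ as large as we like (the corollary only demands $k \ge 3$), this is not restrictive. The argument is thus a one-line specialization of Huang and Pan's bound followed by a Taylor-style estimate on $\log(1+1/k)$.
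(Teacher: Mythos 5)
Your proof is correct and takes essentially the same approach as the paper: both specialize Theorem~\ref{thm:huang98} at $q=k$ to obtain $\omega(1,k,1)\le (1+k)\log(1+k)/\log k$, then bound the excess over $1+k$ using $\log(1+k)-\log k\le 1/k$ (you invoke the elementary inequality $\log(1+1/k)\le 1/k$; the paper derives the same bound via the integral $\int_k^{k+1}\frac{dx}{x}<\frac1k$), and finally take $k$ large. The only caveat is that your explicit threshold $k\ge 2^{2/\delta}$ presumes base-$2$ logarithms; with natural logs one needs $k\ge e^{2/\delta}$, but this is immaterial since any sufficiently large $k$ works.
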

\begin{proof}
Let $k \ge 3$ and use $q = k$ in Theorem~\ref{thm:huang98}. Thus, $w(1, k, 1) \le (1+k) \frac{\log(k+1)}{ \log k}$. Consider $(1+k) \frac{\log(k+1)}{\log k} - (1+k) = (1+k) \frac{\log(k+1) - \log(k)}{ \log k}$. Since $\log(k+1) -\log(k) = \int_{k}^{k+1} \frac{1}{x} dx < \int_k^{k+1} \frac{1}{k} dx = \frac{1}{k}$, we must have $(1+k) \frac{\log(k+1)}{\log k} - (1+k) \le \frac{1+k}{k \log k} $.
When $k$ is large enough, $ \frac{1+k}{k \log k} \le \delta$ and thus $\omega(1, k, 1) \le 1 + k + \delta$. 
\end{proof}

We will also use the following fact about the convexity of $\omega(1, x, 1)$ (see e.g. \cite{le2012faster}, \cite{lotti1983asymptotic}).
\begin{fact}
\label{fact:rect_MM_convex}
When $0 < p \le k \le q$, $\omega(1, k, 1) \le \frac{k-p}{q-p} \omega(1, q, 1) + \frac{q-k}{q-p} \omega(1, p, 1)$.
\end{fact}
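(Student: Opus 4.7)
The plan is to prove the convexity of $\omega(1,\cdot,1)$ by the standard Kronecker (tensor) product construction for bilinear algorithms. First I would set $\lambda = (q-k)/(q-p)$ so that $1-\lambda = (k-p)/(q-p)$ and $k = \lambda p + (1-\lambda)q$; the target inequality then rewrites as $\omega(1,k,1) \le \lambda\, \omega(1,p,1) + (1-\lambda)\,\omega(1,q,1)$. The overall idea is to realize a fast $\langle n, n^k, n\rangle$ matrix multiplication by combining a fast $\langle m_1, m_1^p, m_1\rangle$ algorithm (with $m_1 \approx n^\lambda$) and a fast $\langle m_2, m_2^q, m_2\rangle$ algorithm (with $m_2 \approx n^{1-\lambda}$), so that the dimensions multiply out to $\langle n, n^k, n\rangle$ exactly because of the choice of $\lambda$.

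Concretely, fix an arbitrary $\varepsilon > 0$ and a large parameter $n$. By the definition of rectangular matrix multiplication exponents, there exist bilinear algorithms computing $\langle m_1, m_1^p, m_1\rangle$ in at most $m_1^{\omega(1,p,1)+\varepsilon}$ arithmetic operations and $\langle m_2, m_2^q, m_2\rangle$ in at most $m_2^{\omega(1,q,1)+\varepsilon}$ operations (for $m_1, m_2$ sufficiently large). The key step is to take the Kronecker product of these two bilinear algorithms. Using the fundamental identity $\langle a_1,b_1,c_1\rangle \otimes \langle a_2,b_2,c_2\rangle = \langle a_1 a_2,\, b_1 b_2,\, c_1 c_2\rangle$ together with submultiplicativity of bilinear complexity under tensor product, we obtain an algorithm for $\langle m_1 m_2,\, m_1^p m_2^q,\, m_1 m_2\rangle$ whose complexity is at most the product of the two, i.e.\ $m_1^{\omega(1,p,1)+\varepsilon}\cdot m_2^{\omega(1,q,1)+\varepsilon}$.

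Choosing $m_1 = \lceil n^\lambda\rceil$ and $m_2 = \lceil n^{1-\lambda}\rceil$ gives $m_1 m_2 \approx n$ and $m_1^p m_2^q \approx n^{\lambda p + (1-\lambda)q} = n^k$, so this is (up to rounding) an algorithm for $\langle n, n^k, n\rangle$ of total cost at most $n^{\lambda\,\omega(1,p,1) + (1-\lambda)\,\omega(1,q,1) + \varepsilon + o(1)}$. Unpacking the definition of $\omega(1,k,1)$ and letting $\varepsilon \to 0$ yields the desired convex combination bound.

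The only genuine obstacle is the integer-dimension bookkeeping: neither $n^\lambda$ nor $n^{1-\lambda}$ need be an integer, and after rounding the middle dimension $m_1^p m_2^q$ does not match $n^k$ exactly. This is the standard (and routine) part of such convexity arguments: one handles it by padding the matrices with zero rows and columns so that the dimensions conform to $\langle n, n^k, n\rangle$, incurring only an $n^{o(1)}$ multiplicative overhead that is absorbed into the $+\varepsilon$ slack present in the definition of $\omega$.
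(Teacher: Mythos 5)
The paper does not actually prove this fact: it is stated as known, with citations to Le Gall (2012) and Lotti–Romani (1983), so there is no in-paper argument to compare against. Your proof is correct and is precisely the standard tensor-product argument that appears in those references (and in Lotti–Romani originally): write $k = \lambda p + (1-\lambda) q$, tensor a near-optimal algorithm for $\langle m_1, m_1^p, m_1\rangle$ with one for $\langle m_2, m_2^q, m_2\rangle$, and pick $m_1 \approx n^{\lambda}$, $m_2 \approx n^{1-\lambda}$ so the product tensor is (after padding) $\langle n, n^k, n\rangle$ with cost $n^{\lambda\,\omega(1,p,1) + (1-\lambda)\,\omega(1,q,1) + O(\varepsilon)}$.

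Two small points worth being precise about if this were written up in full. First, the identity $\langle a_1,b_1,c_1\rangle \otimes \langle a_2,b_2,c_2\rangle = \langle a_1 a_2, b_1 b_2, c_1 c_2\rangle$ and the accompanying submultiplicativity are statements about tensor \emph{rank}, not directly about running time; the paper's definition of $\omega(a,b,c)$ is stated in terms of time. Bridging this uses the standard equivalence between the rank exponent and the time exponent for matrix multiplication (Strassen), which you invoke implicitly when you call the factors ``bilinear algorithms.'' Second, the integer-dimension issue you flag is slightly more than rounding $m_1 m_2$ and $m_1^p m_2^q$ to $n$ and $n^k$: the symbols $m_1^p$ and $m_2^q$ themselves need to be integers for $\langle m_1, m_1^p, m_1\rangle$ to make sense, so one actually works with $\langle m_1, \lceil m_1^p\rceil, m_1\rangle$ and relies on the fact that $\omega(1,p,1)$ governs the rank of these padded tensors. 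Both of these are routine and absorbed into the $n^{o(1)}$ slack, exactly as you say, so they do not constitute gaps — just places where the wording could be tightened.
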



Consider an AE-MonoEq$\Delta$ instance $G=(V, E)$. We can copy the vertex set of $G$ three times to part $I, J, K$, and then plant the edges of $G$ to $I \times J, J \times K$ and $I \times K$. Thus we only need to solve AE-MonoEq$\Delta$ on tripartite graphs. Say we want to report for every edge between parts $I$ and $J$, whether it is in a monochromatic equality triangle. Depending on where the two edges with the same values are, we can split the problem to  three cases shown in Figure~\ref{fig:monochromatic-equality-triangle-cases}.

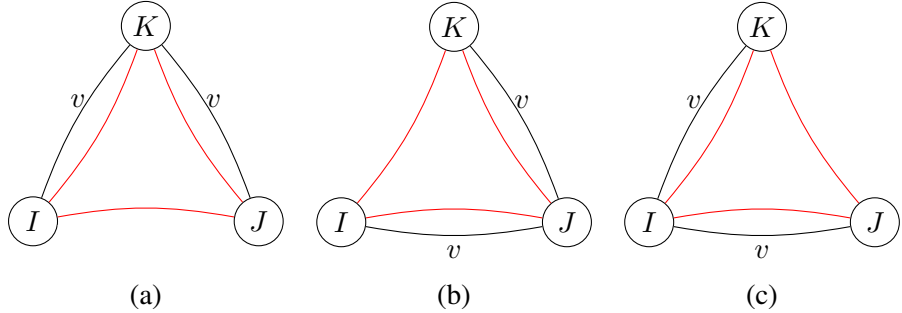
\begin{figure}[h]
\centering
    \begin{minipage}[t]{4cm}
    \centering
    \begin{tikzpicture}[
        state/.style ={ circle ,
        draw,black , text=black , inner sep=0pt, text width=6mm, align=center}]
        \node at(0,0)  [state] (I){$I$};
        \node at(3, 0)  [state] (J){$J$};
        \node at(1.5, 2.6)  [state] (K){$K$};
        \draw[-, red] (I) to[bend left=10]  node[] {} (J);
        \draw[-, white] (I) to[bend right=10]  node[below] {\textcolor{white}{v}} (J);
        \draw[-, red] (J) to[bend left=10]  node[] {} (K);
        \draw[-] (J) to[bend right=10]  node[above] {$v$} (K);
        \draw[-, red] (K) to[bend left=10]  node[] {} (I);
        \draw[-] (K) to[bend right=10]  node[above] {$v$} (I);
    \end{tikzpicture}
    (a)
    \end{minipage}
    \begin{minipage}[t]{4cm}
    \centering
    \begin{tikzpicture}[
        state/.style ={ circle ,
        draw,black , text=black , inner sep=0pt, text width=6mm, align=center}]
        \node at(0,0)  [state] (I){$I$};
        \node at(3, 0)  [state] (J){$J$};
        \node at(1.5, 2.6)  [state] (K){$K$};
        \draw[-, red] (I) to[bend left=10]  node[] {} (J);
        \draw[-] (I) to[bend right=10]  node[below] {$v$} (J);
        \draw[-, red] (J) to[bend left=10]  node[] {} (K);
        \draw[-] (J) to[bend right=10]  node[above] {$v$} (K);
        \draw[-, red] (K) to[bend left=10]  node[] {} (I);
    \end{tikzpicture}
    (b)
    \end{minipage}
    \begin{minipage}[t]{4cm}
    \centering
    \begin{tikzpicture}[
        state/.style ={ circle ,
        draw,black , text=black , inner sep=0pt, text width=6mm, align=center}]
        \node at(0,0)  [state] (I){$I$};
        \node at(3, 0)  [state] (J){$J$};
        \node at(1.5, 2.6)  [state] (K){$K$};
        \draw[-, red] (I) to[bend left=10]  node[] {} (J);
        \draw[-] (I) to[bend right=10]  node[below] {$v$} (J);
        \draw[-, red] (J) to[bend left=10]  node[] {} (K);
        \draw[-, red] (K) to[bend left=10]  node[] {} (I);
        \draw[-] (K) to[bend right=10]  node[above] {$v$} (I);
    \end{tikzpicture}
    (c)
    \end{minipage}
    \caption{Three cases for AE-MonoEq$\Delta$. The red edges represent edges colors, and the black edges represent edge values.}
    \label{fig:monochromatic-equality-triangle-cases}
\end{figure}

Cases (b) and (c) are symmetric, so it suffices to only consider case (a) and case (b). The following theorem shows that we can solve AE-MonoEq$\Delta$ in $\tilde{O}(n^{(3+\omega)/2})$ time. Moreover, when $\omega > 2$, if AE-Mono$\Delta$ has a better algorithm, so does AE-MonoEq$\Delta$. However, when $\omega = 2$, we must have $\lambda = 0$ in the theorem, so AE-MonoEq$\Delta$ won't necessarily have a better algorithm given a better algorithm for AE-Mono$\Delta$.

\begin{theorem}\label{thm:monoeqT_to_monoT}
Suppose $\omega \ge 2 + \lambda$ for some $\lambda \ge 0$, and suppose AE-Mono$\Delta$ has an $O(n^{(3+\omega)/2 - \epsilon})$ time algorithm, then AE-MonoEq$\Delta$ has an \[\tilde{O}\left(n^{(\omega+3)/2 - \frac{\frac{\lambda}{2(\kappa - 1)}\epsilon}{\omega + 3 + \frac{\lambda}{2(\kappa - 1)} - 2\epsilon}}\right)\]
time algorithm, where $\kappa \ge 3$ is a constant depending on $\omega$ and $\lambda$. 
\end{theorem}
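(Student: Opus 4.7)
The plan is to reduce AE-MonoEq$\Delta$ to the assumed AE-Mono$\Delta$ algorithm via a case split and a heavy/light threshold. By Figure~\ref{fig:monochromatic-equality-triangle-cases} there are three cases based on which two edges of the witness triangle share a value; cases (b) and (c) being symmetric, it suffices to handle (a) and (b). In each case the two value-equal edges share a common endpoint which we call the \emph{pivot} (in $K$ for case (a), in $J$ for case (b)). At each pivot I would group the incident edges by their (color, value) pair and split groups by size at a threshold $\Delta = n^\theta$ to be optimized.

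For heavy groups (size exceeding $\Delta$) there are at most $O(n^2/\Delta)$ in total. I would encode the heavy witnesses into a single AE-Mono$\Delta$ instance on an auxiliary graph: insert one virtual vertex per heavy group, attach it to the vertices of its two non-pivot sides by edges colored with the group's color, and retain the remaining original edges with their original colors. Monochromatic triangles through a virtual vertex in this auxiliary graph are exactly the heavy witnesses, so the assumed $\tilde{O}(N^{(3+\omega)/2-\epsilon})$ AE-Mono$\Delta$ algorithm, applied to this $(2n + O(n^2/\Delta))$-vertex graph, handles the heavy part.

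For light groups, naive enumeration costs $\tilde{O}(n^2\Delta)$, which is too weak to combine with the heavy cost for a net improvement. I would instead use rectangular matrix multiplication: package the light-group incidences into two matrices whose rectangular product marks the $(i,j)$ pairs witnessed by a light-case triangle, with color consistency enforced by indexing rows/columns by (vertex, color) pairs. Choosing $\kappa \ge 3$ via Corollary~\ref{cor:rect-approach-inf} so that $\omega(1,\kappa,1) \le 1 + \kappa + \lambda/(2(\kappa-1))$, and combining this with the convexity Fact~\ref{fact:rect_MM_convex} and the hypothesis $\omega \ge 2 + \lambda$, I would control the rectangular exponents that arise in the light cost. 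Balancing the light cost against the heavy cost $(n^2/\Delta)^{(3+\omega)/2 - \epsilon}$ by choice of $\theta$ yields the exponent $(\omega+3)/2 - \delta$ with $\delta$ of the form claimed in the theorem; in particular $\delta > 0$ whenever $\epsilon > 0$, since the numerator contains the factor $\epsilon \cdot \lambda/(2(\kappa-1)) > 0$ (recalling that $\lambda > 0$ amounts to $\omega > 2$).

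The main obstacle is bounding the light-case rectangular-MM cost tightly enough that the balance equation produces the claimed exponent: naively the product has dimensions $n \times L \times n$ with $L$ up to $\Theta(n^2)$, which would not yield an improvement. The calibration of $\kappa$, which makes $\omega(1,\kappa,1)$ exceed the trivial $1+\kappa$ by only $\lambda/(2(\kappa-1))$, is precisely the device that slots the rectangular-MM cost into the balance equation at the right slack, so that a positive $\epsilon$ on the AE-Mono$\Delta$ side propagates to a positive $\delta$ on the AE-MonoEq$\Delta$ side.
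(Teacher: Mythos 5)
The group-by-(color, value) blow-up at the pivot vertex matches the paper's construction of $G'$ on $I \cup J \cup (K \times \mathcal{V})$, and identifying the pivot via the three cases of Figure~\ref{fig:monochromatic-equality-triangle-cases} is also right. But the rest of the plan assigns the tools to the wrong regimes, and the gap you flag at the end is real, not a technicality the $\kappa$-calibration resolves.

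First, routing all heavy groups directly into one AE-Mono$\Delta$ instance does not work: that graph has $\Theta(n^2/\Delta)$ auxiliary vertices, and applying an $N^{(3+\omega)/2-\epsilon}$ algorithm to $N \approx n^2/\Delta$ gives $(n^2/\Delta)^{(3+\omega)/2-\epsilon}$, which already exceeds $n^{(3+\omega)/2}$ for any $\Delta$ below roughly $n$. There is no choice of threshold that makes this term balance. The paper avoids this by never handing a large vertex set to the AE-Mono$\Delta$ subroutine: it further splits by color, and for colors whose surviving blown-up part has \emph{fewer} than $T \approx n^{1+o(1)}$ vertices, it uses the random-permutation packing trick from the proof of Theorem~\ref{thm:sparse_to_mono} to collapse all such per-color sparse instances onto a single graph with only $\tilde{O}(T)$ vertices. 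That packing step is essential and is missing from your proposal.

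Second, you dismiss naive enumeration on the light side, but it is in fact the correct tool: for blown-up vertices of degree $< D$, enumerating neighbor pairs costs $O(n^2 D)$ overall, which is one of the three terms the paper balances and is perfectly affordable. Conversely, your rectangular MM on the light side does fail for exactly the reason you name: the inner dimension is $\Theta(n^2)$ worth of light groups, and $\omega(1,2,1) > (3+\omega)/2$ for all known $\omega$, so a single large rectangular product is hopeless. The calibration of $\kappa$ in Claim~\ref{cl:rect-MM-time} does not help here; in the paper it bounds $\omega(1,q_c,1)$ for a \emph{per-color} rectangular MM, where the inner dimension $n^{q_c}$ is controlled by the degree lower bound $D$ (so $\sum_c n^{q_c} \le O(n^2/D)$). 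Without that per-color split and the degree filtering that precedes it, the slack $\lambda/(2(\kappa-1))$ has nothing to act on. The paper's proof is a three-way split (low-degree: enumeration; per-color high count: rectangular MM; per-color low count: packed AE-Mono$\Delta$), whereas your plan is a two-way split with the heavy/light roles of rectangular MM and AE-Mono$\Delta$ effectively reversed.
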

\begin{proof}
We first show the algorithm for AE-MonoEq$\Delta$ on tripartite graphs $I \cup J \cup K$ with values on edge sets $I \times K$ and $J \times K$ (Case (a) in Figure~\ref{fig:monochromatic-equality-triangle-cases}). We define a good triangle $(i, j, k)$ to be a triangle where all its three edges have the same color, and $v(i, k) = v(j, k)$. For each edge $e \in E \cap (I \times J)$, the algorithm needs to report whether it is in a good triangle or not.

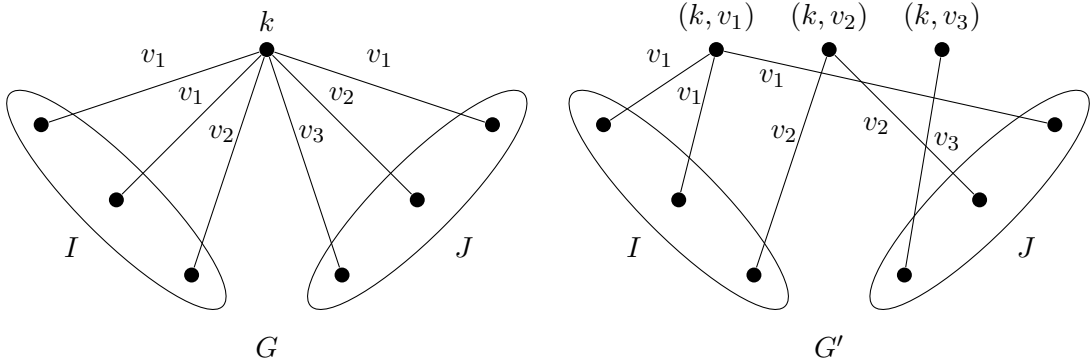
\begin{figure}[ht]
    \centering
    \begin{minipage}[t]{210pt}
    \centering
    \begin{tikzpicture}
    state/.style ={ circle ,
        draw,black , text=black , inner sep=0pt, text width=3mm, align=center}]
        \node at(0, 0)  [circle,fill,inner sep=2pt,label=above:$k$] (k){};
        \node[ellipse, draw,align=left, xshift=-0.5cm, rotate=-45, minimum width = 4cm,minimum height = 1cm,label=below:$I$] at (-1.5, -2) (non) {};
        \node[ellipse, draw,align=left, xshift=-0.5cm, rotate=45, minimum width = 4cm,minimum height = 1cm,label=below:$J$] at (2.5, -2) (non) {};
        \node at(-3, -1)  [circle,fill,inner sep=2pt] (i1){};
        \node at(-2, -2)  [circle,fill,inner sep=2pt] (i2){};
        \node at(-1, -3)  [circle,fill,inner sep=2pt] (i3){};
        \node at(3, -1)  [circle,fill,inner sep=2pt] (j1){};
        \node at(2, -2)  [circle,fill,inner sep=2pt] (j2){};
        \node at(1, -3)  [circle,fill,inner sep=2pt] (j3){};
        \draw[-,] (k) to[]  node[label=above:$v_1$] {} (i1);
        \draw[-,] (k) to[]  node[label=above:$v_1$] {} (i2);
        \draw[-,] (k) to[]  node[label={[xshift=-0.1cm, yshift=0.0cm]$v_2$}] {} (i3);
        \draw[-,] (k) to[]  node[label=above:$v_1$] {} (j1);
        \draw[-,] (k) to[]  node[label=above:$v_2$] {} (j2);
        \draw[-,] (k) to[]  node[label={[xshift=0.1cm, yshift=0.0cm]$v_3$}] {} (j3);
    \end{tikzpicture}
    $G$
    \end{minipage}
    \begin{minipage}[t]{210pt}
    \centering
    \begin{tikzpicture}
    state/.style ={ circle ,
        draw,black , text=black , inner sep=0pt, text width=3mm, align=center}]
        \node at(-1.5, 0)  [circle,fill,inner sep=2pt,label=above:{$(k,v_1)$}] (k1){};
        \node at(0, 0)  [circle,fill,inner sep=2pt,label=above:{$(k,v_2)$}] (k2){};
        \node at(1.5, 0)  [circle,fill,inner sep=2pt,label=above:{$(k,v_3)$}] (k3){};
        
        \node[ellipse, draw,align=left, xshift=-0.5cm, rotate=-45, minimum width = 4cm,minimum height = 1cm,label=below:$I$] at (-1.5, -2) (non) {};
        \node[ellipse, draw,align=left, xshift=-0.5cm, rotate=45, minimum width = 4cm,minimum height = 1cm,label=below:$J$] at (2.5, -2) (non) {};
        \node at(-3, -1)  [circle,fill,inner sep=2pt] (i1){};
        \node at(-2, -2)  [circle,fill,inner sep=2pt] (i2){};
        \node at(-1, -3)  [circle,fill,inner sep=2pt] (i3){};
        \node at(3, -1)  [circle,fill,inner sep=2pt] (j1){};
        \node at(2, -2)  [circle,fill,inner sep=2pt] (j2){};
        \node at(1, -3)  [circle,fill,inner sep=2pt] (j3){};
        \draw[-,] (k1) to[]  node[label=above:$v_1$] {} (i1);
        \draw[-,] (k1) to[]  node[label={[xshift=-0.1cm, yshift=0.0cm]$v_1$}] {} (i2);
        \draw[-,] (k2) to[]  node[label={[xshift=-0.1cm, yshift=0.0cm]$v_2$}] {} (i3);
        \draw[-,] (k1) to[]  node[pos=0.15, label={[xshift=0.0cm, yshift=-0.6cm]$v_1$}] {} (j1);
        \draw[-,] (k2) to[]  node[pos = 0.3, label=below:$v_2$] {} (j2);
        \draw[-,] (k3) to[]  node[pos = 0.4, label=right:$v_3$, xshift=-5pt] {} (j3);
    \end{tikzpicture}
    $G'$
    \end{minipage}
    
    \caption{Transforming graph $G$ to graph $G'$ by copying each vertex $k$ and adding edges with value $v$ to vertex $(k, v)$.}
    \label{fig:value-transforming}
\end{figure}
We will create a new graph $G'$ such that the edges of $G'$ only has colors, instead of having both colors and values. The vertex set of $G'$ will be $I \cup J \cup (K \times \mathcal{V})$, where $\mathcal{V}$ is the set of all values in $G$ (we will create the graph lazily, so we won't actually spend time creating those vertices that end up not having any neighbors). The edge set between $I$ and $J$ in $G'$ is the same as the edge set between $I$ and $J$ in $G$. For each edge $(i, k)$ in $E(G)$ with value $v$ and color $c$, we create an edge between $i$ and $(k, v)$ in $G'$ with the same color $c$. Similarly, for each edge $(j, k)$, in $E(G)$ with value $v$ and color $c$, we create an edge between $j$ and $(k, v)$ in $G'$ with the same color $c$. We can see that edge $(i, j)$ is in a good triangle in $G$ if and only if it is in a monochromatic triangle in $G'$. Thus, from now on, we can focus on the All-Edges Monochromatic Triangle problem on the unbalanced graph $G'$.

Let $D \ge 1$ and $T \ge n$ be some parameters to be fixed later. For each color $c$, let $G'_c$ be the subgraph of $G'$ consisting of vertices of $G'$ and all edges with color $c$. For each vertex $(k, v) \in K'$, if its degree in $G'_c$ is at most $D$, then we can enumerate all pairs of its neighbors and test if these three vertices form a triangle in $\tilde{O}(1)$ time. If a triangle $(i, j, (k, v))$ is found during the enumeration, we record that edge $(i, j)$ is in a good triangle. After the enumeration, we can remove the vertex $(k, v)$ together with all edges incident to it from the graph $G'_c$. 
Summing over all $c$ and all $(k, v)$, these enumerations take $O(n^2 D)$ time. We can now assume that, in each $G'_c$, the degree of every vertex $(k, v) \in K'$ is at least $D$. 

We consider two cases, based on the the remaining size of $K'$ in $G'_c$. 
As a high level description, if the size of $K'$ is at least $T$, we use rectangular matrix multiplication. When the size of $K'$ is less than $T$, we combine all instances with a small size of $K'$ across every $G'_c$ into a single AE-Mono$\Delta$ instance, and use the assumed $O(n^{(3+\omega)/2  -\epsilon})$ time algorithm for AE-Mono$\Delta$. We describe and analyze these two cases in more details in the following. 

\textbf{Rectangular matrix multiplication for more unbalanced colors}

For a subgraph $G_c'$, if $K'$ has size at least $T$, we use rectangular matrix multiplication to determine whether each edge $e \in E(G_c') \cap (I \times J)$ is in a triangle. To do so, we create an integer matrix $X$ of dimension $|I| \times |K'|$ and an integer matrix $Y$ of dimension $|K'| \times |J|$. Initially all entries in $X$ and $Y$ are zero. If there is an edge between $i \in I$ and $k' \in K'$, we set the $(i, k')$-th entry of $X$ to $1$. Similarly, if there is an edge between $j \in J$ and $k' \in K'$, we set the $(k', j)$-th entry of $Y$ to $1$. It is then not hard to see that edge $(i, j)$ is in a triangle if and only if $(XY)_{i, j} >0 $. Since $|I|, |J| \le n$, it would take $O(n^{\omega(1, \log_n |K'|, 1)})$ time to multiply $X$ and $Y$.

In order to analyze rectangular matrix multiplication, we need some bound on $\omega(1, t, 1)$ in the regime when $\omega \ge 2 + \lambda$. 
\begin{claim}
\label{cl:rect-MM-time}
There exists a constant $\kappa$ such that $\omega(1, t, 1) \le \omega + (t-1)(1-\frac{\lambda}{2(\kappa-1)})$ for every $1 \le t \le 3$. 
\end{claim}
\begin{proof}
If $\omega = 2$, then $\lambda$ must be $0$ and thus the claim is trivially true. Now assume $\omega > 2$. 

Using $\delta = \omega - 2 - \lambda / 2$ in Corollary~\ref{cor:rect-approach-inf}, there exists $\kappa \ge 3$ such that $\omega(1, \kappa, 1) \le \omega + \kappa - 1 -\lambda/2$. By convexity of $\omega(1, *, 1)$, for any $1 \le t \le \kappa$, $\omega(1, t, 1) \le \frac{t-1}{\kappa-1} (\omega + \kappa - 1 - \lambda / 2) + \frac{\kappa - t}{\kappa - 1} \omega = \omega + (t-1)(1-\frac{\lambda}{2(\kappa-1)})$. 

\end{proof}

Let $t = \log_n T$. If the size of $K'$ in $G_c'$ is $n^{q_c}$ for some $q_c\ge t$, then it takes $O(n^{\omega(1, q_c, 1)})$ time to compute $XY$. Clearly $1 \le t \le q_c \le 3$, so $n^{\omega(1, q_c, 1)}=O(n^{\omega+(q_c-1)(1-\frac{\lambda}{2(\kappa-1))}})$ by Claim~\ref{cl:rect-MM-time}. Since the degree of every vertex in $K'$ is at least $D$, and the total number of edges across all $G_c'$ is $O(n^2)$, so the sum of the sizes of $K'$ is at most $O(\frac{n^2}{D})$. In other words, $\sum_c n^{q_c} = O(\frac{n^2}{D})$. Therefore, the overall time complexity can be bounded as the following up to constant factors. 
\[
\begin{split}
    \sum_c n^{\omega+(q_c-1)(1-\frac{\lambda}{2(\kappa-1)})} &= n^\omega \cdot \sum_c n^{(q_c - 1)} / n^{\frac{(q_c - 1) \lambda}{ 2(\kappa - 1)}}\\
    &\le n^\omega \cdot \sum_c n^{(q_c - 1)} / n^{\frac{(t - 1) \lambda}{ 2(\kappa - 1)}}\\
    &= \frac{n^{\omega + 1 + \frac{\lambda}{2(\kappa - 1)}}}{D T^{\frac{\lambda}{2(\kappa - 1)}}}.
\end{split}
\]
Therefore, this case takes at most $O(\frac{n^{\omega + 1 + \frac{\lambda}{2(\kappa - 1)}}}{D T^{\frac{\lambda}{2(\kappa - 1)}}})$ time. 

\textbf{All-Edges Monochromatic Triangle for moderately unbalanced colors}

In this part, we consider colors $c$ such that the $K'$ part in $G_c'$ has at most $T$ vertices. Since $T\ge n$ we can assume the number of vertices in $G_c'$  is at most $2n + T \le 3T$. Also, if we sum over the number of edges of $G_c'$ of every $c$, the total number of edges is at most $n^2$. 

Note that the graph on each $G_c'$ is an All-Edges Sparse Triangle instance. 
We can actually combine all these instances to a single All-Edges Monochromatic Triangle instance of size $\tilde{O}(T)$ by a similar reduction of the proof of Theorem~\ref{thm:sparse_to_mono}.  

The reduction from All-Edges Sparse Triangle to All-Edges Monochromatic Triangle in 
Theorem~\ref{thm:sparse_to_mono} works as follows \cite{lincoln2020monochromatic}. Let $H$ be a multi-graph on $3T$ vertices, initially with no edges. 
For each All-Edges Sparse Triangle instance on vertex set $[3T]$, we take a random permutation of its vertices, and copy each edge in this instance to $H$, and color these edges using a color unique to  this instance. On expectation, for every $(u, v) \in [3T] \times [3T]$, the multiplicity of $(u, v)$ in $H$ is $\frac{n^2}{(3T)^2} = O(1)$. Since the permutations for graphs $G_c'$ are independent for different colors $c$, by Chernoff bound, the multiplicity of any $(u, v)$ is $O(\log n)$ with high probability. For every $(u, v)$ in $[3T] \times [3T]$, we arbitrarily label its corresponding $O(\log n)$ edges using numbers from $1$ to $O(\log n)$. Then we enumerate $O(\log^3 n)$ triples $(i, j, k)$ of labels. For each triple $(i, j, k)$, we create a graph on vertex set $V_1 \sqcup V_2 \sqcup V3$, where $V_1 = V_2 = V_3 = [3T]$. Between $V_1$ and $V_2$, we add edges in $H$ with label $i$; between $V_2$ and $V_3$, we add edges with label $j$; between $V_3$ and $V_1$, we add edges with label $k$. Then a triangle in one All-Edges Sparse Triangle instance corresponds to some monochromatic triangle in one of these $O(\log^3 n)$ instances. This finishes the reduction.

For each of the $O(\log^3 n)$ instances of All-Edges Monochromatic Triangle, we can use the assumed $O(n^{(3+\omega)/2 - \epsilon})$ time algorithm for All-Edges Monochromatic Triangle, so we get an $\tilde{O}(T^{(3+\omega) / 2 -\epsilon})$ time algorithm for this case. 

\textbf{Running Time}
The three cases have running times $O(n^2D)$, $O(\frac{n^{\omega + 1 + \frac{\lambda}{2(\kappa - 1)}}}{D T^{\frac{\lambda}{2(\kappa - 1)}}})$, and $\tilde{O}(T^{(3+\omega) / 2 -\epsilon})$ respectively. We can balance them by setting $T=n^{1+\frac{2\epsilon}{\omega+3+\frac{\lambda}{2(\kappa - 1)}-2\epsilon}}$ and $D = \frac{T^{(3+\omega)/2-\epsilon}}{n^2}$. The overall running time is $$\tilde{O}\left(n^{(\omega+3)/2 - \frac{\frac{\lambda}{2(\kappa - 1)}\epsilon}{\omega + 3 + \frac{\lambda}{2(\kappa - 1)} - 2\epsilon}}\right).$$

Now we consider the case where the values are on edge sets $I \times J$ and $I \times K$ (Case (b) in Figure~\ref{fig:monochromatic-equality-triangle-cases}). The case where the values are on edge sets $I \times J$ and $J \times K$ (Case (c) in Figure~\ref{fig:monochromatic-equality-triangle-cases}) is symmetric. The algorithm is largely the same with some small changes in details. Instead of creating graph $G'$ with vertex set $I \cup J \cup (K \times \mathcal{V})$, we create graph $G'$ with vertex set $(I' = I \times \mathcal{V}) \cup J \cup K$. Similarly, now each vertex in $I'$ represents both a vertex $i \in I$ and a value. The edges in $G'$ are added similarly to the algorithm for Case (a). 

Now it is sufficient to compute whether each edge in $I' \times J$ is in a monochromatic triangle in $G'$. Similarly, we partition the edge set by the color of the edges to get instances $G'_c$. For those vertices $i' \in I'$ that has degree at most $D$ in some $G'_c$, we could enumerate all pairs of neighbors of $i'$ and test if any pair forms a triangle in $O(\deg^2_{G'_c}(i'))$ time. Overall, this case takes $O(n^2D)$ time. 

For the remaining vertices, we similarly consider the size of $I'$. If $|I'| = n^{p_c} \ge T$, we use rectangular matrix multiplication. In this case, we will compute the product of an $n^{p_c}$ by $n$ matrix, and an $n$ by $n$ matrix, so it takes $O(n^{\omega(p_c, 1, 1)})$ time. Note that $\omega(p_c, 1, 1) = \omega(1, p_c, 1)$, so this case runs in the same time as the corresponding case in algorithm for Case (a). 

Finally, for those $G_c'$ whose $I'$ has size at most $T$, we can still combine them into $\tilde{O}(1)$ All-Edges Monochromatic Triangle instances. Thus, the running time  is still the same as the running time for the corresponding case in algorithm for Case (a). 

Since all three cases share the same running times, the overall running time remains the same.

\end{proof}




Next, we show reductions to AE-MonoEq$\Delta$ from many other intermediate problems. 

\begin{theorem}
\label{thm:min-equal-to-mono-equal-triangle}
If there is a $T(n)$ time algorithm for AE-MonoEq$\Delta$, then there is an $O(T(n) \log n)$ time algorithm for $(\min, =)$-product. 
\end{theorem}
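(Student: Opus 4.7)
The plan is to reduce $(\min,=)$-product to $O(\log n)$ AE-MonoEq$\Delta$ queries via a bit-by-bit parallel binary search: a single query per round simultaneously determines one bit of $C_{i,j}$ for \emph{every} pair $(i,j)$. Assume without loss of generality that the entries of $A$ and $B$ are non-negative integers with $L = O(\log n)$ bits, which we can achieve by shifting; the shift preserves equalities and shifts the minima uniformly, so it is easy to invert at the end.

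Inductively, after round $t-1$, suppose I have $h^{t-1}_{i,j} \in \{0,1,\dots,2^{t-1}-1\}$, the top $t-1$ bits of $C_{i,j}$, for every $(i,j)$, with $h^{0}_{i,j}:=0$. In round $t$, I construct a tripartite AE-MonoEq$\Delta$ instance on $I \cup J \cup K$ with $|I|=|J|=|K|=n$ as follows. Each edge $(i,k) \in I \times K$ gets color $\floor{A_{i,k}/2^{L-t}}$ (the top $t$ bits of $A_{i,k}$) and value $A_{i,k}$. Symmetrically, each edge $(k,j) \in K \times J$ gets color $\floor{B_{k,j}/2^{L-t}}$ and value $B_{k,j}$. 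Each edge $(i,j) \in I \times J$ gets color $2 h^{t-1}_{i,j}$ and value equal to a sentinel $s$ chosen outside the entry ranges of $A$ and $B$.

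The monochromatic condition on a triangle $(i,j,k)$ forces the top $t$ bits of $A_{i,k}$ and of $B_{k,j}$ to both equal $2h^{t-1}_{i,j}$ (namely, $h^{t-1}_{i,j}$ followed by a $0$), which is a \emph{per-pair} prefix encoded in the single color on edge $(i,j)$. The MonoEq condition requires two of the triangle's three values to coincide; the sentinel value on $(i,j)$ makes it impossible for $s$ to coincide with $A_{i,k}$ or $B_{k,j}$, so the only option is $A_{i,k}=B_{k,j}$. Thus edge $(i,j)$ lies in a MonoEq triangle iff there exists $k$ with $A_{i,k}=B_{k,j}$ whose top $t$ bits equal $2h^{t-1}_{i,j}$, iff the $t$-th bit of $C_{i,j}$ is $0$. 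I then update $h^{t}_{i,j} := 2 h^{t-1}_{i,j}$ on "yes" and $h^{t}_{i,j} := 2 h^{t-1}_{i,j} + 1$ on "no".

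After $L$ rounds, $h^{L}_{i,j} = C_{i,j}$ whenever $C_{i,j} < \infty$, but the same $h^{L}_{i,j}$ can also arise from the no-match case when every round answers "no". To distinguish these, I run one extra AE-MonoEq$\Delta$ query in which $(i,k)$ has color $A_{i,k}$, $(k,j)$ has color $B_{k,j}$, $(i,j)$ has color $h^{L}_{i,j}$, and the values are as before. This verifies whether $\exists k$ with $A_{i,k}=B_{k,j}=h^{L}_{i,j}$; if not, the correct output is $\infty$. The total cost is $L+1 = O(\log n)$ AE-MonoEq$\Delta$ calls plus $O(n^2)$ bookkeeping per round, giving $O(T(n)\log n)$ overall. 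The main delicacy is the way information is partitioned across the three edge types: the $(i,k)$ and $(k,j)$ colors carry only local prefixes, while the $(i,j)$ color carries the per-pair target prefix, and the monochromatic constraint is what ties them together into a per-pair query answerable by a single AE-MonoEq$\Delta$ call.
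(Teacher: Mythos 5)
Your proof is correct and takes essentially the same approach as the paper's: the same per-pair parallel binary search, with the color on the $(i,j)$ edge encoding the current known prefix of $C_{i,j}$, the colors on $(i,k)$ and $(k,j)$ encoding the matching-length prefixes of $A_{i,k}$ and $B_{k,j}$, and the values on $(i,k),(k,j)$ enforcing $A_{i,k}=B_{k,j}$. The only cosmetic differences are that the paper handles $C_{i,j}=\infty$ by pre-augmenting $A,B$ with a large dummy column/row so that every minimum is finite (rather than your single extra verification query afterward), and the paper leaves the $I\times J$ values implicit by effectively targeting the MonoEq special case, whereas you make the sentinel value on $I\times J$ edges explicit.
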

\begin{proof}
We can add a column to matrix $A$ with an entry value that's larger than all other entries, and add the corresponding row to matrix $B$ with the same value. This value won't affect $C_{i, j}$ when $\{B_{k, j}| A_{i, k} = B_{k, j}\}_k$ is nonempty. If the computed $C_{i, j}$ equals this large value, we know $\{B_{k, j}| A_{i, k} = B_{k, j}\}_k$ is empty, and can then set $C_{i, j}$ back to $\infty$. Thus, we can assume $C_{i,j}$ is finite for every $i, j$. 

We can easily discretize the entries of $A$ and $B$, so that we can assume all entries are integers between $0$ and $2n^2-1$. 

We create a complete tripartite graph $G$ with vertex set $I, J, K$. For each edge $(i, k) \in I \times K$, we assign it a value $A_{i, k}$; for each edge $(j, k) \in J \times K$, we assign it a value $B_{k, j}$. 
We will use colors on the edges to perform parallel binary search to find the $(\min, =)$-product. 

Let $t= \lceil \log_2 (2n^2)\rceil$. 
Initially, we know that $C_{i, j} \in [0, 2^t)$ for every $i, j$. 
We will perform $t$ calls to the $T(n)$ time algorithm for AE-MonoEq$\Delta$, and each call narrows the possible range of $C_{i, j}$ by a half. For each $0 \le \ell \le t$, we will compute $\tilde{C}^\ell_{i, j}$ such that  $\tilde{C}^\ell_{i, j}$ is a multiple of $2^{\ell}$ and $\tilde{C}^\ell_{i, j} \le C_{i, j} < \tilde{C}^\ell_{i, j} + 2^{\ell}$. The initial condition, when $\ell = t$, is clearly satisfied by setting $\tilde{C}^t_{i, j} = 0$.

Assume we have  have computed $\tilde{C}^{\ell+1}$. 
We will recolor the edges in $G$ for computing $\tilde{C}^{\ell}$. 
We set the color of an edge $(i, k) \in I \times K$ as $\lfloor \frac{A_{i, k}}{2^{\ell}}\rfloor$, and set the color of an edge $(j, k) \in J \times K$ as $\lfloor \frac{B_{k, j}}{2^{\ell}} \rfloor$. For each edge $(i, j)$, we set its color to $\lfloor \frac{\tilde{C}^{\ell+1}_{i, j}}{2^{\ell}} \rfloor$. Now we use the $T(n)$ time algorithm for AE-MonoEq$\Delta$ on this graph $G$. If edge $(i, j)$ is in a monochromatic equality triangle, then we set $\tilde{C}^{\ell} = \tilde{C}^{\ell+1}$; otherwise, we set $\tilde{C}^\ell = \tilde{C}^{\ell+1} + 2^{\ell}$.

We show that the values $\tilde{C}^\ell$ satisfy the conditions, assuming $\tilde{C}^{\ell+1}$ satisfies the conditions. First, since $\tilde{C}^{\ell+1}_{i, j}$ is a multiple of $2^{\ell + 1}$, $\tilde{C}^\ell_{i, j}$ will be a multiple of $2^{\ell}$ in either case. Suppose an edge $(i, j)$ is in a monochromatic equality triangle $(i, k, j)$. Since the values on edges $(i, k)$ and $(k, j)$ are the same, we have $A_{i, k} = B_{k, j}$. Also, the colors on edges $(i, k), (k, j), (i, j)$ are the same, so $\lfloor \frac{A_{i, k}}{2^{\ell}}\rfloor = \lfloor \frac{B_{k, j}}{2^{\ell}}\rfloor = \lfloor \frac{\tilde{C}^{\ell+1}_{i, k}}{2^{\ell}}\rfloor$. Since $\tilde{C}^{\ell+1}$ is a multiple of $2^{\ell+1}$, we further get that $\tilde{C}^{\ell+1}_{i, j} \le A_{i, k} = B_{k, j} < \tilde{C}^{\ell+1}_{i, j} + 2^{\ell}$. We can similarly show that if there exists $k$ such that $\tilde{C}^{\ell+1}_{i,j} \le A_{i, k} = B_{k, j} < \tilde{C}^{\ell+1}_{i,j} + 2^{\ell}$, then $(i, j)$ is on a monochromatic equality triangle. 
Thus, $(i, j)$ is on a monochromatic equality triangle if and only if $\tilde{C}^{\ell+1}_{i,j} \le C_{i, j} < \tilde{C}^{\ell+1}_{i,j} + 2^{\ell}$. 
Therefore, if $(i, j)$ is on a monochromatic equality triangle, setting $\tilde{C}^\ell_{i, j} = \tilde{C}^{\ell + 1}_{i, j}$ satisfies $\tilde{C}^{\ell}_{i, j} \le C_{i, j} < \tilde{C}^{\ell}_{i,j} + 2^{t - \ell}$; otherwise, since $\tilde{C}^{\ell+1}_{i,j} \le C_{i, j} < \tilde{C}^{\ell+1}_{i,j} + 2^{\ell+1}$, we must have $\tilde{C}^{\ell+1}_{i,j} + 2^{\ell} \le C_{i, j} < \tilde{C}^{\ell+1}_{i,j} + 2^{\ell} +2^{\ell}$, so setting $\tilde{C}^\ell_{i,j} = \tilde{C}^{\ell+1}_{i,j} + 2^{\ell}$ satisfies the conditions. 

Now if we performed all $t$ rounds, we would get $\tilde{C}^t_{i, j}$ such that $\tilde{C}^{t}_{i, j} \le C_{i, j} < \tilde{C}^t_{i, j} + 2^{t-t}$, so $\tilde{C}^t_{i, j} = C_{i,j}$.

\end{proof}

\begin{theorem}
\label{thm:min-less-to-mono-equal-triangle}
If there is a $T(n)$ time algorithm for AE-MonoEq$\Delta$, then there is an $O(T(n) \log^2 n)$ time algorithm for $(\min, \le)$-product. 
\end{theorem}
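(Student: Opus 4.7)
The plan is to extend the binary-search strategy of Theorem~\ref{thm:min-equal-to-mono-equal-triangle} from ``$=$'' to ``$\le$''. Writing $A_{i,k}\le B_{k,j}$ as the disjunction of $A_{i,k}=B_{k,j}$ and $A_{i,k}<B_{k,j}$, I first invoke Theorem~\ref{thm:min-equal-to-mono-equal-triangle} to obtain the $(\min,=)$-product in $O(T(n)\log n)$ time, and will combine its result with a strict-inequality computation described next. To handle strict inequality, I decompose over the highest bit position at which $A_{i,k}$ and $B_{k,j}$ differ: letting $t=O(\log n)$ bound the bit-lengths of entries, for each $\ell\in\{0,\ldots,t-1\}$ I compute $C_\ell[i,j]=\min_k\{B_{k,j}:E_\ell(A_{i,k},B_{k,j})\}$, where $E_\ell(A,B)$ asserts that bits above $\ell$ of $A$ and $B$ agree, bit $\ell$ of $A$ is $0$, and bit $\ell$ of $B$ is $1$. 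Since $A<B$ is equivalent to $E_\ell(A,B)$ for exactly one $\ell$, the strict-inequality part equals $\min_\ell C_\ell[i,j]$. This decomposition introduces one $O(\log n)$ factor on top of the binary search below.

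For each fixed $\ell$ I binary-search $C_\ell[i,j]$ from the highest bit to the lowest, making $O(\log n)$ calls to AE-MonoEq$\Delta$. At step $m$, given the current estimate $\tilde C^{m+1}_\ell[i,j]$ of the top bits of $C_\ell[i,j]$, I build an instance on vertex parts $I\cup J\cup K$ designed so that $(i,j)$ lies in a monochromatic equality triangle exactly when some $k$ witnesses both $E_\ell(A_{i,k},B_{k,j})$ and $B_{k,j}\in[\tilde C^{m+1}_\ell[i,j],\tilde C^{m+1}_\ell[i,j]+2^m)$. I include an edge $(i,k)$ only when bit $\ell$ of $A_{i,k}$ equals $0$ (and, when $m>\ell$, bit $m$ of $A_{i,k}$ equals $0$); symmetric exclusions apply to $(k,j)$, requiring bit $\ell$ of $B_{k,j}$ to be $1$ (and, when $m>\ell$, bit $m$ of $B_{k,j}$ to be $0$). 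The color on each such edge records the bits from position $\max(m+1,\ell+1)$ upward of the respective quantity; the color on $(i,j)$ records the same bit-range of $\tilde C^{m+1}_\ell[i,j]$. The monochromatic condition thus forces agreement on those high bits among $A_{i,k}$, $B_{k,j}$, and $\tilde C^{m+1}_\ell[i,j]$.

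The ``middle'' bits that the color does not cover are pinned down by the value-equality requirement. When $m>\ell$, I set $v(i,k)$ to bits $\ell+1,\ldots,m-1$ of $A_{i,k}$ and $v(k,j)$ to bits $\ell+1,\ldots,m-1$ of $B_{k,j}$, so $v(i,k)=v(k,j)$ completes $E_\ell$ on the gap between the color range and position $\ell$. When $m<\ell$, I instead set $v(k,j)$ to bits $m,\ldots,\ell-1$ of $B_{k,j}$ and $v(i,j)$ to the corresponding bits of $\tilde C^{m+1}_\ell[i,j]$, so $v(k,j)=v(i,j)$ forces the intermediate bits of $B_{k,j}$ to agree with $\tilde C^{m+1}_\ell[i,j]$. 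The ``third'' edge in each regime receives a fresh globally unique value so that no spurious pairwise equality can occur. A direct case check then verifies that a monochromatic equality triangle through $(i,j)$ exists iff the desired witness $k$ exists, and the binary search proceeds accordingly.

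The main obstacle will be the $m<\ell$ regime, where the bits of $B_{k,j}$ between positions $m$ and $\ell-1$ are unconstrained by $A_{i,k}$ and so cannot be pinned down using the $(i,k)$ edge; this is precisely what forces the switch from value matching between $(i,k)$ and $(k,j)$ to value matching between $(k,j)$ and $(i,j)$. The degenerate $m=\ell$ step is vacuous because the two bit-$\ell$ exclusions on $B_{k,j}$ contradict, which correctly certifies that bit $\ell$ of $C_\ell$ equals $1$. Summing $O(\log n)$ values of $\ell$ against $O(\log n)$ binary-search steps per $\ell$, on top of the initial $(\min,=)$-product computation, yields the claimed $O(T(n)\log^2 n)$ running time.
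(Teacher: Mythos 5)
Your proposal is correct and takes essentially the same route as the paper: discretize to $O(\log n)$-bit entries, decompose by the highest bit $\ell$ at which $A_{i,k}$ and $B_{k,j}$ differ, and for each $\ell$ run a parallel binary search on the answer using AE-MonoEq$\Delta$ calls in which the color enforces agreement on the top bits, the value-equality pins down a middle window of bits, and edge deletion handles a few individual bits. The paper packages each per-$\ell$ search as two phases (first invoking Theorem~\ref{thm:min-equal-to-mono-equal-triangle} on masked matrices $\tilde A^\ell,\tilde B^\ell$ to extract the bits above $\ell$, then a separate binary search for the bits below $\ell$ with value-equality between the $(i,j)$ and $(k,j)$ edges), whereas you run a single binary search from the top bit downward and switch the encoding at $m=\ell$; these are interchangeable. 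One detail worth adding: you should explicitly handle pairs with $C_\ell[i,j]=\infty$ (no witness $k$), e.g.\ via the sentinel column/row trick used in the paper's proof of Theorem~\ref{thm:min-equal-to-mono-equal-triangle} or a preliminary existence check per $\ell$, since otherwise the binary search invariant (bit $\ell$ of $\tilde C^{m+1}_\ell$ equal to $1$ for $m<\ell$) breaks and returns garbage for those entries.
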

\begin{proof}

First, we can add $1$ to every entry of $B$, so now the problem becomes a $(\min, <)$-product. Also, we can easily discretize the entries of $A$ and $B$, so that we can assume all entries are integers between $0$ and $2n^2-1$. 

For every $A_{i, k} < B_{k, j}$, there is some integer $\ell$ such that the bit corresponding to $2^{\ell-1}$ in $A_{i, k}$'s binary representation is $0$, the bit corresponding to $2^{\ell-1}$ in $B_{k, j}$'s binary representation is $1$, and $\lfloor \frac{A_{i, k}}{2^{\ell}} \rfloor = \lfloor \frac{B_{k, j}}{2^{\ell}} \rfloor$. Our algorithm enumerates this $\ell$, and handles different $\ell$ independently. 

Fix some $1 \le \ell \le \lceil \log_2 (2n^2)\rceil$. We aim to compute $C^\ell$, where $C^\ell_{i, j}$ is defined as $$C^\ell_{i, j} = \min_k \{B_{k, j} | A_{i, k} < B_{k, j} \wedge \text{ the highest differing bit of } A_{i, k} \text{ and } B_{k, j} \text{ is the bit corresponding to } 2^{\ell-1}\}.$$

We create new matrices $\tilde{A}^\ell$ and $\tilde{B}^\ell$. For some $i, k$, if the bit corresponding to $2^{\ell-1}$ in the binary representation of $A_{i, k}$ is $0$, we set $\tilde{A}^\ell_{i, k}$ to  $\lfloor \frac{A_{i, k}}{2^\ell} \rfloor$; otherwise, we set $\tilde{A}^\ell_{i, k}$ to $-1$. Similarly, For some $k, j$, if the bit corresponding to $2^{\ell-1}$ in the binary representation of $B_{k, j}$ is $1$, we set $\tilde{B}^\ell_{k, j}$ to  $\lfloor \frac{B_{k, j}}{2^\ell} \rfloor$; otherwise, we set $\tilde{B}^\ell_{k, j}$ to $-2$. Then we use Theorem~\ref{thm:min-equal-to-mono-equal-triangle} to compute the $(\min, =)$-product $\tilde{C}^\ell$ of $\tilde{A}^\ell$ and $\tilde{B}^\ell$ in $O(T(n) \log n)$ time. If $\tilde{C}^\ell_{i, j} < \infty$, then clearly $\tilde{C}^\ell_{i,j} \cdot 2^\ell \le C^\ell_{i,j} < \tilde{C}^\ell_{i,j} \cdot 2^\ell + 2^\ell$; if $\tilde{C}^\ell_{i, j} = \infty$, then $C^\ell_{i, j}$ is also $\infty$. 

For every $i, j$, it suffices to find $\min_k \{B_{k,j} | \tilde{A}^\ell_{i, k} = \tilde{B}^\ell_{k, j} = \tilde{C}^\ell_{i, j}\}$. We can use the parallel binary search idea again. Create a complete tripartite graph $G$ with vertex set $I \cup J \cup K$. For edge $(i, j) \in I \times J$, we use $\tilde{C}^\ell_{i, j}$ as its color; for edge $(i, k) \in I \times K$, we use $\tilde{A}^\ell_{i, k}$ as its color; for edge $(k, j) \in K \times J$, we use $\tilde{B}^\ell_{k, j}$ as its color. The values of the graph will be on edge set $(I \times J) \cup (J \times K)$. 

For every $r \le \ell$, we will compute an estimate $\tilde{C}^{\ell, r}_{i, j}$ such that $\tilde{C}^{\ell, r}_{i, j}$ is a multiple of $2^r$ and $\tilde{C}^{\ell, r}_{i, j} \le C^\ell_{i, j} < \tilde{C}^{\ell, r}_{i, j} + 2^r$. When $r=\ell$, we can clearly set $\tilde{C}^{\ell, r}_{i, j} = \tilde{C}^\ell_{i,j} \cdot 2^\ell$. Now suppose we have computed $\tilde{C}^{\ell, r + 1}$ and want to compute $\tilde{C}^{\ell, r}$. On the graph $G$, we set the value of edge $(i, j) \in I \times J$ to be $\lfloor \frac{\tilde{C}^{\ell, r + 1}}{2^r}\rfloor$, and set the value of edge $(j, k) \in J \times K$ to be $\lfloor \frac{B_{k, j}}{2^r} \rfloor$. Then we use the $T(n)$ time AE-MonoEq$\Delta$ algorithm on graph $G$. For every $(i, j)$, if it is on a monochromatic equality triangle, we set $\tilde{C}^{\ell, r}_{i, j} = \tilde{C}^{\ell, r+1}_{i, j}$; otherwise, we set $\tilde{C}^{\ell, r}_{i, j} = \tilde{C}^{\ell, r+1}_{i,j} + 2^r$. 

Clearly, $\tilde{C}^{\ell, r}_{i, j}$ is a multiple of $2^r$, since $\tilde{C}^{\ell, r+1}_{i,j}$ is a multiple of $2^{r+1}$. If $(i, j)$ is on a monochromatic equality triangle, then there exists $k$ such that $\lfloor \frac{B_{k, j}}{2^r} \rfloor = \lfloor \frac{\tilde{C}^{\ell, r + 1}}{2^r}\rfloor$ and $\tilde{A}^\ell_{i, k} = \tilde{B}^\ell_{k, j} = \tilde{C}^\ell_{i, j}$. Also because $\tilde{C}^{\ell, r + 1}$ is a multiple of $2^{r+1}$, we must have $\tilde{C}^{\ell, r + 1} \le B_{k, j} < \tilde{C}^{\ell, r + 1} + 2^r$. Since $C^\ell_{i, j} \ge \tilde{C}^{\ell, r + 1}$, we must have $\tilde{C}^{\ell, r + 1} \le C^\ell_{i, j} < \tilde{C}^{\ell, r + 1} + 2^r$. Thus, it is valid to set $\tilde{C}_{i, j}^{\ell, r} = \tilde{C}_{i, j}^{\ell, r+1}$ in this case.

If  $(i, j)$ is not on a monochromatic equality triangle, then we can similarly show that the best $B_{k, j}$ where $\tilde{A}^\ell_{i, k} = \tilde{B}^\ell_{k, j} = \tilde{C}^\ell_{i, j}$ must be  at least $\tilde{C}^{\ell, r + 1} + 2^r$. Also, by the guarantee of $\tilde{C}^{\ell, r + 1}_{i, j}$, the best $B_{k, j}$ must be smaller than $\tilde{C}^{\ell, r + 1}_{i, j}+2^{r+1}$. 
Thus, it is valid to set $\tilde{C}_{i, j}^{\ell, r} = \tilde{C}_{i, j}^{\ell, r+1} + 2^r$ in this case since
$\tilde{C}^{\ell, r + 1}_{i, j} + 2^r \le C^\ell_{i, j} < \tilde{C}^{\ell, r + 1}_{i, j} + 2^r + 2^r$. 

After we compute $\tilde{C}^{\ell, r}$ for all $0 \le r \le \ell$, we can simply set $C^\ell = \tilde{C}^{\ell, 0}$, since the guarantee of $\tilde{C}^{\ell, 0}$ is $\tilde{C}^{\ell, 0}_{i, j} \le \tilde{C}^\ell_{i, j} < \tilde{C}^{\ell, 0}_{i, j}+2^0$.

After we compute $C^\ell$ for every $\ell$, we can compute $C_{i, j} = \min_\ell C^\ell_{i, j}$. The overall time complexity is $O(T(n) \log^2 n)$ since the number of bit $\ell$ is $O(\log n)$, and computing $C^\ell$ for each $\ell$ takes $O(T(n) \log n)$ time. 
\end{proof}

Using a similar proof, we can get a reduction to AE-MonoEq$\Delta$ from $(\max, \le)$ product. Even though $(\max, \le)$-product looks similar to $(\min, \le)$-product, and their best algorithms both run in $\tilde{O}(n^{(3+\omega)})$ time \cite{duanpettiebott}, we  don't know if they are equivalent. 
\begin{proposition}
\label{prop:max-less-to-mono-equal-triangle}
If there is a $T(n)$ time algorithm for AE-MonoEq$\Delta$, then there is an $O(T(n) \log^2 n)$ time algorithm for $(\max, \le)$-product. 
\end{proposition}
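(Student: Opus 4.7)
The plan is to mirror the proof of Theorem~\ref{thm:min-less-to-mono-equal-triangle}, swapping the roles of minimum and maximum throughout. First, reduce $(\max,\le)$ to $(\max,<)$ by replacing $A$ with $A-1$, so that $A_{i,k}\le B_{k,j}$ becomes $A_{i,k}<B_{k,j}$, and discretize entries to lie in $[0, 2n^2)$. Decompose the answer as $C_{i,j}=\max_\ell C^\ell_{i,j}$ over $\ell\in\{1,\dots,\lceil\log_2(2n^2)\rceil\}$, where $C^\ell_{i,j}$ is the maximum $B_{k,j}$ such that the highest differing bit between $A_{i,k}$ and $B_{k,j}$ is the bit of weight $2^{\ell-1}$, with $A$'s bit being $0$ and $B$'s being $1$ (so that such a $k$ witnesses $A_{i,k}<B_{k,j}$). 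It then suffices to compute each $C^\ell$ in time $O(T(n)\log n)$.

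For a fixed $\ell$, first recover the top bits $\lfloor C^\ell_{i,j}/2^\ell\rfloor$ by running a $(\max,=)$-product on the same modified matrices $\tilde A^\ell,\tilde B^\ell$ used in the proof of Theorem~\ref{thm:min-less-to-mono-equal-triangle}, where bit-$(\ell{-}1)$ violators are replaced by two distinct ``impossible'' sentinels that can never witness equality with anything useful. A $(\max,=)$-product reduces to a $(\min,=)$-product by negating all non-sentinel entries and choosing the sentinels to have large absolute value so their negations cannot spuriously match any negated prefix; by Theorem~\ref{thm:min-equal-to-mono-equal-triangle} this costs $O(T(n)\log n)$.

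Next, recover the low $\ell$ bits of $C^\ell_{i,j}$ by parallel binary search using AE-MonoEq$\Delta$, again mirroring the $(\min,\le)$ proof but probing bit value $1$ before $0$ at each level. Maintain estimates $\tilde C^{\ell,r}_{i,j}$ that are multiples of $2^r$ and satisfy $\tilde C^{\ell,r}_{i,j}\le C^\ell_{i,j}<\tilde C^{\ell,r}_{i,j}+2^r$, initialized at $r=\ell$ from the prefix above. For $r$ decreasing to $0$, build the same tripartite graph with colors from $\tilde A^\ell,\tilde B^\ell,\tilde C^\ell$ on $I\times K$, $K\times J$, $I\times J$ respectively, place values $\lfloor B_{k,j}/2^r\rfloor$ on $J\times K$ edges, and place the ``ambitious'' value $\lfloor(\tilde C^{\ell,r+1}_{i,j}+2^r)/2^r\rfloor$ on $I\times J$ edges. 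A single AE-MonoEq$\Delta$ call then reveals, for each $(i,j)$, whether some witness $k$ achieves a $B_{k,j}$ in the upper half $[\tilde C^{\ell,r+1}_{i,j}+2^r,\,\tilde C^{\ell,r+1}_{i,j}+2^{r+1})$; if so set $\tilde C^{\ell,r}_{i,j}=\tilde C^{\ell,r+1}_{i,j}+2^r$, otherwise $\tilde C^{\ell,r}_{i,j}=\tilde C^{\ell,r+1}_{i,j}$. Correctness is verified exactly as in the $(\min,\le)$ proof with inequalities reversed; in particular the bound $\tilde C^{\ell,r+1}_{i,j}\le C^\ell_{i,j}<\tilde C^{\ell,r+1}_{i,j}+2^{r+1}$ guarantees that the two candidate refinements are the only options.

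Summing over $O(\log n)$ choices of $\ell$, each costing one $(\max,=)$-product plus $O(\log n)$ AE-MonoEq$\Delta$ calls, yields the claimed $O(T(n)\log^2 n)$ bound. The only mildly delicate point is the $(\max,=)\to(\min,=)$ reduction, where one must pick the sentinels so that their negations do not collide with any negated valid prefix; since prefixes lie in $[0,2n^2)$, sentinels below $-2n^2$ suffice. Beyond this, every step is a direct dual of the $(\min,\le)$ argument and should present no additional obstacle.
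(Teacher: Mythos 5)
Your proposal is correct and is essentially the dualization of the proof of Theorem~\ref{thm:min-less-to-mono-equal-triangle} that the paper alludes to: decompose by the highest differing bit, recover the prefix $\lfloor C^\ell_{i,j}/2^\ell\rfloor$ via a $(\max,=)$-product, then recover the low bits by parallel binary search probing the upper half. Your reduction from $(\max,=)$-product to $(\min,=)$-product by negating non-sentinel entries with well-separated sentinel values is a clean, correct shortcut for the prefix step; one could equivalently dualize the binary search inside Theorem~\ref{thm:min-equal-to-mono-equal-triangle} directly, with the same $O(T(n)\log n)$ cost.
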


Now we consider the Monochromatic Equality Product problem, which can be viewed as Case (a) of AE-MonoEq$\Delta$. 

Note that the proof of Theorem~\ref{thm:min-equal-to-mono-equal-triangle} only uses Case (a) of AE-MonoEq$\Delta$, so the same proof actually shows a reduction from $(\min, =)$-product to Monochromatic Equality Product. In fact, we will show that Monochromatic Equality Product is equivalent to Monochromatic $(\min, =)$-product which is a stronger version of $(\min, =)$-product.



The best algorithm for $(\min, =)$-product runs in $\tilde{O}(n^{(3+\omega)/2}) = \tilde{O}(n^{2.687})$ time, while the best algorithm for Equality Product runs in $\tilde{O}(n^{2.6598})$ time \cite{gold2017dominance}, where the improvement is brought by rectangular matrix multiplication. Therefore, we don't know if Min Equality product is equivalent to Equality Product. The following theorem suggests that if we add the Monochromatic constraint to both problems, they become equivalent up to poly-logarithmic factors.

\begin{theorem}
\label{thm:monomineq}
If there is a $T(n)$ time algorithm for Monochromatic $(\min, =)$-product then there is an $O(T(n))$ time algorithm for Monochromatic Equality Product. Also, if there is a $T(n)$ time algorithm for Monochromatic Equality Product, then there is an $O(T(n) \log n)$ time algorithm for Monochromatic $(\min, =)$-product. 

\end{theorem}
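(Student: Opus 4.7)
The plan is to establish both directions of the equivalence. The forward direction (MonoMinEq to MonoEq) is immediate: given a $T(n)$ time algorithm for MonoMinEq, run it once and report ``yes'' for an $(i,j)$ pair iff the returned minimum is finite. This is clearly $O(T(n))$. The remainder of the plan is devoted to the reverse direction, which I intend to handle via parallel binary search with $O(\log n)$ calls to MonoEq, closely mirroring the structure used in Theorem~\ref{thm:min-equal-to-mono-equal-triangle}.

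First I would discretize the edge values so that all values are integers in $[0, U)$ for some $U=O(n^2)$ (relabel by rank in $O(n^2 \log n)$ time; this preserves equalities and the order of equal values). Next I would make one call to the MonoEq algorithm on the original graph $G$ to flag those $(i,j)$ for which no valid witness exists at all; their output will be $\infty$. For each remaining $(i,j)$ (which is guaranteed to have at least one valid witness) I maintain an estimate $\tilde{C}^\ell_{i,j}$ satisfying the invariant that $\tilde{C}^\ell_{i,j}$ is a multiple of $2^\ell$ and the true answer $C_{i,j}$ lies in $[\tilde{C}^\ell_{i,j},\, \tilde{C}^\ell_{i,j} + 2^\ell)$. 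The base case $\ell = \lceil \log_2 U \rceil$ with $\tilde{C}^\ell_{i,j} = 0$ trivially satisfies this.

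The core step reduces $\ell$ to $\ell-1$ using a single MonoEq call. I build a new instance on the same vertex set, edge set, and values, but with modified colors: the color of edge $(i,k)$ becomes the pair $(c(i,k),\, \lfloor v(i,k)/2^{\ell-1} \rfloor)$, the color of $(j,k)$ becomes $(c(j,k),\, \lfloor v(j,k)/2^{\ell-1} \rfloor)$, and the color of $(i,j)$ becomes $(c(i,j),\, \lfloor \tilde{C}^\ell_{i,j}/2^{\ell-1} \rfloor)$. Calling MonoEq on this instance detects, for each $(i,j)$, whether some $k$ satisfies simultaneously $c(i,k)=c(j,k)=c(i,j)$, $v(i,k)=v(j,k)$, and $\lfloor v(i,k)/2^{\ell-1}\rfloor = \lfloor \tilde{C}^\ell_{i,j}/2^{\ell-1}\rfloor$. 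Because $\tilde{C}^\ell_{i,j}$ is a multiple of $2^\ell$, the last condition is exactly that the witness value lies in the lower half $[\tilde{C}^\ell_{i,j},\, \tilde{C}^\ell_{i,j} + 2^{\ell-1})$ of the current range. If the answer is yes, set $\tilde{C}^{\ell-1}_{i,j} := \tilde{C}^\ell_{i,j}$; if no, then since we already know a valid witness exists somewhere in $[\tilde{C}^\ell_{i,j},\, \tilde{C}^\ell_{i,j}+2^\ell)$, it must lie in the upper half and we set $\tilde{C}^{\ell-1}_{i,j} := \tilde{C}^\ell_{i,j} + 2^{\ell-1}$. This preserves the invariant, so after $\lceil \log_2 U\rceil = O(\log n)$ halving steps we obtain $\tilde{C}^0 = C$.

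The total running time is one MonoEq call for the finiteness filter plus $O(\log n)$ MonoEq calls for the bit extraction, each on an instance of the same size as the original (the augmented colors are just pairs, which can be hashed into single integers in $\tilde O(1)$ per edge), giving $O(T(n) \log n)$ overall. The main conceptual step is the recoloring trick, and the main obstacle I anticipate is ensuring the invariant cleanly handles $(i,j)$ with $C_{i,j}=\infty$; this is what the initial MonoEq filter is designed to address, so no more than a factor of one extra call is lost.
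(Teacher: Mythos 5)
Your proof is correct and follows essentially the same approach as the paper: the forward direction is the same trivial observation, and for the reverse direction you use exactly the paper's parallel binary search, encoding the length-$\ell$ prefix of the values into the edge colors and halving the search interval with one MonoEq call per bit, with an initial call to isolate the $\infty$ entries. The only cosmetic difference is that you index the recurrence from $\ell$ down to $\ell-1$ rather than from $\ell+1$ down to $\ell$.
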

\begin{proof}
The first direction is trivially true, since Monochromatic $(\min, =)$-product computes more information than Monochromatic Equality Product. The second direction is more interesting.

Let $\mathbb{A}$ be an algorithm for Monochromatic Equality Product. 
Suppose we have an instance of Monochromatic $(\min, =)$-product of $n \times n$ matrices, with vertex sets $I, J, K$, edge colors $c(\cdot, \cdot)$, and edge values $A_{i, k}$ for $(i, k) \in I \times K$ and edge values $B_{k, j}$ for $(k, j) \in K \times J$. 
Clearly, we can discretize all the values so that they are integers between $[0, 2n^2)$. Let $C_{i, j}$ be the minimum value of $A_{i, k}$ such that $A_{i, k} = B_{k, j}$ and $c(i, k) = c(k, j) = c(i, j)$. Using $\mathbb{A}$, we can easily decide whether $C_{i, j}$ is $\infty$ for all pairs of $(i, j)$. Thus, we can focus on determining values for the finite entries of $C$ in the following.

We use the parallel binary search idea from before. Let $t = \lceil \log(2n^2)\rceil$. For each integer $0 \le \ell \le t$, we aim to compute $\tilde{C}_{i, j}^\ell$ so that $\tilde{C}_{i, j}^\ell$ is a multiple of $2^{\ell}$, and $\tilde{C}_{i, j}^\ell \le C_{i, j} < \tilde{C}_{i, j}^\ell + 2^\ell$. $\tilde{C}_{i, j}^t$ is easy to compute, since we can just set $\tilde{C}_{i, j}^t = 0$. 

Suppose for some $0 \le \ell < t$, we have computed $\tilde{C}^{\ell+1}$, we will compute $\tilde{C}^\ell$. In order to perform the binary search, we only need to know for each pair of $i, j$, whether there exists $k$ such that $c(i, k) = c(k, j) = c(i, j)$, $A_{i, k} = B_{k, j}$ and $\lfloor A_{i, k} / 2^\ell \rfloor = \lfloor B_{k, j} / 2^\ell \rfloor = \lfloor \tilde{C}^{\ell+1}_{i, k} / 2^\ell \rfloor$. If there exists one, then we can set $\tilde{C}_{i, j}^\ell = \tilde{C}_{i, j}^{\ell + 1}$; otherwise, we set $\tilde{C}_{i, j}^\ell = \tilde{C}_{i, j}^{\ell + 1} + 2^{\ell}$.

To determine the existence of such $k$, we use the algorithm for Monochromatic Equality Product.
We can create a Monochromatic Equality Product instance on the same vertex set. For each edge $(i, j) \in I \times J$, we set its color to $(c(i, j), \lfloor \tilde{C}^{\ell+1}_{i, k} / 2^\ell \rfloor)$; for $(i, k) \in I \times K$, we set its color to $(c(i, k), \lfloor A_{i, k} / 2^\ell \rfloor)$; for $(j, k) \in J\times K$, we set its color to $(c(j, k), \lfloor B_{k, j} / 2^\ell \rfloor)$. The values of the Monochromatic Equality Product is the same as the values of the original instance. Thus, if we use the $T(n)$ time algorithm $\mathbb{A}$ on this Monochromatic Equality Product instance, we would be able to compute $\tilde{C}^\ell$, and thus continue the binary search. 

After we compute $\tilde{C}^0$, we can easily set $C_{i, j} = \tilde{C}^0_{i, j}$ if $C_{i, j} < \infty$. Therefore, the algorithm runs in $O(T(n) \log n)$ time. 

\end{proof}

We further consider Monochromatic $(\min, \le)$-product, which turns out also has an $\tilde{O}(n^{(3+\omega)/2})$ time algorithm.

\begin{proposition}
\label{prop:monominle}
If there is a $T(n)$ time algorithm for AE-MonoEq$\Delta$, then there is an $O(T(n) \log^2 n)$ time algorithm for Monochromatic $(\min, \le)$ Product. 
\end{proposition}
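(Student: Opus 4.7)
The plan is to imitate the proof of Theorem~\ref{thm:min-less-to-mono-equal-triangle} essentially verbatim, threading the input colors $c(\cdot)$ through every intermediate instance as an extra color coordinate, and replacing each call to $(\min,=)$-product by a call to Monochromatic $(\min,=)$-product (which is available in $O(T(n)\log n)$ time via Theorem~\ref{thm:monomineq}). The reason this works is that the bitwise decomposition of $v(i,k)<v(j,k)$ used in Theorem~\ref{thm:min-less-to-mono-equal-triangle} is completely independent of any color constraint: appending the input color to every color tuple we build preserves both the correctness of the bit trick and the semantics of the monochromatic constraint.

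First I would preprocess: add $1$ to every value on $J\times K$ edges to turn $v(i,k)\le v(j,k)$ into $v(i,k)<v(j,k)$, and discretize so all values lie in $[0,2n^2)$. For each bit position $\ell\in\{1,\dots,\lceil\log(2n^2)\rceil\}$ define $C^\ell_{i,j}$ to be the minimum $v(j,k)$ over those $k$ with $c(i,k)=c(j,k)=c(i,j)$ whose highest differing bit with $v(i,k)$ is bit $\ell-1$; the desired answer is $\min_\ell C^\ell_{i,j}$. For a fixed $\ell$, I extract the high-order bits of $C^\ell$ by constructing a Monochromatic $(\min,=)$-product instance on $I\cup J\cup K$ that keeps the input colors $c(\cdot)$ and uses edge values $\lfloor v(i,k)/2^\ell\rfloor$ on $I\times K$ when bit $\ell-1$ of $v(i,k)$ is $0$ (and sentinel $-1$ otherwise), and $\lfloor v(j,k)/2^\ell\rfloor$ on $J\times K$ when bit $\ell-1$ of $v(j,k)$ is $1$ (and sentinel $-2$ otherwise). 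Because $-1\ne -2$, a monochromatic equality match captures exactly the highest-differing-bit condition. Invoking Theorem~\ref{thm:monomineq} gives $\tilde C^\ell_{i,j}=\lfloor C^\ell_{i,j}/2^\ell\rfloor$ in $O(T(n)\log n)$ time.

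Next I run the parallel binary search of Theorem~\ref{thm:min-less-to-mono-equal-triangle} on the remaining $\ell$ low bits, but using AE-MonoEq$\Delta$ directly. Maintain a multiple-of-$2^r$ estimate $\tilde C^{\ell,r}_{i,j}$ satisfying $\tilde C^{\ell,r}_{i,j}\le C^\ell_{i,j}<\tilde C^{\ell,r}_{i,j}+2^r$, starting from $\tilde C^{\ell,\ell}_{i,j}=2^\ell\tilde C^\ell_{i,j}$. To refine from $r+1$ to $r$, build an AE-MonoEq$\Delta$ instance on the same vertex set whose color on each edge is the \emph{pair} of the input color $c(\cdot)$ and the corresponding bit-label from the previous step (i.e.\ $\tilde C^\ell_{i,j}$ on $I\times J$, the modified $I\times K$ label on $I\times K$, and the modified $J\times K$ label on $J\times K$), and whose edge values are $\lfloor \tilde C^{\ell,r+1}_{i,j}/2^r\rfloor$ on $I\times J$, $\lfloor v(j,k)/2^r\rfloor$ on $J\times K$, and a globally unused sentinel on $I\times K$. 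A monochromatic equality triangle at edge $(i,j)$ then exists iff some $k$ witnesses all previously fixed bits and also agrees on bit $r$; set $\tilde C^{\ell,r}_{i,j}=\tilde C^{\ell,r+1}_{i,j}$ in that case, and $\tilde C^{\ell,r+1}_{i,j}+2^r$ otherwise. After $\ell$ rounds $\tilde C^{\ell,0}_{i,j}=C^\ell_{i,j}$, and the final answer is $\min_\ell C^\ell_{i,j}$. The total cost is $O(\log n)$ choices of $\ell$, each using one Monochromatic $(\min,=)$-product call ($O(T(n)\log n)$) plus $O(\log n)$ AE-MonoEq$\Delta$ calls ($O(T(n))$ each), giving $O(T(n)\log^2 n)$ overall.

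The only point requiring care is ensuring the two edges whose values must coincide in a ``monochromatic equality'' triangle are the intended pair ($I\times J$ and $J\times K$). I expect this to be the main (though still minor) obstacle, and it is handled exactly as above by giving the $I\times K$ edge a sentinel value disjoint from the value ranges used on the other two edge classes, so it cannot accidentally supply the matching pair. Once this is verified, every other verification is a direct transcription of the arguments in Theorem~\ref{thm:min-less-to-mono-equal-triangle} and Theorem~\ref{thm:monomineq}.
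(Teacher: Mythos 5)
Your proposal is correct and follows essentially the same route as the paper's (only briefly sketched) proof: reduce to the strict version, enumerate the highest differing bit $\ell$, recover the shared high-order prefix $\lfloor C^\ell_{i,j}/2^\ell\rfloor$ via Monochromatic $(\min,=)$-product (available in $O(T(n)\log n)$ time because MonoEq is a special case of AE-MonoEq$\Delta$ via a sentinel value on $I\times J$ and then Theorem~\ref{thm:monomineq} applies), then fold both the input color and the prefix into a composite color and run the parallel binary search over the low bits with values on $I\times J$ and $J\times K$. Your sentinel on the $I\times K$ values and the product-color trick are exactly the details the paper leaves to the reader, and the cost accounting $O(\log n)\cdot\bigl(O(T(n)\log n)+O(\log n)\cdot O(T(n))\bigr)=O(T(n)\log^2 n)$ is correct.
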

The proof for Proposition~\ref{prop:monominle} is essentially a combination of the proof of Theorem~\ref{thm:min-less-to-mono-equal-triangle} and the ideas used in the proof of Theorem~\ref{thm:monomineq}, so we won't describe it in full detail for conciseness. For a high level description, we first reduce this problem to Monochromatic $(\min, <)$ Product with entries in $\{0, \ldots, 2n^2-1\}$. Then we enumerate the first differing bit in the binary representation of $A_{i, k}$ and $B_{k, j}$, and use Theorem~\ref{thm:monomineq} to find the smallest common binary prefix of $A_{i, k}$ and $B_{k, j}$. After we have this common prefix, we use it, together with the original color of the graph, as the color for a new graph. Also, we use edge values on $I \times J$ and $J \times K$ for performing parallel binary search, similar to what described in the proof of Theorem~\ref{thm:min-less-to-mono-equal-triangle}.

\section{Conclusion}
\begin{figure}[H]
    \centering
    \begin{tikzpicture}

        \node at(0, 0)  [] (monoT){AE-Mono$\Delta$};
     
        \node at(0, -1.5)  [] (monoeqT){AE-MonoEq$\Delta$};
        \node at(-3, -3)  [] (monoeqP){MonoEq};
        \node at(-6, -3)  [] (monomineqP){MonoMinEq};
        \node at(0, -3)  [] (minle){$(\min, \le)$};
        \node at(3, -3)  [] (maxle){$(\max, \le)$};
        
        \node at(-3, -4.5)  [] (mineq){$(\min, =)$};
        \node at(0, -4.5)  [] (maxmin){$(\max, \min)$};

        \draw[<-,dashed, bend left] (monoT.290) to[]  node[] {} (monoeqT.70);
        \draw[-Implies, double distance=2pt, bend right] (monoT.250) to[]  node[] {} (monoeqT.110);
        
        \draw[-Implies,double distance=2pt]  (monoeqP.north) to[]  node[] {} (monoeqT.230);
        \draw[Implies-Implies,double distance=2pt]  (monoeqP.west) to[]  node[] {} (monomineqP.east);
        \draw[-Implies,double distance=2pt]  (minle.north) to[]  node[] {} (monoeqT.south);
        \draw[-Implies,double distance=2pt] (maxle.north) to[]  node[] {} (monoeqT.310);
        
        \draw[-Implies,double distance=2pt] (mineq.north) to[]  node[] {} (monoeqP.south);
        \draw[-Implies,double distance=2pt] (maxmin.north) to[]  node[right] {\cite{VassilevskaWY07}} (minle.south);
        
        \draw[opacity=0.4, dashed, rounded corners=3] (current bounding box.north east) -- (current bounding box.north west) -- (current bounding box.south west) -- (current bounding box.south east) -- cycle;

        \node at(current bounding box.west)  [anchor=east] (){\textcolor{blue}{$\tilde{O}(n^{(3+\omega)/2})$}};

        \pgfmathsetmacro{\dx}{-7.35}
        \pgfmathsetmacro{\dy}{1.5}
        
        \node at(2 + \dx, 2 + \dy)  [anchor=center] (3sum){3SUM};
        \node at(3sum.west)  [anchor=east] (){\textcolor{blue}{$O(n^2)$}};
        
        \node at(0 + \dx, 0 + \dy)  [anchor=center] (apsp){APSP};
         \node at(apsp.west)  [anchor=east] (){\textcolor{blue}{$O(n^3)$}};
         
         \node at(0 + \dx, 0 + \dy)  [anchor=center] (apspholder){\begin{tabular}{c}  \\ \ \ \ \ \ \ \ \end{tabular}};
         
        \node at(2 + \dx, 0 + \dy)  [anchor=center] (exact){\begin{tabular}{c} Exact$\Delta$ \\ Zero$\Delta$ \end{tabular}};
        
        \node at(5.5 + \dx, 0 + \dy)  [anchor=west] (sparses){$[m^{1/3}] \times $ AE-Sparse$\Delta$};
        
        \node at(10 + \dx, 0 + \dy)  [anchor=west] (sparse){AE-Sparse$\Delta$};
        

        \node at(5.5 + \dx, 1 + \dy)  [anchor=west] (inter){SetIntersection};
        
        \node at(5.5 + \dx, 2 + \dy)  [anchor=west] (disj){SetDisjointness};
        \node at(disj.east)  [anchor=west] (){\textcolor{blue}{$O(n^2)$}};

         
         
         

        \draw[->,line width=1pt] (3sum.south) to[]  node[right] {\cite{VWfindingcountingj}} (exact.north);
        \draw[->,line width=1pt] (apsp.east) to[]  node[above] {\cite{focsyj}} (exact.west);
        \draw[->,line width=1pt] (exact.east) to[]  node[] {} (disj.west);
        \draw[->,line width=1pt] (exact.east) to[]  node[] {} (inter.west);
        \draw[->,line width=1pt] (exact.east) to[]  node[] {} node[pos=0.8,above]{$m^{5/3}$}(sparses.west);
        
        \draw[->,line width=1pt] (sparses.east) to[]  node[below] {\textit{trivial}} node[pos=0.8,above]{$m^{4/3}$}(sparse.west);
        
        \draw[->,line width=1pt] (sparses.south) to[]  node[left] {\cite{lincoln2020monochromatic}} node[pos=0.8,right]{$n^{5/2}$}(monoT.north);
        
        \draw[opacity=0.4, dashed, rounded corners=3] (3sum.north east) -- (3sum.north west) -- (3sum.south west) -- (3sum.south east) -- cycle;

        \draw[opacity=0.4, dashed, rounded corners=3] (exact.north east) -- (apspholder.north west) -- (apspholder.south west) -- (exact.south east) -- cycle;
        
         \draw[opacity=0.4, dashed, rounded corners=3] (disj.north east) -- (disj.north west) -- (inter.south west) -- (inter.south east) -- cycle;

    \end{tikzpicture}
    \caption{Main reductions in this paper. Single arrows represent normal fine-grained reductions. Double arrows represent that the running times before and after the reduction are the same up to poly-logarithmic factors. Dashed arrows represent reductions that hold only when $\omega > 2$. }
    \label{fig:conclusion}
\end{figure}
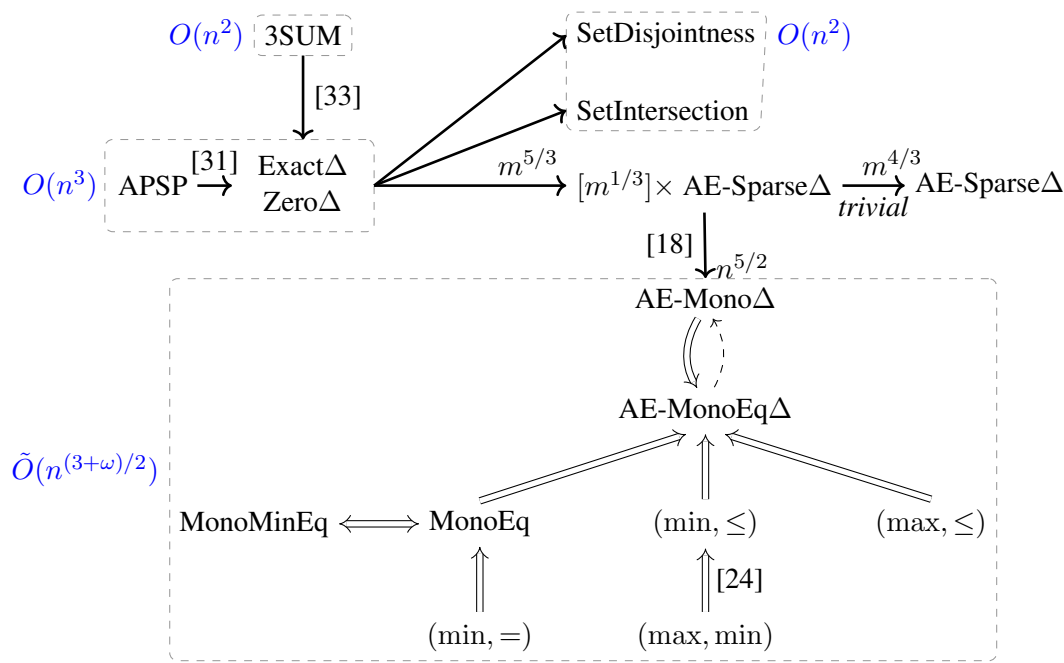

\bibliographystyle{plain}
\bibliography{ref}

\begin{thebibliography}{10}

\bibitem{abboud2018more}
Amir Abboud, Holger Dell, Karl Bringmann, and Jesper Nederlof.
\newblock More consequences of falsifying seth and the orthogonal vectors
  conjecture.
\newblock In {\em 50th Annual ACM Symposium on Theory of Computing, STOC 2018},
  pages 445--456. Association for Computing Machinery, Inc, 2018.

\bibitem{abboud2014popular}
Amir Abboud and Virginia~Vassilevska Williams.
\newblock Popular conjectures imply strong lower bounds for dynamic problems.
\newblock In {\em 2014 IEEE 55th Annual Symposium on Foundations of Computer
  Science}, pages 434--443. IEEE, 2014.

\bibitem{AlonGMN92}
Noga Alon, Zvi Galil, Oded Margalit, and Moni Naor.
\newblock Witnesses for boolean matrix multiplication and for shortest paths.
\newblock In {\em 33rd Annual Symposium on Foundations of Computer Science,
  Pittsburgh, Pennsylvania, USA, 24-27 October 1992}, pages 417--426. {IEEE}
  Computer Society, 1992.

\bibitem{AlonYZ97}
Noga Alon, Raphael Yuster, and Uri Zwick.
\newblock Finding and counting given length cycles.
\newblock {\em Algorithmica}, 17(3):209--223, 1997.

\bibitem{bjorklund2014listing}
Andreas Bj{\"o}rklund, Rasmus Pagh, Virginia~Vassilevska Williams, and Uri
  Zwick.
\newblock Listing triangles.
\newblock In {\em International Colloquium on Automata, Languages, and
  Programming}, pages 223--234. Springer, 2014.

\bibitem{CyganMWW19}
Marek Cygan, Marcin Mucha, Karol Wegrzycki, and Michal Wlodarczyk.
\newblock On problems equivalent to (min, +)-convolution.
\newblock {\em {ACM} Trans. Algorithms}, 15(1):14:1--14:25, 2019.

\bibitem{CzumajKL07}
Artur Czumaj, Miroslaw Kowaluk, and Andrzej Lingas.
\newblock Faster algorithms for finding lowest common ancestors in directed
  acyclic graphs.
\newblock {\em Theor. Comput. Sci.}, 380(1-2):37--46, 2007.

\bibitem{DuanJW19}
Ran Duan, Ce~Jin, and Hongxun Wu.
\newblock Faster algorithms for all pairs non-decreasing paths problem.
\newblock In {\em 46th International Colloquium on Automata, Languages, and
  Programming, {ICALP} 2019, July 9-12, 2019, Patras, Greece}, volume 132 of
  {\em LIPIcs}, pages 48:1--48:13. Schloss Dagstuhl - Leibniz-Zentrum f{\"{u}}r
  Informatik, 2019.

\bibitem{duanpettiebott}
Ran Duan and Seth Pettie.
\newblock Fast algorithms for (max, min)-matrix multiplication and bottleneck
  shortest paths.
\newblock In {\em Proceedings of the twentieth annual ACM-SIAM symposium on
  Discrete algorithms}, pages 384--391. SIAM, 2009.

\bibitem{duraj2020equivalences}
Lech Duraj, Krzysztof Kleiner, Adam Polak, and Virginia {Vassilevska Williams}.
\newblock Equivalences between triangle and range query problems.
\newblock In {\em Proceedings of the Fourteenth Annual ACM-SIAM Symposium on
  Discrete Algorithms}, pages 30--47. SIAM, 2020.

\bibitem{legallmult}
Fran{\c{c}}ois~Le Gall.
\newblock Powers of tensors and fast matrix multiplication.
\newblock In Katsusuke Nabeshima, Kosaku Nagasaka, Franz Winkler, and
  {\'{A}}gnes Sz{\'{a}}nt{\'{o}}, editors, {\em International Symposium on
  Symbolic and Algebraic Computation, {ISSAC} '14, Kobe, Japan, July 23-25,
  2014}, pages 296--303. {ACM}, 2014.

\bibitem{gold2017dominance}
Omer Gold and Micha Sharir.
\newblock Dominance product and high-dimensional closest pair under l\_infty.
\newblock In {\em 28th International Symposium on Algorithms and Computation
  (ISAAC 2017)}. Schloss Dagstuhl-Leibniz-Zentrum fuer Informatik, 2017.

\bibitem{huang1998fast}
Xiaohan Huang and Victor~Y Pan.
\newblock Fast rectangular matrix multiplication and applications.
\newblock {\em Journal of complexity}, 14(2):257--299, 1998.

\bibitem{kopelowitz2016higher}
Tsvi Kopelowitz, Seth Pettie, and Ely Porat.
\newblock Higher lower bounds from the 3sum conjecture.
\newblock In {\em Proceedings of the twenty-seventh annual ACM-SIAM symposium
  on Discrete algorithms}, pages 1272--1287. SIAM, 2016.

\bibitem{labib2019hamming}
Karim Labib, Przemys{\l}aw Uzna{\'n}ski, and Daniel Wolleb-Graf.
\newblock Hamming distance completeness.
\newblock {\em Leibniz International Proceedings in Informatics, LIPIcs}, 128,
  2019.

\bibitem{le2012faster}
Fran{\c{c}}ois Le~Gall.
\newblock Faster algorithms for rectangular matrix multiplication.
\newblock In {\em 2012 IEEE 53rd annual symposium on foundations of computer
  science}, pages 514--523. IEEE, 2012.

\bibitem{legallurr}
Fran{\c{c}}ois {Le Gall} and Florent Urrutia.
\newblock Improved rectangular matrix multiplication using powers of the
  coppersmith-winograd tensor.
\newblock In {\em Proceedings of the Twenty-Ninth Annual {ACM-SIAM} Symposium
  on Discrete Algorithms, {SODA} 2018, New Orleans, LA, USA, January 7-10,
  2018}, pages 1029--1046, 2018.

\bibitem{lincoln2020monochromatic}
Andrea Lincoln, Adam Polak, and Virginia {Vassilevska Williams}.
\newblock Monochromatic triangles, intermediate matrix products, and
  convolutions.
\newblock In {\em 11th Innovations in Theoretical Computer Science Conference
  (ITCS 2020)}. Schloss Dagstuhl-Leibniz-Zentrum f{\"u}r Informatik, 2020.

\bibitem{lotti1983asymptotic}
Grazia Lotti and Francesco Romani.
\newblock On the asymptotic complexity of rectangular matrix multiplication.
\newblock {\em Theoretical Computer Science}, 23(2):171--185, 1983.

\bibitem{MatIPL}
Ji{\v{r}}{\'{i}} Matou{\v{s}}ek.
\newblock Computing dominances in {E}{\^{}}n.
\newblock {\em Inf. Process. Lett.}, 38(5):277--278, 1991.

\bibitem{patrascu2010towards}
Mihai P{\u{a}}tra{\c{s}}cu.
\newblock Towards polynomial lower bounds for dynamic problems.
\newblock In {\em Proceedings of the forty-second ACM symposium on Theory of
  computing}, pages 603--610, 2010.

\bibitem{VassilevskaW09}
Virginia Vassilevska and Ryan Williams.
\newblock Finding, minimizing, and counting weighted subgraphs.
\newblock In Michael Mitzenmacher, editor, {\em Proceedings of the 41st Annual
  {ACM} Symposium on Theory of Computing, {STOC} 2009, Bethesda, MD, USA, May
  31 - June 2, 2009}, pages 455--464. {ACM}, 2009.

\bibitem{VassilevskaWY06}
Virginia Vassilevska, Ryan Williams, and Raphael Yuster.
\newblock Finding the smallest \emph{H}-subgraph in real weighted graphs and
  related problems.
\newblock In Michele Bugliesi, Bart Preneel, Vladimiro Sassone, and Ingo
  Wegener, editors, {\em Automata, Languages and Programming, 33rd
  International Colloquium, {ICALP} 2006, Venice, Italy, July 10-14, 2006,
  Proceedings, Part {I}}, volume 4051 of {\em Lecture Notes in Computer
  Science}, pages 262--273, 2006.

\bibitem{VassilevskaWY07}
Virginia Vassilevska, Ryan Williams, and Raphael Yuster.
\newblock All-pairs bottleneck paths for general graphs in truly sub-cubic
  time.
\newblock In David~S. Johnson and Uriel Feige, editors, {\em Proceedings of the
  39th Annual {ACM} Symposium on Theory of Computing, San Diego, California,
  USA, June 11-13, 2007}, pages 585--589. {ACM}, 2007.

\bibitem{VassilevskaWY10}
Virginia Vassilevska, Ryan Williams, and Raphael Yuster.
\newblock Finding heaviest \emph{H}-subgraphs in real weighted graphs, with
  applications.
\newblock {\em {ACM} Trans. Algorithms}, 6(3):44:1--44:23, 2010.

\bibitem{vstoc12}
Virginia {Vassilevska Williams}.
\newblock Multiplying matrices faster than {C}oppersmith-{W}inograd.
\newblock In Howard~J. Karloff and Toniann Pitassi, editors, {\em Proceedings
  of the 44th Symposium on Theory of Computing Conference, {STOC} 2012, New
  York, NY, USA, May 19 - 22, 2012}, pages 887--898. {ACM}, 2012.

\bibitem{vnotes2}
Virginia {Vassilevska Williams}.
\newblock Lecture nodes for lecture 8 of {CS367}, {O}ctober 15, 2015, 2015.

\bibitem{vnotes}
Virginia {Vassilevska Williams}.
\newblock Problem 2 on problem set 2 of {CS367}, {O}ctober 15, 2015, 2015.

\bibitem{Williams18}
R.~Ryan Williams.
\newblock Faster all-pairs shortest paths via circuit complexity.
\newblock {\em {SIAM} J. Comput.}, 47(5):1965--1985, 2018.

\bibitem{Williams14a}
Ryan Williams.
\newblock Faster all-pairs shortest paths via circuit complexity.
\newblock In David~B. Shmoys, editor, {\em Symposium on Theory of Computing,
  {STOC} 2014, New York, NY, USA, May 31 - June 03, 2014}, pages 664--673.
  {ACM}, 2014.

\bibitem{focsyj}
Virginia~Vassilevska Williams and R.~Ryan Williams.
\newblock Subcubic equivalences between path, matrix, and triangle problems.
\newblock {\em J. {ACM}}, 65(5):27:1--27:38, 2018.

\bibitem{focsy}
Virginia~Vassilevska Williams and Ryan Williams.
\newblock Subcubic equivalences between path, matrix and triangle problems.
\newblock In {\em 51th Annual {IEEE} Symposium on Foundations of Computer
  Science, {FOCS} 2010, October 23-26, 2010, Las Vegas, Nevada, {USA}}, pages
  645--654. {IEEE} Computer Society, 2010.

\bibitem{VWfindingcountingj}
Virginia~Vassilevska Williams and Ryan Williams.
\newblock Finding, minimizing, and counting weighted subgraphs.
\newblock {\em {SIAM} J. Comput.}, 42(3):831--854, 2013.

\bibitem{YusterDom}
Raphael Yuster.
\newblock Efficient algorithms on sets of permutations, dominance, and
  real-weighted {APSP}.
\newblock In Claire Mathieu, editor, {\em Proceedings of the Twentieth Annual
  {ACM-SIAM} Symposium on Discrete Algorithms, {SODA} 2009, New York, NY, USA,
  January 4-6, 2009}, pages 950--957. {SIAM}, 2009.

\bibitem{zwickbridge}
Uri Zwick.
\newblock All pairs shortest paths using bridging sets and rectangular matrix
  multiplication.
\newblock {\em J. {ACM}}, 49(3):289--317, 2002.

\end{thebibliography}


\end{document}